
\def\OPTIONAnonymize{1}%
\def\OPTIONArxiv{0}%
\def\OPTIONAppendix{1}%
\def\OPTIONConf{2}%
\ifnum\OPTIONConf=1%
  \ifnum\OPTIONArxiv=0
     \documentclass[preprint,nonatbib]{sigplanconf}
  \else
         \documentclass[preprint,nonatbib]{sigplanconf}
  \fi%
  \usepackage[breaklinks]{hyperref}
  \usepackage{breakurl}
  \usepackage{jdunfield}
  \usepackage{euler}
\fi

\ifnum\OPTIONConf=2%
   \documentclass[a4paper]{llncs}
   \usepackage[breaklinks]{hyperref}
   \usepackage{breakurl}
   \usepackage{jdunfield}
   \usepackage{amssymb}

   \usepackage{amsmath}
   \usepackage{amsthm}
   \usepackage{thmtools}
   \usepackage{thm-restate}
   
   \let\vec\undefined
   \usepackage{euler}
   \pagestyle{headings}
\usepackage{etoolbox}
\fi%

\ifnum\OPTIONConf=0%
  \documentclass{purple}
  \usepackage{jdunfield}
\fi

\usepackage{pst-node,graphicx,eepic}

\ifnum\OPTIONConf=0%
    \usepackage{fullpage}
\fi

\ifnum\OPTIONConf=1%
\fi

\usepackage[overload]{textcase}

\usepackage{listings}
\usepackage[authoryear]{natbib}
\bibpunct{(}{)}{;}{a}{}{,}

\usepackage{pgf,tikz}
\usetikzlibrary{shapes,arrows,matrix,fit,backgrounds,calc,decorations,decorations.pathmorphing}

\ifnum\OPTIONConf=0%
\fi
\ifnum\OPTIONConf=1%
\fi
\ifnum\OPTIONConf=2%
\fi

\ifnum\OPTIONConf=2%
  \usepackage{latexsym}
\else
  \usepackage{amsmath,amssymb,amsthm}
  \usepackage{thmtools,thm-restate}
\fi

\usepackage{multirow}
\usepackage{stmaryrd}
\usepackage{pifont}
\usepackage{enumerate}
\usepackage{comment}
\usepackage{mathpartir} \usepackage{proof}
\usepackage{framed}
\usepackage{rotating}
\usepackage{relsize}

\let\Plus\undefined   %
\let\MathRightArrow\Rightarrow %
\usepackage{marvosym} %
\def\Rightarrow{\MathRightArrow}

\usepackage{srcltx}

\usepackage{rulelinks}
\usepackage{llproof}

\ifnum\OPTIONConf=1%
\def\OPTIONLoudLabels{0}
\else
\def\OPTIONLoudLabels{1}
\fi

\def\OPTIONLoudLabels{0}

\makeatletter
\newcommand{\XLabel}[1]{%
  \ifcsname IDEMPOTFLAG#1\endcsname%
      \global\edef\CurrentProofName{#1}%
  \else%
       \Label{#1}%
      \expandafter\gdef\csname IDEMPOTFLAG#1\endcsname{d}%
   \fi}
\makeatother

\newcommand{\sectty}{\mathrel{\mathcolor{dBlue}{\land}}}          %
\newcommand{\unty}{\mathrel{\mathcolor{dRed}{\lor}}}          %

\newcommand{\step}{\mapsto}

\newcommand{\xdom}{\mathsf{dom}}

\newcommand{\xCase}{\ensuremath{\keyword{case}}}
\newcommand{\Case}[2]{{\xCase\;#1\;\keyword{of}\;#2}}

\newcommand{\emptyms}{\ensuremath{\emptyset}}
\newcommand{\matchor}{\ensuremath{\normalfont\,\texttt{|\hspace{-1.120ex}|}\,}}
\newcommand{\Match}[2]{{#1}\Rightarrow{#2}}

\newcommand{\Bits}{\tyname{bits}}

\newcommand{\Exp}{\tyname{exp}}
\newcommand{\Plus}{\datacon{Plus}}

\newcommand{\Even}{\tyname{even}}
\newcommand{\Odd}{\tyname{odd}}

\newcommand{\Var}{\datacon{Var}}
\newcommand{\Neg}{\datacon{Not}}
\newcommand{\CAnd}{\datacon{And}}
\newcommand{\Or}{\datacon{Or}}

\newcommand{\CNF}{\tyname{cnf}}
\newcommand{\Clause}{\tyname{clause}}
\newcommand{\PosLiteral}{\tyname{pos\text{-}literal}}
\newcommand{\Literal}{\tyname{literal}}
\newcommand{\ConjLiteral}{\tyname{conj\text{-}literal}}

\newcommand{\Unit}{\text{\normalfont\sf 1}}
\newcommand{\unitty}{\Unit}

\newcommand{\unit}{\textt{()}}
\newcommand{\unitexp}{\unit}

\newcommand{\chkcolor}{dBlue}
\newcommand{\syncolor}{dRed}

\newcommand{\chk}{\mathrel{\mathcolor{\chkcolor}{\Leftarrow}}}
\newcommand{\uncoloredsyn}{{\Rightarrow}}
\newcommand{\syn}{\mathrel{\mathcolor{\syncolor}{\uncoloredsyn}}}

\ifnum\OPTIONConf=2
 \makeatletter

 \let\c@lemma\undefined

 \makeatother
\declaretheoremstyle[
  bodyfont=\sl
]{mytheoremstyle}

\declaretheorem[name=Theorem, style=mytheoremstyle]{theorem}

\declaretheorem[name=Lemma, style=mytheoremstyle]{lemma}

\declaretheorem[name=Corollary, style=mytheoremstyle, sibling=theorem]{corollary}
\declaretheorem[name=Conjecture, style=mytheoremstyle, sibling=lemma]{conjecture}
\declaretheorem[name=Definition, style=mytheoremstyle, sibling=lemma]{definition}
\else
\declaretheoremstyle[
  bodyfont=\sl
]{mytheoremstyle}

\declaretheorem[style=mytheoremstyle]{lemma}
\declaretheorem[style=mytheoremstyle, sibling=lemma]{definition}
\fi

\renewcommand{\phi}{\varphi}

\newcommand{\sig}{~\textit{sig}}
\newcommand{\val}{~\textit{value}}
\newcommand{\type}{~\textit{type}}

\newcommand{\subsortsym}{\preceq}
\newcommand{\subsort}{\mathbin{\subsortsym}}

\newcommand{\propersubsortsym}{\prec}

\newcommand{\notpropersubsortsym}{\not\propersubsortsym}

\newcommand{\anno}[2]{\text{\textvtt{(}}{#1} \text{\texttt{:}} {#2}\text{\textvtt{)}}}

\newcommand{\emptysubst}{\cdot}

\newcommand{\dec}[1]{#1^*}

\newcommand{\Bdec}{\dec{B}}

\newcommand{\Avec}{\vec{A}}

\ifnum\OPTIONLoudLabels=1%
\newcommand{\Label}[1]{\LoudLabel{#1}}%
\newcommand{\FLabel}[1]{\label{#1}%
{\tt\scriptsize{#1}}}%
\else%
\newcommand{\Label}[1]{\label{#1}}%
\newcommand{\FLabel}[1]{\label{#1}}%
\fi

\newcommand{\wildcard}{\text{\_\!\_}}
\newcommand{\xAs}{\keyword{as}}
\newcommand{\As}{~\xAs~}

\newcommand{\KK}{\,\mathrel{\rhd}\,}

\newcommand{\emptypattern}{\emptyset}
\newcommand{\patunion}{\sqcup}
\newcommand{\patsect}{\sect}
\newcommand{\patcomplementsym}{\lnot}
\newcommand{\patcomplement}{\patcomplementsym}

\newcommand{\emptyctx}{\cdot}

\def\quofile{15}
\newcommand{\startMquotes}{\immediate\openout \quofile=\jobname.quo}
\newcommand{\MquoteM}[4]{{\small \label{#3} \begin{quotation} #1 \vspace{-0.5em} \flushright{---#2} \end{quotation}} { \immediate\write\quofile{\expandafter\csname Mquoteentry\endcsname{}{#2}{#3}{\expandafter\csname #4\endcsname}}}}
\newcommand{\Mquote}[4]{{\small \label{#3} \begin{quotation} #1 \vspace{-0.5em} \flushright{---#2} \end{quotation}} { \immediate\write\quofile{\expandafter\csname Mquoteentry\endcsname{}{#2}{#3}{#4}}}}

\newcommand{\AllSym}{\forall}
\newcommand{\xAll}[1]{\AllSym#1}
\newcommand{\All}[1]{\xAll{#1}.\:}

\newcommand{\BAll}[2]{\All{{#1}{\subsort}{#2}}}

\newdimen\zzlistingsize
\newdimen\zzlistingsizedefault
\zzlistingsizedefault=9pt
\zzlistingsize=\zzlistingsizedefault
\global\def\CommentCopter{0}
\newcommand{\Lstbasicstyle}{\fontsize{\zzlistingsize}{1.05\zzlistingsize}\ttfamily}
\newcommand{\keywordcopter}{\fontsize{1.0\zzlistingsize}{1.0\zzlistingsize}\bf}
\newcommand{\sillycopter}{\if0\CommentCopter\keywordcopter\fi}
\newcommand{\commentcopter}{\def\CommentCopter{1}\fontsize{0.95\zzlistingsize}{1.0\zzlistingsize}\rmfamily\slshape}

\newlength{\zzlstwidth}
\newcommand{\setlistingsize}[1]{\zzlistingsize=#1%
\settowidth{\zzlstwidth}{{\Lstbasicstyle~}}}
\setlistingsize{\zzlistingsizedefault}

\lstset{literate={->}{{$\to~$}}2 %
                               {=>}{{$\Rightarrow~$}}2 %
                               {==>}{{$\Longrightarrow~$}}3 %
                               {\&}{{$\hspace{-2pt}\sectty$}}1 %
                               {\\/}{{$\hspace{-1.5pt}\unty$}}1 %
                               {<<}{\color{dRed}}1 %
                               {>>}{\color{black}}1 %
                               {(*[}{{$\annobegin\!$}}2 %
                               {*ANNOB*}{{$\annobegin\!$}}2 %
                               {]*)}{{$\annoend$}}2 %
                               {*ANNOE*}{{$\annoend$}}2 %
                               {-all}{-{\sillycopter all}$~$}3 %
                               {-exists}{-{\sillycopter exists}$~$}5 %
                               {\\let}{{\sillycopter let}$~$}2 %
                               {\\if}{{\sillycopter if}$~$}1 %
                               {RED}{yeah yeah}0 %
                               {\\in}{{\sillycopter in}$~$}1 %
                               {\\val}{{\sillycopter val}$~$}2 %
                               {as}{{\sillycopter as}$~$}1 %
                               {else}{{\sillycopter else}$~$}3 %
                               {case}{{\sillycopter case}$~$}3 %
                               {raise}{{\sillycopter raise}$~$}4 %
                               {datatype}{{\sillycopter datatype}$~$}7 %
                               {datasort}{{\sillycopter datasort}$~$}6 %
                               {indexconstant}{{\sillycopter indexconstant}$~$}{11} %
                               {indexsort}{{\sillycopter indexsort}$~$}{6} %
                               {primitive}{{\sillycopter primitive}$~$}6 %
               ,
               language=StardustML,
               columns=[l]fullflexible,
               basewidth=\zzlstwidth,
               basicstyle=\Lstbasicstyle,
               keywordstyle=\keywordcopter,
               identifierstyle=\relax,
               showstringspaces=false,
               stringstyle=\color{dDkGreen},
               commentstyle=\commentcopter,
               breaklines=true,  %
               breakatwhitespace=true,   %
               mathescape=true,
               texcl=false}

\newcommand{\dom}[1]{\xdom(#1)}
\newcommand{\xconstructors}{\mathsf{constructors}}
\newcommand{\constructors}[1]{\xconstructors(#1)}

\newcommand{\lblock}{\langle}
\newcommand{\rblock}{\rangle}
\newcommand{\sortblock}[1]{\lblock {#1} \rblock}

\newcommand{\sigjudg}[1]{{#1}\sig}
\newcommand{\sigjudgPf}[2]{\Pf{}{#1}{\!\sig} {#2}}

\newcommand{\disjoint}[2]{{#1} \sect {#2} = \emptyset}

\newcommand{\blkjudg}[4]{#1; #2\sortblock{#3} \entails {#4}\textit{~safe}}
\newcommand{\blkjudgPf}[5]{\ePf{#1; #2\sortblock{#3}}{{#4}\textit{~safe}} {#5}}

\newcommand{\contypejudg}[2]{{#1} \entails {#2}\textit{~contype}}
\newcommand{\contypejudgPf}[3]{\ePf{#1}{#2\textit{~contype}}{#3}}
\newcommand{\typejudg}[2]{{#1} \entails {#2}\textit{~type}}

\newcommand{\refinessym}{\sqsubset}
\newcommand{\refines}{\mathrel{\refinessym}}
\newcommand{\safeextat}[5]{{#1}; {#2} \entails {#4} : {#3} \textit{~safe at~}{#5}}
\newcommand{\safeextatPf}[6]{\ePf{{#1}; {#2}}{{#4} : {#3} \textit{~safe at~}{#5}}{#6}}
\newcommand{\matchassn}[4]{{#1}; {#2}; {#4} : {#3}}

\newcommand{\Thnkx}[1]{\textsc{{\MakeLowercase{#1}}}}

\newcommand{\FORGETTYPE}{\Thnkx{forgettype}\xspace}

\newcommand{\against}{\mathrel{\Leftarrow}}

\newcommand{\intersectx}{\ensuremath{\mathsf{intersect}}}
\newcommand{\intersect}[3]{\intersectx({#1} \entails {#2};\, {#3})}

\newcommand{\intersectresult}[3]{%
     (\intersectresultNOPARENS{#1}{#2}{#3})%
}
\newcommand{\intersectresultNOPARENS}[3]{%
     {#1}%
     \,;\; {#2}
     \,{\vdash}\, %
     {#3}%
}

\newcommand{\intresult}[2]{%
     (\intresultNOPARENS{#1}{#2})%
}
\newcommand{\intresultNOPARENS}[2]{%
     {#1}%
     \,{\vdash}\, %
     {#2}%
}

\newcommand{\intrsctresult}[3]{%
     (\intrsctresultNOPARENS{#1}{#2}{#3})%
}
\newcommand{\intrsctresultNOPARENS}[3]{%
     {#2}%
     \,{\vdash}\, %
     {#3}%
}

\newcommand{\xdeclare}{%
  \textkw{declare}}
\newcommand{\declare}[1]{%
  \xdeclare~{#1}%
       \textkw{~in~}%
}

\newcommand{\rulename}[1]{\text{\normalfont\textsf{#1}}}

\newcommand{\TypeMsrulename}[1]{\rulename{TypeMs#1}}
\newrulecommand{TypeMsEmpty}{\TypeMsrulename{Empty}}
\newrulecommand{TypeMs}{\TypeMsrulename{}}

\newcommand{\Steprulename}[1]{\rulename{Step#1}}

\newrulecommand{StepAppL}{\Steprulename{AppL}}
\newrulecommand{StepAppR}{\Steprulename{AppR}}
\newrulecommand{StepBeta}{\Steprulename{Beta}}
\newrulecommand{StepPairL}{\Steprulename{PairL}}
\newrulecommand{StepPairR}{\Steprulename{PairR}}
\newrulecommand{StepConArg}{\Steprulename{ConArg}}
\newrulecommand{StepCase}{\Steprulename{Case}}

\newrulecommand{StepDeclare}{\Steprulename{Declare}}
\newrulecommand{StepContext}{\Steprulename{Context}}

\newrulecommand{StepMatch}{\Steprulename{Match}}
\newrulecommand{StepElse}{\Steprulename{Else}}

\newcommand{\Refinesrulename}[1]{\rulename{$\refines$#1}}

\newrulecommand{RefinesUnit}{\Refinesrulename{$\unitty$}}
\newrulecommand{RefinesArr}{\Refinesrulename{$\arr$}}
\newrulecommand{RefinesProd}{\Refinesrulename{$*$}}
\newrulecommand{RefinesData}{\Refinesrulename{Data}}
\newrulecommand{RefinesSect}{\Refinesrulename{$\sectty$}}

\newcommand{\Substrulename}[1]{\rulename{Subst#1}}

\newrulecommand{SubstEmpty}{\Substrulename{Empty}}
\newrulecommand{SubstVar}{\Substrulename{Var}}

\newcommand{\Subsortrulename}[1]{\ensuremath{{\subsortsym}\rulename{#1}}}

\newrulecommand{SubsortAssum}{\Subsortrulename{Assum}}
\newrulecommand{SubsortRefl}{\Subsortrulename{Refl}}
\newrulecommand{SubsortTrans}{\Subsortrulename{Trans}}

\newcommand{\Notpropersubsortrulename}[1]{\ensuremath{{\notpropersubsortsym}\rulename{#1}}}

\newrulecommand{NotpropersubsortNotsubsort}{\Notpropersubsortrulename{Notsubsort}}
\newrulecommand{NotpropersubsortSupersort}{\Notpropersubsortrulename{Supersort}}

\newcommand{\Saferulename}[1]{\ensuremath{\rulename{Safe#1}}}

\newrulecommand{SafeArr}{\Saferulename{Arr}}   %
\newrulecommand{SafeConAt}{\Saferulename{ConAt}}   %
\newrulecommand{SafeCon}{\Saferulename{Con}}   %

\newcommand{\Contyperulename}[1]{\ensuremath{\rulename{Contype#1}}}

\newrulecommand{ContypeArr}{\Contyperulename{Arr}}
\newrulecommand{ContypeSect}{\Contyperulename{Sect}}

\newcommand{\Sigrulename}[1]{\ensuremath{\rulename{Sig#1}}}

\newrulecommand{SigEmpty}{\Sigrulename{Empty}}
\newrulecommand{SigBlock}{\Sigrulename{Block}}
\newrulecommand{SigBound}{\Sigrulename{Bound}}

\newcommand{\Blkrulename}[1]{\ensuremath{\rulename{Block#1}}}

\newrulecommand{BlkEmpty}{\Blkrulename{Empty}}
\newrulecommand{BlockCon}{\Blkrulename{Con}}
\newrulecommand{BlockSubsort}{\Blkrulename{Subsort}}

\newcommand{\Ctxrulename}[1]{\ensuremath{\rulename{Ctx#1}}}

\newrulecommand{CtxEmpty}{\Ctxrulename{Empty}}
\newrulecommand{CtxVar}{\Ctxrulename{Var}}

\newcommand{\Conrulename}[1]{\ensuremath{\rulename{Con#1}}}

\newrulecommand{ConArr}{\Conrulename{Arr}}
\newrulecommand{ConSub}{\Conrulename{Sub}}   %
\newrulecommand{ConSect}{\Conrulename{Sect}}   %

\newcommand{\Matchrulename}[1]{\ensuremath{\rulename{Match#1}}}

\newrulecommand{MatchWild}{\Matchrulename{Wild}}
\newrulecommand{MatchAs}{\Matchrulename{As}}
\newrulecommand{MatchOr}{\Matchrulename{Or}}
\newrulecommand{MatchCon}{\Matchrulename{Con}}
\newrulecommand{MatchUnit}{\Matchrulename{Unit}}
\newrulecommand{MatchPair}{\Matchrulename{Pair}}

\newcommand{\Nomatchrulename}[1]{\ensuremath{\rulename{NoM#1}}}

\newrulecommand{NomatchEmpty}{\Nomatchrulename{Wild}}
\newrulecommand{NomatchAs}{\Nomatchrulename{As}}
\newrulecommand{NomatchOr}{\Nomatchrulename{Or}}
\newrulecommand{NomatchConInner}{\Nomatchrulename{ConInner}}
\newrulecommand{NomatchConHead}{\Nomatchrulename{ConHead}}
\newrulecommand{NomatchUnit}{\Nomatchrulename{Unit}}
\newrulecommand{NomatchPairInner}{\Nomatchrulename{PairInner}}
\newrulecommand{NomatchPairHead}{\Nomatchrulename{PairHead}}

\newcommand{\Pattyperulename}[1]{\ensuremath{\rulename{p-#1}}}

\newrulecommand{PattypeWild}{\Pattyperulename{Wild}}
\newrulecommand{PattypeAs}{\Pattyperulename{As}}
\newrulecommand{PattypeEmpty}{\Pattyperulename{Empty}}
\newrulecommand{PattypeOr}{\Pattyperulename{Or}}
\newrulecommand{PattypeUnit}{\Pattyperulename{Unit}}
\newrulecommand{PattypePair}{\Pattyperulename{Pair}}
\newrulecommand{PattypeCon}{\Pattyperulename{Con}}

\newcommand{\Typewfrulename}[1]{\ensuremath{\rulename{WfType#1}}}

\newrulecommand{TypewfUnit}{\Typewfrulename{$\unitty$}}
\newrulecommand{TypewfProd}{\Typewfrulename{$*$}}
\newrulecommand{TypewfArr}{\Typewfrulename{$\arr$}}
\newrulecommand{TypewfSect}{\Typewfrulename{$\sectty$}}
\newrulecommand{TypewfBin}{\Typewfrulename{Bin}}
\newrulecommand{TypewfSort}{\Typewfrulename{Sort}}
\newrulecommand{TypewfAll}{\Typewfrulename{$\AllSym$}}

\newcommand{\Typerulename}[1]{\ensuremath{\rulename{#1}}}

\newrulecommand{TypeVar}{\Typerulename{Var}}
\newrulecommand{TypeSub}{\Typerulename{Sub}}
\newrulecommand{TypeAnno}{\Typerulename{Anno}}
\newrulecommand{TypeUnitI}{\Typerulename{$\unitty$I}}
\newrulecommand{TypeArrI}{\Typerulename{$\arr$I}}
\newrulecommand{TypeArrE}{\Typerulename{$\arr$E}}
\newrulecommand{TypeProdI}{\Typerulename{$*$I}}
\newrulecommandONE{TypeProdE}{\Typerulename{$*$E$_{#1}$}}

\newrulecommand{TypeSectI}{\Typerulename{$\sectty$I}}
\newrulecommandONE{TypeSectE}{\Typerulename{$\sectty$E$_{#1}$}}

\newrulecommand{TypeAllI}{\Typerulename{$\AllSym$I}}
\newrulecommand{TypeAllE}{\Typerulename{$\AllSym$E}}
\newrulecommand{TypeDataI}{\Typerulename{DataI}}
\newrulecommand{TypeDataE}{\Typerulename{DataE}}
\newrulecommand{TypeDeclare}{\Typerulename{Declare}}
\newrulecommand{TypeDatatype}{\Typerulename{Datatype}}

\newcommand{\Chkrulename}[1]{\ensuremath{\rulename{Chk#1}}}
\newcommand{\Synrulename}[1]{\ensuremath{\rulename{Syn#1}}}

\newrulecommand{SynVar}{\Synrulename{Var}}
\newrulecommand{ChkSub}{\Chkrulename{Sub}}
\newrulecommand{SynAnno}{\Synrulename{Anno}}
\newrulecommand{ChkUnitI}{\Chkrulename{$\unitty$I}}
\newrulecommand{ChkArrI}{\Chkrulename{$\arr$I}}
\newrulecommand{SynArrE}{\Synrulename{$\arr$E}}
\newrulecommand{ChkProdI}{\Chkrulename{$*$I}}

\newrulecommand{ChkSectI}{\Chkrulename{$\sectty$I}}
\newrulecommandONE{SynSectE}{\Synrulename{$\sectty$E$_{#1}$}}

\newrulecommand{ChkDataI}{\Chkrulename{DataI}}
\newrulecommand{ChkDataE}{\Chkrulename{DataE}}
\newrulecommand{ChkDeclare}{\Chkrulename{Declare}}
\newrulecommand{ChkDatatype}{\Chkrulename{Datatype}}
\newcommand{\Subrulename}[1]{\ensuremath{\subtypesym\rulename{#1}}}

\newrulecommand{SubUnit}{\Subrulename{$\unitty$}}
\newrulecommand{SubArr}{\Subrulename{$\arr$}}
\newrulecommand{SubProd}{\Subrulename{$*$}}
\newrulecommand{SubSort}{\Subrulename{Data}}

\newrulecommand{SubAllL}{\Subrulename{$\forall$L}}
\newrulecommand{SubAllR}{\Subrulename{$\forall$R}}

\newrulecommandONE{SubSectL}{\Subrulename{$\sectty$L$_{#1}$}}
\newrulecommand{SubSectR}{\Subrulename{$\sectty$R}}

\newcommand{\matchsym}{\;\text{\normalfont\sf match}\;}

\newcommand{\vmatch}[3]{{#1}\matchsym{#2} \longrightarrow {#3}}
\newcommand{\vmatchPf}[4]{\Pf{}{#1}{\!\!\matchsym{#2} \longrightarrow {#3}}{#4}}

\newcommand{\novmatch}[2]{{#1}\matchsym{#2} \not\rightarrow\relax}
\newcommand{\novmatchPf}[3]{\Pf{}{}{{#1}\matchsym{#2} \not\rightarrow{}}{#3}}

\newcommand{\Ursig}{\hspace{0.15pt}\mathcal{U}}  %
\newcommand{\blk}{K}
\newcommand{\blkelem}{K_{\textsf{elem}}}

\newcommand{\sstrsym}{{\leq}_{\mathsf{sig}}}
\newcommand{\sstr}{\mathrel{\sstrsym}}
\newcommand{\sstrPf}[3]{\Pf{#1}{\sstrsym}{#2}{#3}}

\newcommand{\gstrsym}{{\leq}_{\mathsf{ctx}}}
\newcommand{\gstr}{\mathrel{\gstrsym}}

\newcommand{\trkstrsym}{{\leq}_{\mathsf{trk}}}
\newcommand{\trkstr}{\mathrel{\trkstrsym}}
\newcommand{\trkstrPf}[3]{\Pf{#1}{\trkstrsym}{#2}{#3}}

\ifnum\OPTIONAppendix=1

    \includecomment{includeapx}
    \excludecomment{excludeapx}
\else
    \excludecomment{apxproof}
    \excludecomment{includeapx}
    \includecomment{excludeapx}
\fi

\newcommand{\argle}{}
\newcommand{\bargle}{}
\newcommand{\arglebargle}{}

\newcommand{\lesscaptionspace}{\vspace*{-2ex}}

\newcommand{\noskipmypara}[1]{{\noindent \textit{#1}}\;}
\newcommand{\mypara}[1]{\smallskip {\noindent \textit{#1}}\;}

\title{%
\ifnum\OPTIONConf=0
  \normalfont\bfseries\sffamily\selectfont
\fi
  Extensible Datasort Refinements%
}

\makeatletter
\renewenvironment{proof}[1][\proofname]{\par
  \vspace{-1.3ex}
  \pushQED{\qed}%
  \normalfont
  \trivlist
  \item[\hskip\labelsep
        \itshape
    #1\@addpunct{.}]\ignorespaces
}{%
  \popQED\endtrivlist\@endpefalse
  \addvspace{2pt plus 2pt} %
}
\makeatother

\ifnum\OPTIONArxiv=0%
  \else
   \fi
\ifnum\OPTIONConf=1
  \ifnum\OPTIONAnonymize=1
    \authorinfo{}{}{}%
  \else
    \authorinfo{Jana Dunfield}
       {University of British Columbia \\ Vancouver, Canada}
       {} %
  \fi
\fi
\ifnum\OPTIONConf=2
  \author{Jana Dunfield}
  \institute{University of British Columbia \\ Vancouver, Canada}
\fi

\begin{document}
\maketitle

\begin{abstract}
  Refinement types turn typechecking into lightweight verification.
  The classic form of refinement type is the datasort refinement,
  in which datasorts identify subclasses of inductive datatypes.
  
  Existing type systems for datasort refinements require that all the
  refinements of a type be specified when the type is declared;
  multiple refinements of the same type can be obtained only by duplicating
  type definitions, and consequently, duplicating code.

  We enrich the traditional notion of a signature, which describes the inhabitants
  of datasorts, to allow \emph{re-refinement} via signature extension,
  without duplicating definitions.  Since arbitrary updates to a signature can invalidate
  the inversion principles used to check case expressions, we develop
  a definition of signature well-formedness that ensures that extensions maintain
  existing inversion principles.  This definition allows different parts
  of a program to extend the same signature in different ways, without conflicting
  with each other.  Each part can be type-checked independently, allowing
  separate compilation.
\end{abstract}

\setcounter{footnote}{0}

\section{Introduction}
\Label{sec:intro}

Type systems provide guarantees about run-time behaviour;
for example, that a record will not be multiplied by a string.
However, the guarantees provided by traditional type systems
like Hindley--Milner do not include the absence of a practically
important class of run-time failures: nonexhaustive match exceptions.
For example, the type system of Standard ML allows
a case expression over lists that omits a branch for the empty list:
\begin{verbatim}
  case elems of head :: tail => ...
\end{verbatim}
If this expression is evaluated with \texttt{elems} bound to the empty list \texttt{[]},
the exception \textvtt{Match} will be raised.

Datasort refinements eliminate this problem: a datasort can express, within the
static type system, that \texttt{elems} is not empty; therefore, the case expression
is safe.  Datasorts can also express less shallow properties.
For example, the definition in \Figureref{fig:cnf} encodes
conjunctive normal form---a formula that consists of (possibly nested) $\CAnd$s
of clauses, where a clause consists of (possibly nested) $\Or$s of literals,
where a literal is either a positive literal (a variable) or a negation of a positive literal.
A case expression comparing two values of type $\Clause$ would
only need branches for $\Or$, $\Neg$ and $\Var$; the $\CAnd$ branch
could be omitted, since $\CAnd$ does not produce a $\Clause$.

\begin{figure}[t]
  \centering

  \begin{rclll}
        \PosLiteral &\subsort& \Literal,
        ~~ \Literal ~\subsort~ \Clause, 
        ~~ \Clause ~\subsort~ \CNF,
        \\
        \Var &:& \tyname{symbol} \arr \PosLiteral,
        \\
        \Neg &:& \PosLiteral \arr \Literal,
        \\
        \Or &:& (\Clause * \Clause) \arr \Clause,
        \\
        \CAnd &:& (\CNF * \CNF) \arr \CNF
  \end{rclll}

  \lesscaptionspace
  \caption{Datasorts for conjunctive normal form}
  \label{fig:cnf}
\end{figure}

Datasorts correspond to regular tree grammars, which can encode various
data structure invariants (such as the colour invariant of red-black trees),
as well as properties such as CNF and A-normal form.
Datasort refinements are less expressive than
the ``refinement type'' systems (such as liquid types) that followed
work on index refinements and indexed types;
like regular expressions, which ``can't count'', datasorts cannot ``count''
the length of a list or the height of a tree.
However, types with datasorts are simpler in some respects;
most importantly, types with datasorts never require quantifiers.
Avoiding quantifiers, especially existential quantifiers,
also avoids many complications in the type checker.
By analogy, regular expressions cannot solve every problem---but when
they \emph{can} solve the problem, they may be the best solution.

The goal of this paper is to make datasort refinements more usable---not by
making datasorts express more invariants, but by liberating them from
the necessity of a fixed specification (a fixed signature).
First, we review the trajectory of research on datasorts.

The first approach to datasort refinements~\citep{Freeman91,FreemanThesis}
extended ML, using
abstract interpretation~\citep{Cousot77} to infer refined types.  The usual argument in favour
of type inference is that it reduces a direct burden on the programmer.  When type
annotations are boring or self-evident, as they often are in plain ML,
this argument is plausible.  But datasorts can express more subtle specifications,
calling that argument into question.
Moreover, inference defeats a form of fine-grained modularity.
Just as we expect a module system to support information hiding, so that clients
of a module cannot depend on its internal details, a type system should prevent
the callers of a function from depending on its internal details.
Inferring refinements exposes those details.
For example, if a function over lists is written with only nonempty input in mind,
the programmer may not have thought about what the function should do for empty input,
so the type system shouldn't let the function be applied to an empty list.
Finally, inferring all properties means that the inferred refined types can
be long, \eg inferring a 16-part intersection type for a simple function
\citep[p.\ 271]{Freeman91}.

Thus, the second generation of work on datasort refinements \citep{Davies00icfpIntersectionEffects,DaviesThesis} used bidirectional typing, rather than inference.
Programmers have to write more annotations,
but refinement checking will never fabricate unintended
invariants.
A third generation of work \citep{Dunfield04:Tridirectional,DunfieldThesis}
stuck with bidirectional type checking, though this was overdetermined:
other features of their type system made inference untenable.

All three generations (and later work by \citet{LovasThesis} on datasorts for LF)
shared the constraint that a given datatype could be refined only once.
The properties tracked by datasorts could not be subsequently extended;
the same set of properties must be used throughout the program.  Modular
refinement checking could be achieved only by duplicating the type definition and
all related code.  Separate type-checking of refinements enables simpler reasoning
about programs, separate compilation, and faster type-checking (simpler refinement
relations lead to simpler case analyses).

The history of pattern typing (typing for case expressions) is also worth noting, as
formulating pattern typing seems to be the most difficult step in the design of datasort
type systems.
Freeman supported a form of pattern matching that was oversimplified. %
Davies implemented the full SML pattern language
and formalized most of it, but omitted \textkw{as}-patterns---which become
nontrivial when datasort refinements are in the picture.

In this paper, we allow multiple, separately declared refinements of a type
by revising a fundamental mechanism of datasort refinements: the signature.
Refinements are traditionally described using a \emph{signature} that specifies---for the
entire program---which values of a datatype belong to which refinements.  For example,
the type system can track the parity of bitstrings using the following signature,
which says that $\Even$ and $\Odd$ are subsorts (subtypes) of the type $\Bits$
of bitstrings, the empty bitstring has even parity, appending a 1 flips the parity,
and appending a 0 preserves parity.
\begin{rclll}
      \Even &\subsort& \Bits,~~~
      \Odd ~\subsort~ \Bits,
      \\
      \datacon{Empty} &:& \Even,
      \\
      \datacon{One} &:& (\Even \arr \Odd) \sectty (\Odd \arr \Even),
      \\
      \datacon{Zero} &:& (\Even \arr \Even) \sectty (\Odd \arr \Odd)
\end{rclll}
The connective $\sectty$, read ``and'' or ``intersection'', denotes conjunction
of properties: adding a $\datacon{One}$
makes an even bitstring odd ($\Even \arr \Odd$),
\emph{and} makes an odd bitstring even ($\Odd \arr \Even$).
Thus, if $b$ is a bitstring known to have odd parity, then appending a 1 yields
a bitstring with even parity:
\begin{mathdispl}
   b : \Odd ~\entails~ \datacon{One}(b) : \Even
\end{mathdispl}
In some datasort refinement systems \citep{DunfieldThesis,LovasThesis}, 
the programmer specifies the refinements by writing a signature like the one above.
In the older systems of Freeman and Davies, the programmer
writes a regular tree grammar\footnote{A \emph{regular tree grammar} is like
a regular grammar (the class of grammars equivalent to regular expressions),
but over trees instead of strings~\citep{TATA-book}; the leftmost terminal symbol
in a production of a regular grammar corresponds to the symbol at the root of a tree.
},
from which the system infers a signature, including
the constructor types and the subsort relation:
\begin{rclll}
  \Even   &=&   \datacon{Empty}
                \matchor
                \datacon{Zero}(\Even)
                \matchor
                \datacon{One}(\Odd)
  \\
  \Odd   &=&
                  \datacon{Zero}(\Odd)
                  \matchor
                  \datacon{One}(\Even)
\end{rclll}

\newcommand{\nat}{\tyname{nat}}
\newcommand{\tainted}{\tyname{tainted}}
\newcommand{\untainted}{\tyname{untainted}}
In either design, the typing phase uses the same form of signature.  We use the
first design, where the programmer gives the signature directly.
Giving
the signature directly is more expressive, because it enables refinements
to carry information not present at run time.  For example, we can
refine natural numbers by $\tyname{Tainted}$ and $\tyname{Untainted}$:
\begin{rclll}
      \datacon{Z} &:& \nat, &~~ \datacon{S} : \nat \arr \nat,
      \\
      \tainted &\subsort& \multicolumn{2}{l}{%
        \!\!\nat,
        ~~
        \untainted ~\subsort~ \nat,}
      \\
     \datacon{Z} &:& \tainted, &~~ \datacon{S} : \tainted \arr \tainted,
     \\
     \datacon{Z} &:& \untainted, &~~ \datacon{S} : \untainted \arr \untainted
\end{rclll}
The sorts $\tainted$ and $\untainted$ have the same closed inhabitants,
but a program cannot directly create
an instance of $\untainted$ from an instance of $\tainted$:
\[
    x : \tainted \not\entails \datacon{S}(x) : \untainted
\]
Thus, the two sorts have different \emph{open} inhabitants.  This is analogous
to dimension typing, where an underlying value is just an integer or float, but
the type system tracks that the number is in (for example) metres~\citep{KennedyThesis}.

Being able to give the signature directly allows programmers to choose
between a variety of subsorting relationships.  For example, to
allow untainted data to be used where tainted data is expected, write
$\untainted \subsort \tainted$.  In effect, subsorting can be either \emph{structural}
(as the signatures generated from grammars) or \emph{nominal} (as in the example above).

In this paper, giving signatures directly is helpful: it
enables \emph{extension} of a signature without translating between
signatures and grammars.

\paragraph{Contributions.}
This paper makes the following contributions:

\begin{itemize}
\item A language and type system with \emph{extensible signatures} for datasort refinements (\Sectionref{sec:language}).
  Refinements are extended by \emph{blocks} that are checked to ensure
  that they do not weaken a sort's inversion principle, which would make typing unsound.

\item A new formulation of typing (\Sectionref{sec:pattern-typing})
  for case expressions.  This formulation is based on a notion of finding the
  intersection of a type with a pattern; it concisely models the interesting aspects of
  realistic ML-style patterns.

\item Type (datasort) preservation and progress for the type assignment system,
  stated in \Sectionref{sec:meta} and proved in
  Appendix \ref{sec:proofs}, %
  with respect to a standard call-by-value operational semantics (\Sectionref{sec:opsem}).

 \item
   A bidirectional type system (\Sectionref{sec:bidir}), 
   which directly yields an algorithm.
   We prove that this system is sound (given a bidirectional typing derivation,
   erasing annotations yields a type assignment derivation) and
   complete (given any type assignment derivation,
   annotations can be added to make bidirectional typing succeed).
\end{itemize}

The appendix, which includes definitions and proofs omitted for space reasons,
can be found at
\href{http://www.cs.queensu.ca/~jana/papers/extensible/}%
{\textvtt{http://www.cs.queensu.ca/$\sim$jana/papers/extensible/}}.

\section{Datasort Refinements}

\paragraph{What are datasort refinements?}
Datasort refinements are a syntactic discipline for enforcing invariants.
This is a play on Reynolds's definition of types as a ``syntactic discipline for enforcing
levels of abstraction''~\citep{Reynolds83}.  Datasorts allow programmers to conveniently
categorize inductive data, and operations on such data, more precisely than
in conventional type systems.

Indexed types and related systems (\eg liquid types and other ``refinement types'')
also serve that purpose,
but datasorts are highly syntactic, whereas indexed types
depend on the semantics of a constraint domain.  For example,
to check the safety of accessing the element at position $2k$ of a 0-based array of length $n$,
an indexed type system must check whether the proposition $2k < n$ is entailed
in the theory of integers (under some set of assumptions, \eg $0 \leq k \leq n/3$).
The truth of $2k < n$ depends on the semantics of arithmetic, whereas
membership in a datasort only depends on a head constructor and the datasorts of its
arguments.
Put roughly, datasorts express regular grammars, and indexed types express
grammars with more powerful side conditions.  (Unrestricted dependent types
can express arbitrarily precise side conditions.) %

\paragraph{Applications of datasort refinements.}
Datasorts are especially suited to applications of symbolic computing,
such as compilers and theorem provers.  Compilers usually work
with multiple internal languages, from abstract syntax 
through to intermediate languages.  These internal languages
may be decomposed into further variants: source ASTs with and without
syntactic sugar, A-normal form, and so on.
Similarly, theorem provers, SMT solvers, and related tools
transform formulas into various normal forms or sublanguages:
quantifier-free Boolean formulas, conjunctive normal form,
formulas with no free variables, etc.
Many such invariants can be expressed by regular tree grammars,
and hence by datasorts.

Our extensible refinements offer the ability to use
new refinements of a datatype when the need arises,
without the need to update a global refinement declaration.
For example, we could extend the types in \Figureref{fig:cnf},
in which $\Clause$ contains disjunctions of literals
and $\CNF$ contains conjunctions of clauses,
with a new sort for \emph{conjunctions} of literals:

  \begin{rclll}
        \multicolumn{3}{l}{\hspace*{-4ex}\text{[everything from \Figureref{fig:cnf}]}}
        ~
        \arrayenvl{
        \Literal ~\subsort~ \ConjLiteral,~
        \ConjLiteral \subsort \CNF,
        \\
        \CAnd ~:~ (\ConjLiteral * \ConjLiteral) \arr \ConjLiteral
        }
  \end{rclll}

\paragraph{What are datasort refinements \emph{not}?}
First, datasorts are not really types, at least not in the sense of Hindley--Milner type systems.
A function on bitstrings (\Sectionref{sec:intro}) has a best, or principal, type: $\Bits \arr \Bits$.
In contrast, such a function may have many refined types (sometimes called \emph{sorts}),
depending not only on the way the programmer chose to refine the $\Bits$ type, but on which
possible properties they wish to check.  The type, or sort, of a function is a tiny module interface.
In a conventional Hindley--Milner type system, there is a best interface (the principal type); with
datasorts, the ``best'' interface is---as with a module interface, which may reveal
different aspects of the module---\emph{the one the programmer thinks best}.
Maybe the programmer only cares that the function preserves odd parity, and annotates
it with $\Odd \arr \Odd$; the compiler will reject calls with $\Even$ bitstrings, even though
such a call would be conventionally well-typed.

To infer sorts, as in the original work of Freeman, is like assuming that all declarations in a module
should be exposed. (Tools that suggest possible invariants could be useful, just as a tool that suggests
possible module interfaces %
could be useful.  But such tools are not the focus of this paper.)

\section{A Type System with Extensible Refinements}
\Label{sec:language}

This section gives our language's syntax, introduces signatures, discusses the introduction
and elimination forms for datasorts, and presents the typing rules.
The details of typing pattern matching are in \Sectionref{sec:pattern-typing}.

\subsection{Syntax}

\begin{figure}[t]
   \centering

  ~\begin{tabular}{@{\,}lr@{~~~}c@{~~~}ll@{}}
      Term vars. & \multicolumn{3}{l}{$x, y, \dots$}
      \\[0.2ex]
      Expressions\!   & $e$ & $\bnfas$&
      \multicolumn{2}{l}{\!\!%
              $x
              \bnfalt \Lam{x} e
              \bnfalt e_1\,e_2
              \bnfalt \Pair{e_1}{e_2}
              \bnfalt
              c(e)
              \bnfalt \Case{e}{ms}
              $}
       \\ &&&\!\!\!\!\!
              $
              \bnfalt \declare{\Sigma}{e}
              $
              ~~\text{\small---signature extension (Fig.\ \ref{fig:signatures})}
  \\[0.3em]
  Matches & \!\!\!\!$ms$ & $\bnfas$&
              $
              \cdot
              \bnfalt
              \big((\Match{p}{e}) \,\matchor\, ms\big)
              $
  \\[0.3em]
  Values   & $v$ & $\bnfas$&  
              $x
              \bnfalt \Lam{x} e
              \bnfalt \Pair{v_1}{v_2}
              \bnfalt c(v)
              \bnfalt \anno{v}{A}
              $
  \\[0.3em]
  Patterns    & $p$ & $\bnfas$&
             $\wildcard
             \bnfalt \emptypattern
             \bnfalt c(p)
             \bnfalt \Pair{p_1}{p_2}
             \bnfalt x \As p
             \bnfalt p_1 \patunion p_2
             $
  \end{tabular}

  \caption{Expressions}
  \FLabel{fig:expressions}
\end{figure}

The syntax of expressions (\Figureref{fig:expressions}) includes
functions $\Lam{x} e$, function application $e_1\,e_2$, pairs $\Pair{e_1}{e_2}$,
constructors $c(e)$,
and case expressions.
Signatures are extended by 
$\declare{\Sigma}{e}$.  %

\begin{figure}[t]
  \centering
  
  \begin{tabular}{@{}lr@{~~~}c@{~~~}ll@{}}
           Datasorts\!\!\! & $s, t, \dots$\hspace{-4ex}
      \\[0.2ex]
   Types   & \hspace{-5ex}$A, B, D$ & $\bnfas$&
              $\unitty
              \bnfalt A \arr B
              \bnfalt A * B %
              \bnfalt s
              \bnfalt A \sectty B
              $
  \\[0.5ex]
     Typing contexts  & $\Gamma$ & $\bnfas$&
             $\emptyctx
             \bnfalt \Gamma, x : A
             $
  \end{tabular}

  \caption{Types and contexts}
  \FLabel{fig:types}
\end{figure}

Types (\Figureref{fig:types}), written $A$ and $B$,
include unit ($\unitty$), function, and product types,
along with datasorts $s$ and $t$.
The intersection type $A \sectty B$ represents the conjunction
of the two properties denoted by $A$ and $B$; for example,
a function to repeat a bitstring could be checked against type
$(\Odd \arr \Even) \sectty (\Even \arr \Even)$: given
any bitstring $b$, the repetition $bb$ has even parity.

\subsection{Unrefined types and signatures}

Our unrefined types $\tau$, in \Figureref{fig:signatures}, are very simple: unit $\unitty$,
functions $\tau_1 \arr \tau_2$, products $\tau_1 * \tau_2$, and datatypes $d$.
We assume that each datatype has a known set of constructors: for example,
the bitstring type of \Sectionref{sec:intro} has constructors \datacon{Empty}, \datacon{One}
and \datacon{Zero}.  Refinements don't add constructors;
they only refine the types of the given constructors.  We assume
that each program has some \emph{unrefined signature} $\Ursig$ that gives
datatype names ($d$) and (unrefined) constructor typings ($c : \tau \arr d$).
Since this signature is the same throughout a program, we elide it in most judgment forms.

The judgment $\Sigma \entails A \refines \tau$ says that $A$ is a refinement of $\tau$.
Both the symbol $\refinessym$ and several of the rules are reminiscent of subtyping,
but that is misleading: sorts and types are not in an inclusion relation in the sense
of subtyping, because the rule for $\arr$ is covariant, not contravariant.
Covariance is needed for functions whose domains
are nontrivially refined, \eg $\Odd \arr \cdots$, which is not a subtype
of $\Bits \arr \cdots$ because $\Bits \not\subtype \Odd$.

Rule \RefinesSect implements the usual refinement restriction:
both parts of an intersection $A_1 \sectty A_2$ must refine the
same unrefined type $\tau$.

\subsection{Signatures}
\Label{sec:signatures}

Refinements are defined by \emph{signatures} $\Sigma$ (\Figureref{fig:signatures}).

\begin{figure}[t]
  \centering
  
  \begin{grammar}
      Unrefined datatype names
        &
        $d$
        &
    \\[0.2ex]
      Unrefined types
        &
        $\tau$
        &
        $\bnfas$
        &
        $
        \unitty
        \bnfalt \tau_1 \arr \tau_2
        \bnfalt \tau_1 * \tau_2
        \bnfalt d
        $
    \\[0.2ex]
      Unrefined signatures
        &
        $\Ursig$
        &
        $\bnfas$
        &
        $
        \cdot
        \bnfalt \Ursig, d
        \bnfalt \Ursig,\, c : \tau \arr d$
  \end{grammar}
  
   \vspace{-4ex}

    \begin{grammar}
          Constructor types\!\!\!   & $C$ & $\bnfas$&  
                  $A \arr s
                  $
      \\[0.2em]
           Blocks  & $\blk$ & $\bnfas$&
                 $\emptyctx$
                 & empty block
       \\ &&&\!\!\!\!\!
                 $\bnfalt \blk, s_1 \subsort s_2$
                 & subsorting declaration
       \\ &&&\!\!\!\!\!
                 $\bnfalt \blk, c : C$
                 & constructor type decl.
    \\[0.2em]
        Sort sets &
           $S$ & $\bnfas$& $(s_1{\refines}d_1, \dots, s_n{\refines}d_n)$
           \hspace{-8ex}
           &
           \hspace{8ex}
    \\[0.2em]
        Abbrev.\ sort sets\!\!\!\!\!\!\!\! &
           $S$ & $\bnfas$& $(s_1, \dots, s_n)$ & %
    \\[0.2em]
           Signatures\!\!\!\!  & $\Sigma, \Omega$ & $\bnfas$&
                 $\emptyctx$
                 & empty signature
       \\ &&&
                   \!\!\!\!\!$\bnfalt \Sigma, S\sortblock{\blk}$
            & \text{datasort specification}
    \end{grammar}

  \judgbox{\Sigma \entails A \refines \tau}
       {Under signature $\Sigma$ (and unrefined signature $\Ursig$),
         \\
         type $A$ is a refinement of unrefined type $\tau$}

  \smallskip
  \small
  \begin{mathpar}
    \Infer{\RefinesUnit}
         {}
         {\Sigma \entails \unitty \refines \unitty}
    ~~~
    \Infer{\RefinesArr}
         {%
             \Sigma \entails A_1 \refines \tau_1
             \\
             \Sigma \entails A_2 \refines \tau_2
         }%
         {\Sigma \entails (A_1 \arr A_2) \refines (\tau_1 \arr \tau_2)}
    ~~~
    \Infer{\RefinesProd}
         {%
             \Sigma \entails A_1 \refines \tau_1
             \\
             \Sigma \entails A_2 \refines \tau_2
         }%
         {\Sigma \entails (A_1 * A_2) \refines (\tau_1 * \tau_2)}
    \and
    \Infer{\RefinesData}
         {(s{\refines}d) \in \Sigma}
         {\Sigma \entails s \refines d}
    \and
    \Infer{\RefinesSect}
         {
           \Sigma \entails A_1 \refines \tau
           \\
           \Sigma \entails A_2 \refines \tau
         }
         {\Sigma \entails (A_1 \sectty A_2) \refines \tau}
  \end{mathpar}

  \lesscaptionspace
  \caption{Unrefined types and signatures, refined signatures, $\refines$}
  \FLabel{fig:signatures}
\end{figure}

\begin{figure}[t]

  \judgbox{\typejudg \Sigma A}%
        {Type $A$ is well-formed}
  \vspace*{-2.5ex}
  {\small
  \begin{mathpar}
      \Infer{\TypewfUnit}
          {}
          {\typejudg \Sigma \unitty}
      \and
      \Infer{\TypewfProd}
          {
              \typejudg \Sigma {A_1}
              \\
              \typejudg \Sigma {A_2}
          }
          {\typejudg \Sigma {A_1 * A_2}}
     \\
      \Infer{\TypewfArr}
          {\arrayenvl{
              \typejudg \Sigma {A_1}
              \\
              \typejudg \Sigma {A_2}
          }}
          {\typejudg \Sigma {A_1 \arr A_2}}
      ~~~~
      \Infer{\TypewfSect}
          {\arrayenvl{
              \typejudg \Sigma {A_1}
              \\
              \typejudg \Sigma {A_2}
            }
            ~~~~~
            \arrayenvl{
            \Sigma \entails A_1 \refines \tau
            \\
            \Sigma \entails A_2 \refines \tau
            }
          }
          {\typejudg \Sigma {A_1 \sectty A_2}}
     \vspace{-0.8ex}
     \\
      \Infer{\TypewfSort}
          {s \in S }
          {\typejudg {\Sigma, S\sortblock{\blk}, \Sigma'} {s}}
  \vspace*{-4.0ex}
  \end{mathpar}
  }

  \judgbox{\contypejudg{\Sigma}{c : C}}%
          {Under (prefix) context $\Sigma$,
            the typing
            $c : C$
            refines a typing in $\Ursig$
}
  \vspace*{-1.2ex}
  \begin{mathpar}
    \Infer{\ContypeArr}
        {
          \typejudg{\Sigma}{A}
          \\
          \typejudg{\Sigma}{s}
          \\
          (c : \tau \arr d) \in \Ursig          %
          \\
          \Sigma \entails (A \arr s) \refines (\tau \arr d)           %
        }
        {\contypejudg{\Sigma}{c : A \arr s}}
  \end{mathpar}

  \lesscaptionspace
  \caption{Type well-formedness}
  \FLabel{fig:type-wf}
\end{figure}

As in past datasort systems, we separate signatures $\Sigma$ from typing contexts
$\Gamma$.  Typing
assumptions over term variables ($x$, $y$, etc.) in $\Gamma$ can mention sorts
declared in $\Sigma$, but the signature $\Sigma$ cannot mention the term variables declared in
$\Gamma$.  Thus, our judgment for term typing will have the form
$\Sigma; \Gamma \entails e : A$, where the term $e$ can include constructors
declared in $\Sigma$ and variables declared in $\Gamma$, and the type $A$ can include
sorts declared in $\Sigma$.
Some judgments, like subsorting $\Sigma \entails s \subsort t$
and subtyping $\Sigma \entails A \subtype B$,
are independent of variable typing and don't include $\Gamma$ at all.

Traditional formulations of refinements assume the signature is given once at the beginning
of the program.  Since the same signature is used throughout a given typing derivation,
the signature can be omitted from the typing judgments.
In this paper, our goal is to support extensible refinements, where the signature can
evolve within a typing derivation; in this respect, the signature is analogous to an
ordinary typing context $\Gamma$, which is extended in subderivations that type
$\lambda$-expressions and other binding forms.
So the signature must be explicit in our judgment forms.

Constructor types $C$ are types of the form $A \arr s$.
In past formulations of datasorts, constructor types in the signature 
use intersection to represent multiple behaviours.  For example, a ``one'' constructor
for bitstrings, which represents appending a 1 bit, takes odd-parity bitstrings
to even-parity and vice versa; its type in the signature is the intersection type
$(\Odd \arr \Even) \sectty (\Even \arr \Odd)$.
Such a formulation ensures that the
signature has a standard property of (typing) contexts: each data constructor
is declared only once; additional behaviours are conjoined (intersected) within a single
declaration $c : C_1 \sectty C_2 \sectty \cdots$.
In our setting, we must be careful about not only \emph{which} types a constructor has,
but \emph{when} those types were declared.
The reasons are explained below;
for now, just note that we will write something like $c : C_1,\, \dots, c : C_2$
rather than $c : C_1 \sectty C_2$.

\mypara{Structure of signatures.}
A signature $\Sigma$ is a sequence of \emph{blocks} $S\sortblock{\blk}$
of declarations,
where refinements declared in outer scopes in the program
appear to the left of those declared in inner scopes.

Writing $(s{\refines}d)\sortblock{\blk}$
declares $s$ to be a sort refining some (unrefined) datatype $d$;
however, we usually elide the datatype and write just $s\sortblock{\blk}$.
The declarations $\blk$, called the \emph{block} of $s$,
define the values (constructors) of $s$, and the subsortings for $s$.
Declarations outside this block may declare new subsorts and supersorts
of $s$ \emph{only} if doing so would not affect $s$---for example, adding
inhabitants to $s$ via a constructor declaration, or declaring a new subsorting
between $s$ and previously declared sorts, would affect $s$ and will be
forbidden (via signature well-formedness).
The grammar generalizes this construct to multiple sorts, \eg
$(s_1{\refines}d_1,s_2{\refines}d_2)\sortblock{\blk}$,
abbreviated as $(s_1, s_2)\sortblock{\blk}$.

Writing $s_1 \subsort s_2$ says that $s_1$ is a subsort
of $s_2$, and $c : C$ says that constructor $c$ has type $C$, where $C$ has
the form $A \arr s$.  A constructor $c$ can be given more than one type:
~$   \Sigma
   ~=~
   (s, s_1, s_2)\sortblock{s_1 \subsort s,\, s_2 \subsort s,\, c : s_1{\arr}s_2,\, c : s_2{\arr}s_1}
$.

Adding inhabitants to a sort is only allowed within its block.
Thus, the following signature is ill-formed, because $c' : \unitty{\arr}s$
adds the value $c'\unit$ to $s$, but $c' : \unitty{\arr}s$ is not within $s$'s block:
$
   s\sortblock{c : s{\arr}s},
   \;
   t\sortblock{c' : \unitty{\arr}s}
$.
New sorts can be declared as subsorts and supersorts of each other, and of
previously declared sorts:
$
  s\sortblock{c_1 : \unitty{\arr}s, c_2 : \unitty{\arr}s},
  \;
  t\sortblock{t \subsort s, c_2 : \unitty{\arr}t}
$.

However, a block cannot modify the subsorting relation between earlier sorts;
``backpatching'' $s_1 \subsort s_2$ into the first block,
through a new intermediate sort $t$, is not permitted:
The signature
$
  \Sigma_* =
  (s_1, s_2)\sortblock{c : \unitty{\arr}s_1, c : \unitty{\arr}s_2},
  \;
  t\sortblock{s_1 \subsort t, t \subsort s_2}
$
is not permitted even though it looks safe: sorts $s_1$ and $s_2$
have the same set of inhabitants---the singleton set $\{c\unitexp\}$---so
the values of $s_1$ are a subset of the values of $s_2$.  But
this fact was not declared in the first block, which is the definition of
$s_1$ and $s_2$.  We assume the declaration of the first block completely
reflects the programmer's intent: if they had wanted structural subsorting,
rather than nominal subsorting, they should have declared $s_1 \subsort s_2$
in the first block.  Allowing backpatching would not violate soundness,
but would reduce the power of the type system: nominal subsorting
would no longer be supported, since it could be made structural
after the fact.

\mypara{Ordering.}
A block $S\sortblock{\blk}$ can refer to the sorts $S$ being defined
and to sorts declared to the left.
In contrast to block ordering, the order of declarations inside a block doesn't matter.

\subsection{Introduction form}

From a type-theoretic perspective, the first questions about a type are:
(1) How are the type's inhabitants created?  That is, what are the type's introduction rules?
(2) How are its inhabitants used?  That is, what are its elimination rules?  (\citet{Gentzen35} would ask the questions in this order;
the reverse order has been considered by Dummett, among others~\citep{ZeilbergerThesis}.)
In our setting, we must also ask: What happens with the introduction
and elimination forms when new refinements are introduced?

In the introduction rule---\TypeDataI in \Figureref{fig:typing}---the
signature $\Sigma$ is separated from the ordinary context $\Gamma$
(which contains
typing assumptions of the form $x : A$).
The typing of $c$ is delegated to its first premise, $\Sigma \entails c : A \arr s$,
so we need a way to derive this judgment.
At the top of \Figureref{fig:typing}, we define a single rule \NoLinkConArr,
which looks up the constructor in the signature
and weakens the result type (codomain), expressing a subsumption principle.
(Since we'll have
subsumption as a typing rule, including it here is an unforced choice;
its presence is meant to make the metatheory of constructor typing go more smoothly.)

In a system of extensible refinements,
adding refinements to a signature should preserve typing.
That is, if $\Sigma; \Gamma \entails e : B$, then $\Sigma, \Sigma'; \Gamma \entails e : B$.
This is a weakening property: we can derive,
from the judgment that $e$ has type $B$ under $\Sigma$, the logically weaker judgment
that $e$ has type $B$ under more assumptions $\Sigma, \Sigma'$.  (The signature
becomes longer, therefore stronger; but a turnstile is a kind of implication with the signature
as antecedent, so the judgment becomes weaker, hence ``weakening''.)
So for the introduction form, we need that if $\Sigma \entails c : A \arr s$,
then $\Sigma, \Sigma' \entails c : A \arr s$.  Under our formulation of the signature,
this holds:  If $c : A \arr s$, then there exists $(c : A \arr s') \in \Sigma$
such that $s' \subsort s$.  Therefore, there exists $(c : A \arr s') \in (\Sigma, \Sigma')$.
Likewise, since $\Sigma \entails s' \subsort s$, we also have $\Sigma, \Sigma' \entails s' \subsort s$.
One cannot use $\Sigma'$ to withdraw a commitment made in $\Sigma$.%
\footnote{Under the traditional formulation where each constructor has just one type
  in a signature, the relationship between the old signature $\Sigma$ and the new
  signature would be slightly more complicated: the old signature might contain $c : C_1$,
  and the new signature $c : C_1 \sectty C_2$, and we would need to explicitly
  eliminate the intersection to expose the old type $C_1$.  In our 
  formulation, the new signature appends additional typings for $c$ while keeping the typing $c : C_1$ intact.}

\subsection{Elimination form: case expressions}
\Label{sec:elimform}

Exhaustiveness checking for case expressions assumes complete knowledge
about the inhabitants of types.  Thus, we must avoid extending a signature
in a way that adds inhabitants to previously declared sorts.
Consider the case expression
$
  \Case{x : \sortname{empty}}{\Match{\datacon{Nil}\unitexp}{\unitexp}}
$
which is exhaustive for the signature
$   \Sigma = (\sortname{list}, \sortname{empty})
     \lblock\arrayenvl{\sortname{empty} \subsort \sortname{list},
\,%
              \datacon{Nil} : \unitty{\arr}\sortname{empty},
\,%
              \datacon{Cons} : \sortname{list}{\arr}\sortname{list}\rblock}
$
but not for %
\begin{rclll}
   (\Sigma, \Sigma')
   &=& (\sortname{list}, \sortname{empty})\lblock\arrayenvl{\sortname{empty} \subsort \sortname{list},
~%
              \datacon{Nil} : \unitty{\arr}\sortname{empty},
~%
              \datacon{Cons} : \sortname{list}{\arr}\sortname{list}\rblock,
              } \\  &&
              \lblock\datacon{Cons} : \sortname{list}{\arr}\sortname{empty}\rblock
\end{rclll}
Suppose we type-check the case expression under $\Sigma$, but then extend
$\Sigma$ to $(\Sigma, \Sigma')$.
Evaluating the above case expression with $x = \datacon{Cons}(\datacon{Nil}\unitexp)$
will ``fall off the end''.  The inversion principle that
``every \sortname{empty} has the form $\datacon{Nil}\unitexp$''
is valid under $\Sigma$, but with the additional type for \datacon{Cons} in $\Sigma'$,
that inversion principle becomes invalid under $(\Sigma, \Sigma')$.
Our system will reject the latter signature as ill-formed.

In the following, ``up'' and ``down'' are used
in the usual sense: a subsort is below its supersort.
In $\Sigma'$, the second constructor type for \datacon{Cons} had a smaller codomain than the first: the second
type had \sortname{empty}, instead of \sortname{list}.
Varying the codomain downward \emph{can} be sound when the lower codomain
is newly defined:
$
   \Sigma, \Sigma''
   \,=\,
   \Sigma,
\,%
   \sortname{subempty}\lblock
        \arrayenvl{
          \sortname{subempty} \subsort \sortname{empty},
\,%
          \datacon{Nil} : \unitty{\arr}\sortname{subempty}
          \rblock
        }
$.
Here, the inversion principle that every \sortname{empty} is \datacon{Nil} is still valid (along with the
new inversion principle that every \sortname{subempty} is \datacon{Nil}).  We only added
information about a new sort \sortname{subempty}, without changing the definition of 
\sortname{list} and \sortname{empty}.

\paragraph{Moving the domain down.}
Giving a new type whose domain is smaller, but that has the same codomain, is sound but pointless.
For example, extending
$\Sigma$ with $\datacon{Cons} : \sortname{empty}{\arr}\sortname{list}$, which
is the same as the type $\Sigma$ has for $\datacon{Cons}$ except that
the domain is \sortname{empty} instead of \sortname{list}, is sound.  The inversion
principle for values $v$ of type \sortname{list} in $\Sigma$ alone is ``either (1) $v$ has the form $\datacon{Nil}\unitexp$,
or (2) $v$ has the form $\datacon{Cons}(y)$ where $y$ has type \sortname{list}''.  Reading off the new inversion
principle for \sortname{list} from $\Sigma, \datacon{Cons} : \sortname{empty}{\arr}\sortname{list}$,
we get ``either (1) $v$ has the form $\datacon{Nil}\unitexp$,
or (2) $v$ has the form $\datacon{Cons}(y)$ where $y$ has type \sortname{list},
or (3) $v$ has the form $\datacon{Cons}(y)$ where $y$ has type \sortname{empty}''.  Since \sortname{empty}
is a subsort of \sortname{list}, part (3) implies part (2), and any case arm that checks under the assumption
that $y : \sortname{list}$ must also check under the assumption that $y : \sortname{empty}$.
Here, the new signature is equivalent to $\Sigma$ alone; the ``new'' type for \datacon{Cons}
is spurious.

\paragraph{Moving the codomain up.}
Symmetrically, giving a new type whose \emph{co}domain gets \emph{larger} is sound but pointless.
For example, adding $\datacon{Nil} : \unitty{\arr}\sortname{list}$ to $\Sigma$ has no effect, because
(in the introduction form) we could use the old type $\datacon{Nil} : \unitty{\,\arr\,}\sortname{empty}$
with subsumption ($\sortname{empty} \subsort \sortname{list}$). %

\paragraph{Moving the domain up.}
Making the domain of a constructor \emph{larger} is unsound in general.  To show this, we need a different
starting signature $\Sigma_2$.
\begin{rclll}
   \Sigma_2
   &=& (\sortname{list}, \sortname{empty}, \sortname{nonempty})
   \lblock
   \sortname{empty} \subsort \sortname{list},
   \sortname{nonempty} \subsort \sortname{list},
   \\ &&
   ~~~~\datacon{Nil} : \unitty{\arr}\sortname{empty},
   \datacon{Cons} : \sortname{empty}{\arr}\sortname{nonempty}
   \rblock
\end{rclll}
This isn't a very useful signature---it doesn't allow construction of any
list with more than one element---but it is illustrative.  We can read off from $\Sigma_2$ the following
inversion principle for values $v$ of sort \sortname{nonempty}: ``$v$ has the form $\datacon{Cons}(y)$
where $y$ has type \sortname{empty}''.
If $x : \sortname{nonempty}$
then
$\Case{x}{\Match{\datacon{Cons}(\datacon{Nil}\unitexp)}{\unitexp}}$
is exhaustive under $\Sigma_2$.
Now, extend $\Sigma_2$:
$   \Sigma_2, \Sigma_2'
   ~=~ \Sigma_2,
   \sortblock{\datacon{Cons} : \sortname{list}{\arr}\sortname{nonempty}}
$.
For the signature $\Sigma_2, \Sigma_2'$, the inversion principle for \sortname{nonempty} should be
``(1) $v$ has the form $\datacon{Cons}(y)$
where $y$ has type \sortname{empty},
or (2) $v$ has the form $\datacon{Cons}(y)$ where $y$ has type \sortname{list}''.
But there are more values of type \sortname{list} than of type \sortname{empty}.
The new inversion principle gives less precise information about the argument $y$,
meaning that the old inversion principle gives \emph{more} precise information than
$(\Sigma_2, \Sigma_2')$ allows.  Concretely, the case expression above
was exhaustive under $\Sigma_2$, but is not exhaustive under $(\Sigma_2, \Sigma_2')$
because $\datacon{Cons}(\datacon{Cons}(\datacon{Nil}\unitexp))$ has type $\sortname{list}$.

The above examples show that signature extension can be sound but useless,
unsound, or sound and useful (when the domain and codomain, or just the codomain,
are moved down).
Ruling out unsoundness will be the main purpose of our type
system, where unsoundness includes raising a ``match'' exception due to a
nonexhaustive case.  The critical requirement is that each block must not affect
previously declared sorts by adding constructors to them, or by adding
subsortings between them.

\subsection{Typing}

\Figureref{fig:typing} gives rules deriving the main typing
judgment $\Sigma; \Gamma \entails e : A$.  The variable rule
\TypeVar, the introduction (\TypeArrI) and elimination (\TypeArrE) rules for $\arr$,
and the introduction rules for
the unit type (\TypeUnitI) and products (\TypeProdI) are standard.
Products can be eliminated via $\Case{e}{\Match{\Pair{x_1}{x_2}}{\cdots}}$,
so they need no elimination rule.

\mypara{Subsumption.}
A subsumption rule \TypeSub incorporates subtyping, based on the
subsort relation $\subsort$; see \Sectionref{sec:subtyping}.
Several of the subtyping rules express the same properties as elimination
rules would; for example, anything of type $A_1 \sectty A_2$
has type $A_1$ and also type $A_2$.  Consequently, we can omit these
elimination rules without losing expressive power.

\begin{figure}[t]
  \centering

  \judgbox{\Sigma \entails c : C}{Under signature $\Sigma$, \\ constructor $c$ has type $C$}
  \vspace{-6.5ex}
  \begin{mathpar}
    \hspace*{34ex}
     \Infer{\ConArr}
          {(c : A \arr s') \in \Sigma
           \\
           \Sigma \entails s' \subsort s}
          {\Sigma \entails c : A \arr s}
  \vspace{-2.0ex}
  \end{mathpar}

  \judgbox{\Sigma; \Gamma \entails e : A}{Under signature $\Sigma$ and context $\Gamma$,
        expression $e$ has type $A$}
  \vspace{-1.8ex}
  \begin{mathpar}
    \Infer{\TypeVar}
         { }
         {\Sigma; \Gamma, x : A, \Gamma' \entails x : A}
    \and
    \Infer{\TypeSub}
         {\Sigma; \Gamma \entails e : A
           \\
           \Sigma \entails A \subtype B
         }
         {\Sigma; \Gamma \entails e : B}
    \vspace{-0.5ex}
    \\
    \Infer{\TypeArrI}
         {\Sigma; \Gamma, x : A \entails e : B
         }
         {\Sigma; \Gamma \entails (\Lam{x} e) : (A \arr B)}
    ~~~~~~~
    \Infer{\TypeArrE}
         {
           \Sigma; \Gamma \entails e_1 : (A \arr B)
           \\
           \Sigma; \Gamma \entails e_2 : A
         }
         {\Sigma; \Gamma \entails e_1\,e_2 : B}
   \vspace{-0.5ex}
    \\
\centerruleplaceholder{
    \Infer{\TypeSectI}
         {\Sigma; \Gamma \entails v : A_1
          \\
          \Sigma; \Gamma \entails v : A_2
         }
         {\Sigma; \Gamma \entails v : (A_1 \sectty A_2)}
}
    {\TypeSectE{k} admissible
      \\ via \TypeSub + \SubSectL{k}}
   \vspace{-0.5ex}
    \\
    \centerruleplaceholder{
        \Infer{\TypeProdI}
             {\Sigma; \Gamma \entails e_1 : A_1
              \\
              \Sigma; \Gamma \entails e_2 : A_2
             }
             {\Sigma; \Gamma \entails \Pair{e_1}{e_2} : A_1 * A_2}
   }{
     elimination via \NoLinkTypeDataE
     \\
     with $\Match{\Pair{x_1}{x_2}}{\cdots}$
   }
   \vspace{-0.5ex}
     \\
     \Infer{\TypeUnitI}
             {}
             {\Sigma; \Gamma \entails \unit : \unitty}
     \and
     \Infer{\TypeDataI}
          {\Sigma \entails c : A \arr s
           \\
           \Sigma; \Gamma \entails e : A}
          {\Sigma; \Gamma \entails c(e) : s}
     \and
     \Infer{\TypeDataE}
         {
           \arrayenvbl{
           \Sigma; \Gamma \entails e : A
           \\
           \matchassn{\Sigma}{\Gamma}{A}{\wildcard}
           \entails
           ms : B
           }
         }
         {\Sigma; \Gamma \entails (\Case{e}{ms}) : B}
   \vspace{-0.5ex}
     \and
     \Infer{\TypeDeclare}
          {
            \arrayenvbl{
                \sigjudg{(\Sigma, \Sigma')}
            \\
                \Sigma \entails B \type
            }
                \\
                \Sigma, \Sigma'; \Gamma \entails e : B
          }
          {\Sigma; \Gamma \entails (\declare{\Sigma'}{e}) : B}
  \vspace{-3.0ex}
  \end{mathpar}

  \caption{Typing rules for constructors and expressions}
  \FLabel{fig:typing}
\end{figure}

\mypara{Intersection.}  The introduction rule \TypeSectI 
corresponds to a binary version of the introduction rule for
parametric polymorphism in System F.  The restriction to
a value $v$ avoids unsoundness in the presence of mutable
references~\citep{Davies00icfpIntersectionEffects}, similar
to SML's value restriction for parametric polymorphism~\citep{Wright95:value-restriction}.
We omit the elimination rules, which
are admissible %
using  \TypeSub
and subtyping (\Sectionref{sec:subtyping}).
\[
   \Infer{}
        {\Sigma; \Gamma \entails e : A_1 \sectty A_2}
        {\Sigma; \Gamma \entails e : A_1}
  ~~~~~
   \Infer{}
        {\Sigma; \Gamma \entails e : A_1 \sectty A_2}
        {\Sigma; \Gamma \entails e : A_2}
\]
\noskipmypara{Datasorts.}  
Rule \TypeDataI introduces a datasort, according to a constructor
type found in $\Sigma$ (via the $\Sigma \entails c : C$ judgment).
Rule \TypeDataE examines an expression $e$
of type $A$ and checks matches $ms$ under the assumption
that the expression matches the wildcard pattern $\wildcard$;
see \Sectionref{sec:pattern-typing}.

\mypara{Re-refinement.}
Rule \TypeDeclare allows sorts to be declared.
Its premises check that (1) the signature $\Sigma'$ is a valid extension
of $\Sigma$ (see \Sectionref{sec:signature-wf}); (2) the type $B$
of the expression is well-formed \emph{without} the extension $\Sigma'$,
which prevents sorts declared in $\Sigma'$ from escaping their scope;
(3)  that the expression $e$ is well-typed under the extended signature $(\Sigma, \Sigma')$.

\subsection{Subtyping}
\Label{sec:subtyping}

\begin{figure}[t]
  \centering
  
  \judgbox{\Sigma \entails A \subtype B}{$A$ is a subtype of $B$} %
  \vspace{-5.0ex}
  \begin{mathpar}
    \Infer{\SubUnit}
        {}
        {\Sigma \entails \unitty \subtype \unitty}
    \and
    \Infer{\SubProd}
        {
            \Sigma \entails A_1 \subtype B_1
            \\
            \Sigma \entails A_2 \subtype B_2
        }
        {\Sigma \entails (A_1 * A_2) \subtype (B_1 * B_2)}
    \and
    \arrayenvbl{
    \Infer{\SubSort}
        {\Sigma \entails s \subsort t}
        {\Sigma \entails s \subtype t}
        \\[2pt]
        ~
    }
    \vspace*{-2.5ex}
    \\
    \Infer{\!\SubArr}
        {
            \Sigma \entails B_1 \subtype A_1
            ~~~~
            \Sigma \entails A_2 \subtype B_2
        }
        {\Sigma \entails (A_1 \arr A_2) \subtype (B_1 \arr B_2)}
    ~
    \Infer{\!\SubSectL{k}}
        {\Sigma \entails A_k \subtype B}
        {\Sigma \entails (A_1 \sectty A_2) \subtype B}
    ~~~
    \Infer{\!\SubSectR}
        {
          \arrayenvbl{
            \Sigma \entails A \subtype B_1
            \\
            \Sigma \entails A \subtype B_2
          }
        }
        {\Sigma \entails A \subtype (B_1 \sectty B_2)}
  \vspace{-3.2ex}
  \end{mathpar}

  \caption{Subtyping}
  \FLabel{fig:subtyping}
\end{figure}

Our subtyping judgment $\Sigma \entails A \subtype B$ says that
all values of type $A$ also have type $B$.  The rules follow the style
of \citet{Dunfield03:IntersectionsUnionsCBV};
in particular, the rules are orthogonal (each rule mentions only one
kind of connective) and transitivity is admissible.  
Instead of an explicit transitivity rule, we bake transitivity into
each rule; for example, rule \SubSectL{1} has a premise $A_1 \subtype B$
and conclusion $(A_1 \sectty A_2) \subtype B$, rather than just
$(A_1 \sectty A_2) \subtype A_1$ (with no premises).
This makes the rules easier to implement:
to decide whether $A \subtype C$, we never have to guess
a middle type $B$ such that $A \subtype B$ and $B \subtype C$.

\subsection{Signature well-formedness}
\Label{sec:signature-wf}

\begin{figure}[htb]
  \centering

    \judgboxa{\dom{\Sigma}}
             {Domain (declared sorts) of %
               $\Sigma$:
             \hspace{1ex}
             $\dom{S_1\sortblock{\blk_1}, \dots, S_n\sortblock{\blk_n}}
               \,=\,
               S_1 \union \cdots \union S_n$
             }
    ~\\[0.5ex]

  \judgbox{\Sigma \entails s_1 \subsort s_2}%
        {Sort $s_1$ is a subsort of $s_2$}
  \vspace{-6.0ex}
      \begin{mathpar}
        \hspace*{0.47\textwidth}
        \Infer{\SubsortAssum}
            {}
            {\Sigma_L, S\sortblock{\dots, s_1 \subsort s_2, \dots}, \Sigma_R \entails s_1 \subsort s_2}
        \vspace{-2.5ex}
        \\
        \Infer{\SubsortRefl}
            {s \in \dom{\Sigma}}
            {
              \arrayenvbl{
              \Sigma \entails s \subsort s
              \\
              ~}
            }
        \and
        \Infer{\SubsortTrans}
            {\Sigma \entails s_1 \subsort s_2
             \\
             \Sigma \entails s_2 \subsort s_3}
            {\Sigma \entails s_1 \subsort s_3}
      \end{mathpar}%
    \vspace{-2.0ex}
  \judgbox{\safeextat{\Sigma}{S\sortblock{\blk}}{C}{c}{t}}
         {$C$ safely extends a type
           given by $\Sigma$ for $c$ %
         }
         {
           \vspace{-1ex}
           \hspace{3ex}
          \begin{mathpar}
            \Infer{\SafeConAt}
                {
                    (c : A' \arr s') \in \Sigma
                    \\ 
                    \arrayenvbl{
                        \Sigma, S\sortblock{\blk} \entails s' \subsort t
                        \\
                        {\Sigma, S\sortblock{\blk} \entails s \subsort s'}
                    }
                    \\
                    \Sigma, S\sortblock{\blk} \entails A \subtype A'
                }
                {
                    \safeextat{\Sigma}{S\sortblock{\blk}}{A \arr s}{c}{t}
                }
           \hspace{-4ex}
          \end{mathpar}
         }

  \vspace{-2.0ex}

  \vspace{-1.5ex}

  \judgbox{\blkjudg{\Sigma}{S}{\blk}{\blkelem}}
       {
         $\blkelem \in \blk$ is safe for $\Sigma; S\sortblock{\blk}$
       }
  \vspace{-2.3ex}
  \begin{mathpar}
  \hspace{0ex}
     \Infer{\BlockSubsort}
         {
             s_1, s_2 \in (\dom{\Sigma} \union S)
         }
         {\arrayenvbl{\blkjudg \Sigma S {\blk} {s_1 \subsort s_2}
           \\ ~}}
    ~~~~~~
    \Infer{\BlockCon}
          {
             s \in S
           ~~~
           \arrayenvbl{
             \contypejudg{\Sigma, S\sortblock{\blk}}{c : A \arr s}
             \\
             \text{for all~}t \in \dom{\Sigma}
                  \\
                  \text{such that~}\Sigma, S\sortblock{\blk}
                  \entails s \subsort t,
                      \\
                      ~~\safeextat{\Sigma}{S\sortblock{\blk}}{A \arr s}{c}{t}
             }
        }
        {\blkjudg{\Sigma}{S}{\blk}{c : A\,{\arr}\,s}}
  \vspace{-2.0ex}
  \end{mathpar}

  \judgbox{\sigjudg{\Sigma}}
       {
         Signature $\Sigma$ is well-formed
       }
  \vspace{-7.0ex}
  \begin{mathpar}
    \hspace*{0.5\textwidth}
    \Infer{\SigEmpty}
        {}
        {\sigjudg{\emptyctx}}
   \vspace{-1.2ex}
    \\
    \Infer{\SigBlock}
        {
           \arrayenvbl{
             \sigjudg{\Sigma}
             \\
             (S \sect  \dom{\Sigma}) = \emptyset
           }
           \\
           \arrayenvbl{
             \text{for all~}t_1, t_2 \in \dom{\Sigma},
             \\
             ~~
             \arrayenvbl{
             \big(\Sigma \entails t_1 \subsort t_2\big)
             \\ \text{iff~}
             \big(\Sigma, S\sortblock{\blk} \entails t_1 \subsort t_2\big)
             }
            }
           \\
           \arrayenvbl{\text{for all~}\blkelem \in \blk,
              \\ ~~ \blkjudg{\Sigma}{S}{\blk}{\blkelem}
          }
        }
        {\sigjudg{\Sigma, S\sortblock{\blk}}}
  \vspace{-4ex}
  \end{mathpar}

  \caption{Signature well-formedness and subsorting}
  \FLabel{fig:wf}
\end{figure}

A signature is well-formed if standard conditions (\eg no duplicate
declarations of sorts) \emph{and} conservation conditions hold.
Reading \Figureref{fig:wf} from bottom to top, we start with well-formedness
of signatures $\sigjudg{\Sigma}$.  For each block $S\sortblock{\blk}$,
rule \SigBlock checks that the sorts $S$ are not duplicates
($S \sect \dom{\Sigma} = \emptyset$), and then checks that
(1) subsorting is conserved by $\blk$ and
(2) each element in $\blk$ is \emph{safe}.

\mypara{(1) Subsorting preservation.}
  The subsortings declared in $\blk$ must not affect the subsort relation
  between sorts previously declared in $\Sigma$.  The left-to-right direction of
  this ``iff'' always holds by weakening: adding to a signature cannot delete
  edges in the subsort relation.  The right-to-left direction is contingent on
  the contents of $\blk$; see signature $\Sigma_*$ in \Sectionref{sec:signatures}.
  This premise could also be written as
  $(\Sigma \entails {\subsort}|_{\dom\Sigma})
  = (\Sigma, S\sortblock{\blk} \entails {\subsort}|_{\dom\Sigma})$,
  where ${\subsort}|_{\dom\Sigma}$
  is the $\subsort$ relation restricted to sorts in $\dom{\Sigma}$.

\mypara{(2a) Subsort elements.}
Rule \BlockSubsort checks that the subsorts are in scope. %

\mypara{(2b) Constructor element safety.}
Rule \BlockCon's first premise checks that $s \in S$.  (Certain declarations
   with $s \notin S$ would be safe, but useless.)
  Its second premise checks that the
  constructor type $A \arr s$ is well-formed.
  Finally, for all sorts $t$ that were (1) previously declared (in
  $\dom{\Sigma}$) and %
  (2) supersorts of the constructor's codomain ($s \subsort t$), the rule checks that
  the constructor is ``safe at $t$''.

The judgment $\safeextat{\Sigma}{S\sortblock{\blk}}{A \arr s}{c}{t}$
says that adding the constructor typing
$c : A \arr s$ does not invalidate $\Sigma$'s inversion principle for $t$.
Rule \SafeConAt
checks that signature $\Sigma$ already has a constructor typing
$c : A' \arr s'$, where $s' \subsort t$, such that $A \subtype A'$.
Thus, any value $c(v)$ typed using $c : A \arr s$
can already be typed using $c : A' \subsort s'$, which is a subsort of $t$,
so the new constructor typing $c : A \arr s$ does not add inhabitants to $t$.

This check is \emph{not} analogous to function subtyping, because
we need covariance ($A \subtype A'$), not contravariance.  The
relation $\refines$ (\Figureref{fig:signatures}) is a closer analogy.

More subtly, \SafeConAt also checks that $s \subsort s'$.
Suppose we have the signature
$
  \Sigma
  ~=~
  (t, s_1, s_2)\sortblock{s_1 \subsort t,\, s_2 \subsort t,\, c_1 : s_1,\, c_2 : s_2}
$
and extend it with $s\sortblock{s \subsort t,\, c_1 : s}$.  (To focus on the issue at
hand, we assume $c_1$ and $c_2$ take no arguments.)
For the original signature $\Sigma$, the inversion principle for $t$ is: If a value $v$
has type $t$,
then either $v = c_1$ and $v$ has type $s_1$, or $v = c_2$ and $v$ has type $s_2$.
However, under the extended signature, there is a new possibility: $v$ has type $s$.
Merely being inhabited by $c_1$ is not sufficient to allow $s$ to be a subsort of
$t$.

If, instead, we start with
$
  \Sigma'
  ~=~
  (t, s_1, s_2)\sortblock{\fighi{c_1 : t},\, s_1 \subsort t,\, s_2 \subsort t,\, c_1 : s_1,\, c_2 : s_2}
$
then the inversion principle for $t$ under $\Sigma'$ is that $v$ has type $s_1$, type
$s_2$, \emph{or} type $t$.  Therefore, any case arm whose pattern is $x \As c_1$ must
be checked assuming $x : t$.  If an expression can be typed assuming $x : t$,
then it can be typed assuming $x : t'$ for any $t' \subsort t$, so the inversion principle
(again, under $\Sigma'$ before extension) is equivalent to ``$v$ has type $t$''.
Extending $\Sigma'$ with $s\sortblock{s \subsort t,\, c_1 : s}$ would extend the
inversion principle to say ``if $v : t$ then $v$ has type $t$, or $v$ has type $s$'',
but since $s \subsort t$ the extended inversion principle is equivalent to that for $t$
under $\Sigma'$.

The $s \subsort s'$ premise of \SafeConAt is needed
to prove the constructor lemma (Lemma \ref{lem:constructor}),
which says that a constructor typing in an extended signature must be below a
constructor typing in the original signature.

\section{Typing Pattern Matching}
\Label{sec:pattern-typing}

Pattern matching is how a program gives different answers on different inputs.
A key motivation for datasort refinements is to
exclude impossible patterns, so that programmers can avoid having to choose between
writing impossible case arms (that raise an ``impossible'' exception)
and ignoring nonexhaustiveness warnings.
The pattern typing rules must model the relationship
between datasorts and the operational semantics of pattern matching.
It's no surprise, then, that in datasort refinement systems,
case expressions lead to the most interesting typing rules.

The relationship between types and patterns is more involved than with, say, Damas--Milner plus inductive
datatypes: with (unrefined) inductive datatypes, all the information needed to check for
exhaustiveness (also called coverage) is immediately available as soon as the type of
the scrutinee is known.
Moreover, types for pattern variables can be
``read off'' by traversing the pattern top-down,
tracking the definition of the scrutinee's inductive datatype.
But with datasorts, a set of patterns that looks nonexhaustive
at first glance---looking only at the head constructors---may in fact be exhaustive, thanks to the
inner patterns.

Giving types to pattern variables is also tricky, because sufficiently precise types
may be evident only after examining the whole pattern.  For example, when matching $x : \Bits$
against the pattern $y \As \datacon{One}(\datacon{Empty})$, we shouldn't settle on
$y : \Bits$ because the scrutinee $x$ has type $\Bits$; we should descend into the pattern
and observe that $\datacon{Empty} : \Even$ and $\datacon{One} : (\Even \arr \Odd)$,
so $y$ must have type $\Odd$.

Restricting the form of case expressions to a single layer of clearly disjoint
  patterns $c_1(x_1) \matchor \dots \matchor c_n(x_n)$ would simplify
  the rules, %
  at the cost of a big gap between theory and practice:  Since real implementations
  need to support nested patterns, the theory fails to model the real complexities of exhaustiveness checking
  and pattern variable typing.  Giving code examples becomes fraught; either we flatten case expressions
  (resulting in code explosion), or we handwave a lot.

Another option is to support the full syntax of case expressions, \emph{except for \xAs-patterns}, so that pattern variables
  occur only at the leaves.
  If subsorting were always structural, as in Davies's system, we could
  exploit a handy equivalence between patterns and values: if the pattern is $x \As c(p_0)$,
  let-bind $x$ to $c(p_0)$ inside the case arm, letting rule \TypeDataI figure out the type
  of $x$. 
  But with nominal subsorting, constructing a value is \emph{not} equivalent;
  see \citet[pp.\ 234--5]{DaviesThesis} and \citet[pp.\ 112--3]{DunfieldThesis}.

Our approach is to support the full syntax, including \xAs-patterns.  This approach
was taken by \citet[Chapter 4]{DunfieldThesis}, but our system seems
simpler---partly because (except for signature extension) our type system omits
indexed types and union types, but also because we avoid embedding typing derivations
inside derivations of pattern typing.

Instead, we confine most of the complexity to a single mechanism:
a function called $\intersectx$,
which returns a set of types (and contexts that type $\xAs$-variables) that
represent the intersection between a type and a pattern.  The definition of this
function is not trivial, but does not refer to expression-level typing.

\subsection{Unrefined pattern typing, match typing, and pattern operations}

\begin{figure}[t]
  \centering

  \judgbox{\Ursig \entails p : \tau}
          {Pattern $p$ is suitable for values of unrefined type $\tau$}
  \vspace{-1.7ex}
  \small  
  \begin{mathpar}
    \Infer{\PattypeWild}
         {}
         {\Ursig \entails \wildcard : \tau}
    \and
    \Infer{\PattypeAs}
         {\Ursig \entails p : \tau}
         {\Ursig \entails (x \As p) : \tau}
    \and
    \Infer{\PattypeEmpty}
         {}
         {\Ursig \entails \emptypattern : \tau}
    \and
    \Infer{\PattypeUnit}
         {
         }
         {\Ursig \entails \unit : \unitty}
    \\
    \Infer{\!\PattypeOr}
         {
           \arrayenvbl{
           \Ursig \entails p_1 : \tau
           \\
           \Ursig \entails p_2 : \tau
           }
         }
         {\Ursig \entails (p_1{\hspace{0.7pt}\patunion\,}p_2) : \tau}
    ~~
    \Infer{\!\PattypePair}
         {
           \arrayenvbl{
           \Ursig \entails p_1 : \tau_1
           \\
           \Ursig \entails p_2 : \tau_2
           }
         }
         {\Ursig \entails \Pair{p_1}{p_2} : \tau_1 * \tau_2}
     ~~
    \Infer{\!\PattypeCon}
         {
               \Ursig \entails c : (\tau \arr d)
           \\
               \Ursig \entails p : \tau
         }
         {\Ursig \entails c(p) : d}
  \vspace{-3.2ex}
  \end{mathpar}

  \caption{Pattern type rules}
  \FLabel{fig:pattern-type}
\end{figure}

\Figureref{fig:pattern-type} defines a judgment $\Ursig \entails p : \tau$ that
says that pattern $p$ matches values of unrefined type $\tau$ under
the unrefined signature $\Ursig$.

\begin{figure*}[t]

  \judgbox{\matchassn{\Sigma}{\Gamma}{A}{p} \entails ms : D}
          {
            For a scrutinee of type $A$
            that matches
            residual pattern $p$,
            \\
            check each match in $ms$ against $D$
          }
  \begin{mathpar}
      \Infer{\!\TypeMs}
           {
             \arrayenvbl{
               \Sigma \entails A \refines \tau
               \\
               \Ursig \entails p_1 : \tau
             }
             \\
             \arrayenvbl{
                 \text{for all $\intresult{\Gamma'}{B}$}
                 \\
                 ~~\text{$\in \intersect{\Sigma}{A}{p \patsect p_1}$:}
                 \\
                 ~~~~\Sigma; \Gamma, \Gamma'
                              \entails
                              e_1 : D
             }
             \\
             \matchassn \Sigma \Gamma A {(p \patsect \patcomplement p_1)}
             \entails
             ms : D
           }
           {
             \matchassn \Sigma \Gamma A p
             \entails
             \big( (\Match{p_1}{e_1}) \matchor ms\big) : D
           }
      \and
      \Infer{\!\TypeMsEmpty}
           {
             \intersect{\Sigma}{A}{p} = \emptyset
           }
           {
             \matchassn \Sigma \Gamma A p
             \entails
             \emptyms : D
           }
  \vspace*{-3ex}
  \end{mathpar}

  \caption{Match typing}
  \FLabel{fig:pattern-typing}
\end{figure*}

Rule \TypeDataE for case expressions (\Figureref{fig:typing}) invokes a
match typing judgment, $\matchassn{\Sigma}{\Gamma}{A}{p} \entails ms : D$.
In this judgment, $p$ is a \emph{residual pattern} that represents the
space of possible values.  For the first arm in a case expression, no patterns have
yet failed to match, so the residual pattern in the premise of \TypeDataE is $\wildcard$.

Each arm, of the form $\Match{p_1}{e_1}$,
is checked by rule \TypeMs (\Figureref{fig:pattern-typing}).
The leftmost premises check that the type $A$
corresponds to the pattern type $\tau$.  The middle ``for all'' checks $e_1$ under
various assumptions produced by the $\intersectx$ function (\Sectionref{sec:intersect})
with respect to the pattern $p \patsect p_1$, ensuring that if $p_1$ matches the value at run time,
the arm is well-typed.
The last premise moves on to the remaining matches;
there, we know that the value did not match $p_1$,
so we subtract $p_1$ from the previous residual pattern $p$---expressed as
$p \patsect \patcomplement p_1$.
These operations are defined in the appendix (\Figureref{fig:complement-intersection}).

When typing reaches the end of the matches, $ms = \emptyms$ in rule \TypeMsEmpty,
we check that the case expression is exhaustive by checking that $\intersectx$
returns $\emptyset$.  For case expressions that are syntactically exhaustive, such as
a case expression over lists that has both \datacon{Nil} and \datacon{Cons} arms,
the residual pattern $p$ will be the empty pattern $\emptypattern$; the $\intersectx$
function on an empty pattern returns $\emptyset$.

We define pattern complement $\patcomplement p$
and pattern intersection $p_1 \patsect p_2$ in the appendix
(\Figureref{fig:complement-intersection}).
For example, $\patcomplement \wildcard = \emptypattern$.
No types appear in these definitions,
but the complement of a constructor pattern $c(p_0)$ uses the (implicit) unrefined signature $\Ursig$.
Our definition of pattern complement never generates $\xAs$-patterns, so we need not
define intersection for $\xAs$-patterns.

\begin{figure}[t]

  \judgboxa{\intersect{\Sigma}{A}{p} = \vec{\Bdec}}%
          {Intersection of type $A$ with
               pattern $p$ \\
               ~where each $\Bdec$ %
               has the form
               $\intrsctresult{\Sigma'}{\Gamma'}{B'}$}
  \vspace*{-1.1ex}
  \[%
  \begin{array}[t]{@{}r@{~\,}c@{~\,}l@{}ll}
    \intersect{\Sigma}{A}{\wildcard}
        &=&
        \big\{
            \intrsctresult{\cdot}{\cdot}{A}
        \big\}
    \\[0.3ex]
    \intersect{\Sigma}{A}{\emptypattern}
        &=&
        \emptyset
    \\[0.3ex]
    \intersect{\Sigma}{A}{x \As p}
        &=&
        \big\{
            \intrsctresult%
                   {\Sigma'}
                   {\Gamma', x : B}
                   {B}
            \,\big|\;
              \intrsctresult{\Sigma'}{\Gamma'}{B}
              \in
              \intersect{\Sigma}{A}{p}
        \big\}
    \\[0.7ex]
    \intersect{\Sigma}{A}{p_1 \patunion p_2}
        &=&
        \intersect{\Sigma}{A}{p_1}
        \union
        \intersect{\Sigma}{A}{p_2}
   \\[0.4ex]
    \intersect{\Sigma}{A_1\,{*}\,A_2}{\Pair{p_1}{p_2}}
    &=& 
        \big\{
             \intrsctresult%
                   {\Sigma_1, \Sigma_2}
                   {\Gamma_1, \Gamma_2}
                   {B_1 * B_2}
                 ~\big|\!
                 \arrayenvl{
                     ~%
                     \intrsctresult{\Sigma_1}{\Gamma_1}{B_1}
                     \in
                     \intersect{\Sigma}{A_1}{p_1}
                     \\[0.2ex]
                     \hspace*{-5.5ex}
                 \AND
                     \intrsctresult{\Sigma_2}{\Gamma_2}{B_2}
                     \in
                     \intersect{\Sigma}{A_2}{p_2}
        \big\}
             }
   \\[1ex]
    \intersect{\Sigma}{s}{c(p_0)}
        &=&
        \big\{
        \intrsctresult{%
            \Sigma'
            }
                        {\Gamma'}
                        {s_c}
       ~\big|~
           \begin{array}[t]{@{}l@{}}
               (c : A_c \arr s_c) \in \Sigma
               ~\AND~
               \Sigma \entails s_c \subsort s
               \\[0.4ex]
               \!\AND~
                 \intrsctresult{\Sigma'}{\Gamma'}{B}
                 \in
                 \intersect{\Sigma}{A_c}{p_0}
         \big\}
         \end{array}
  \end{array}
  \vspace*{-3ex}
  \]
  \caption{Intersection of a type with a pattern}
  \FLabel{fig:type-pattern-intersection}
\end{figure}

\subsection{The \textsf{intersect} function}
\Label{sec:intersect}

We define a function $\intersectx$ that builds the ``intersection''
of a type and a pattern.  Given a signature $\Sigma$,
type $A$ and pattern $p$, the $\intersectx$ function returns a (possibly empty) set
of \emph{tracks} $\{ \intresult{\Gamma_1'}{B_1}, \dots,
\intresult{\Gamma_n'}{B_n}\}$.
Each track $\intresult{\Gamma'}{B}$
has %
a list of typings $\Gamma'$ (giving the types of $\xAs$-variables)
and a type $B$ that %
represents the subset of values inhabiting $A$ that also match $p$.
The union of $B_1$ through $B_n$ constitutes the intersection of $A$ and $p$.
We call these ``tracks'' because each one represents a possible shape of the
values that match $p$, and the type-checking ``train'' must check a given
case arm under each track's $\Gamma'$.

Many of the clauses in the definition of $\intersectx$ (see \Figureref{fig:pattern-typing}) are
straightforward.  The intersection of $A$ with the wildcard $\wildcard$ is just
$\{\intresult{\cdot}{A}\}$.
Dually, the intersection of $A$
with the empty pattern $\emptypattern$ is the empty set.  In the same vein, the intersection of $A$
with the or-pattern $p_1 \patunion p_2$ is the union of two intersections ($A$ with $p_1$, and
$A$ with $p_2$).  The intersection of a product $A_1 * A_2$ with a pair pattern is the union of products of the pointwise intersections.

The most interesting case is when we intersect a sort $s$ with a pattern of the form $c(p_0)$.
For this case, $\intersectx$ iterates through all the constructor declarations in $\Sigma$ that
could have been used to create the given value: those of the form $(c : A_c \arr s_c)$
where $s_c \subsort s$.  For each such declaration, it calls $\intersectx$ on $A_c$ and $p_0$.
For each resulting track $\intresult{\Gamma'}{B}$,
it returns a track
$\intresult{\Gamma'}{s_c}$.

\mypara{Optimization.}
In practice, it may be necessary to optimize the
result of $\intersectx$.  If
$\Sigma = 
(\sortname{list}, \sortname{empty})
     \lblock\sortname{empty} \subsort \sortname{list},
              \arrayenvl{\datacon{Nil} : \unitty{\arr}\sortname{empty},
              \datacon{Cons} : \sortname{empty}{\arr}\sortname{list},
              \datacon{Cons} : \sortname{list}{\arr}\sortname{list}\rblock}
$
then
$\intersect{\Sigma}{\datacon{Cons}(x \As \wildcard)}{\sortname{list}}$
returns
$
\big\{\intresult{x : \sortname{empty}}{\sortname{list}},~
\intresult{x : \sortname{list}}{\sortname{list}}\big\}
$.
Since any case arm that checks under $x : \sortname{list}$
will check under $x : \sortname{empty}$, there is no point
in trying to check under $x : \sortname{empty}$.  Instead,
we should check only under $x : \sortname{list}$.
A similar optimization 
in the Stardust type checker could reduce the size of the set
of tracks by ``about an order of magnitude'' \citep[p.\ 112]{DunfieldThesis}.

\mypara{Missing clauses?}  As is standard in typed languages, pattern matching
doesn't look inside $\lambda$, so $\intersectx$ needs no clause for $\arr$/$\lambda$.
If we can't match on an arrow type, we don't need to match on intersections of arrows.
The other useful case of intersection is on sorts, $s_1 \sectty s_2$.  However, an intersection
of sorts can be obtained by declaring a new sort below $s_1$ and $s_2$ with the
appropriate constructor typings, so we omit such a clause from the definition.

\mypara{Comparison to an earlier system.}
A declarative system of rules in \citet[Chapter 4]{DunfieldThesis}
appears to be a conservative extension of $\intersectx$:
the earlier system supports a richer type system, but for the features in common,
the information produced is similar to that of $\intersectx$.
The earlier system was based on a judgment
$\Sigma \entails p \against A \KK (e \against D)$.
To clarify the connection to the present system, we adjust notation;
for example, we make $\Sigma$ explicit.

The meta-variables $\Sigma$, $p$, and $A$ directly correspond to the arguments to $\intersectx$,
while $e$ and $D$ correspond to $e_1$ and $D$ in our rule \TypeMs.
No meta-variables correspond directly to the tracks in the \emph{result} of $\intersectx$,
but within
$\Sigma \entails p \against A \KK (e \against B)$,
we find subderivations of
$B + \Gamma \entails \FORGETTYPE \KK e \against D$,
where the set of pairs $\langle \Gamma, B \rangle$
indeed correspond to the result of $\intersectx$.

Cutting through the differences in the formalism,
and omitting rules for unions and other features not present in this paper,
the earlier system behaves like $\intersectx$.
For example, $\Pair{p_1}{p_2}$ was also handled by considering each
component, and assembling all resulting combinations.
Perhaps most importantly, $c(p_0)$ was also handled by considering each constructor type
in the signature, filtering out inappropriate codomains, and recursing on $p_0$.
A rule for $\sectty$ appears in the declarative system in \citet[Chapter 4]{DunfieldThesis},
but the rule was never implemented, and seems not to be needed in practice.

Since the information given by the older system is precise enough to check interesting
invariants of actual programs, our definition of $\intersectx$ should also be precise enough.

\section{Operational Semantics}
\Label{sec:opsem}

We prove our results with respect to a call-by-value, small-step operational semantics.
The main judgment form is $e \step e'$, which uses evaluation contexts $\E$.  Stepping
\textkw{case} expressions
is modelled using a judgment $ms \step_v e'$, which compares each pattern in
$ms$ against the value $v$ being cased upon.
This comparison is handled by the judgment $\vmatch{p}{v}{\theta}$,
which says that $\theta$ is evidence that $p$ matches $v$ (that is, $[\theta]p = v$).
The rules are in \Figureref{fig:step} in the appendix.

\section{Metatheory}
\Label{sec:meta}

This section gives definitions, states some lemmas and theorems, and discusses
their significance in proving our main results.
For space reasons, we summarize a number of lemmas; their full statements
appear in the appendix.
All proofs are also relegated to the appendix.

\renewcommand{\Hand}{~}

\mypara{Subtyping and subsorting.}
Subtyping is reflexive and transitive
(Lemmas \ref{lem:subtyping-reflexivity}--\ref{lem:subtyping-transitivity}).

We define what it means for signature extension to preserve subsorting:

\begin{definition}[Preserving subsorting]
\Label{def:preserving}
  Given $\Sigma_1$ and $\Sigma_2$, we say that
  $\Sigma_2$ \emph{preserves subsorting} of $\Sigma_1$
  iff for all sorts $s, t \in \dom{\Sigma_1}$,
  if $\Sigma_1, \Sigma_2 \entails s \subsort t$
  then $\Sigma_1 \entails s \subsort t$.
\end{definition}

This definition allows new sorts in $\dom{\Sigma_2}$ to be subsorts or supersorts
of the old sorts in $\dom{\Sigma_1}$, provided that the subsort relation between
the old sorts doesn't change.

If two signatures do not have subsortings that cross into each other's domain,
they are \emph{non-adjacent}; non-adjacent signatures preserve subsorting.

\begin{definition}[Non-adjacency]
\Label{def:non-adjacency}
  Two signatures $\Sigma_1$ and $\Sigma_2$ are \emph{non-adjacent}
  iff each signature contains no subsortings of the form
  $s_1 \subsort s_2$ or $s_2 \subsort s_1$, where $s_1 \in \dom{\Sigma_1}$
  and $s_2 \in \dom{\Sigma_2}$.
\end{definition}

\vspace*{-1.0ex}

\begin{restatable}[Non-adjacent preservation]{theorem}{thmnonadjacentpreservation}
\XLabel{thm:non-adjacent-preservation}
~\\
  If $\Sigma_2$ preserves subsorting of $\Sigma_1$
  and $\Sigma_3$ preserves subsorting of $\Sigma_1$
  \\
  and $\Sigma_2$ and $\Sigma_3$ are non-adjacent
  then $\Sigma_3$ preserves subsorting of $(\Sigma_1, \Sigma_2)$.
\end{restatable}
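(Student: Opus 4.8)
The plan is to recast subsorting as reachability in a directed graph and then perform path surgery. First I would establish, by a routine induction over the three rules \SubsortAssum, \SubsortRefl, and \SubsortTrans, the standard characterization: $\Sigma \entails s \subsort t$ holds iff $s, t \in \dom{\Sigma}$ and there is a path $s = u_0, u_1, \dots, u_n = t$ in which each $u_i \subsort u_{i+1}$ is a subsorting literally declared in some block of $\Sigma$. Writing $D_i = \dom{\Sigma_i}$ (pairwise disjoint) and letting $E_i$ be the set of subsortings declared in the blocks of $\Sigma_i$, the declared-edge set of $(\Sigma_1, \Sigma_2, \Sigma_3)$ is $E_1 \cup E_2 \cup E_3$, and the goal becomes: every path between $s, t \in D_1 \cup D_2$ can be replaced by a path using only $E_1 \cup E_2$. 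Block scoping gives $E_1 \subseteq D_1 \times D_1$ and $E_2 \subseteq (D_1 \cup D_2)^2$; since $\Sigma_3$ is an extension of $\Sigma_1$ and $\Sigma_2, \Sigma_3$ are non-adjacent, no declared edge joins $D_2$ and $D_3$, so $E_3 \subseteq (D_1 \cup D_3)^2$, and in particular every $E_3$-edge that touches $D_3$ has its other endpoint in $D_1$.

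Next I would take a path witnessing $(\Sigma_1, \Sigma_2, \Sigma_3) \entails s \subsort t$ with $s, t \in D_1 \cup D_2$ and cut it into maximal runs of consecutive $E_3$-edges, separated by runs of $E_1 \cup E_2$-edges. The crucial observation is that both boundary vertices of each maximal $E_3$-run lie in $D_1$: such a vertex is incident to an $E_3$-edge, hence lies in $D_1 \cup D_3$; were it in $D_3$, the adjacent outside edge would also touch $D_3$ and thus be an $E_3$-edge, contradicting maximality, while a global endpoint ($s$ or $t$) lies in $D_1 \cup D_2$ and so, intersected with $D_1 \cup D_3$, in $D_1$. This is precisely where non-adjacency is indispensable: it is what forbids a run from entering or leaving $D_3$ through $D_2$, where the preservation hypothesis for $\Sigma_3$ would give no leverage.

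I would then collapse each maximal $E_3$-run. It is a path of $\Sigma_3$-declared edges between two sorts $a, a' \in D_1$, and since its vertices all lie in $D_1 \cup D_3 = \dom{\Sigma_1, \Sigma_3}$, it witnesses $(\Sigma_1, \Sigma_3) \entails a \subsort a'$. By the hypothesis that $\Sigma_3$ preserves subsorting of $\Sigma_1$, this yields $\Sigma_1 \entails a \subsort a'$, i.e.\ a path over $E_1 \subseteq E_1 \cup E_2$ staying inside $D_1$. Splicing this $\Sigma_1$-path in for each $E_3$-run produces a walk from $s$ to $t$ that uses only $E_1 \cup E_2$ edges and never leaves $D_1 \cup D_2$; by the reachability characterization this is exactly $(\Sigma_1, \Sigma_2) \entails s \subsort t$, as required.

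The step I expect to be the main obstacle is the transitivity case. A naive induction on the derivation of $(\Sigma_1, \Sigma_2, \Sigma_3) \entails s \subsort t$ breaks at \SubsortTrans when the intermediate sort lies in $D_3$, since the induction hypothesis demands both endpoints in $D_1 \cup D_2$. The reachability reformulation together with the maximal-run decomposition is what tames this, by collapsing an excursion into $D_3$ as a single unit rather than reasoning rule-by-rule, with non-adjacency guaranteeing that every such excursion enters and exits through $D_1$. I would also verify carefully the well-formedness conditions underlying $E_3 \subseteq (D_1 \cup D_3)^2$, so that $\Sigma_3$ introduces no subsorting touching $\dom{\Sigma_2}$; note that the core argument does not appear to use the hypothesis that $\Sigma_2$ preserves subsorting of $\Sigma_1$, which seems to serve only to keep $(\Sigma_1, \Sigma_2)$ a well-behaved base signature.
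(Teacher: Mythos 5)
Your proof is correct and follows essentially the same route as the paper's: both view a subsorting derivation as a path in the declared-edge graph, use non-adjacency to show that each maximal $\Sigma_3$-segment has both boundary vertices in $\dom{\Sigma_1}$, and collapse each such segment via the hypothesis that $\Sigma_3$ preserves subsorting of $\Sigma_1$. The only cosmetic difference is that the paper removes the $\Sigma_3$-segments one at a time by induction on their number (starting from the last), whereas you splice out all maximal runs simultaneously.
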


\vspace*{-1.0ex}

\mypara{Strengthening, weakening, and substitution.}
\Theoremref{thm:weakening} (Weakening)
will allow the assumptions in a judgment to be changed in two ways:
(1) the signature may be strengthened by replacing a signature $(\Sigma, \Sigma')$
with a signature $(\Sigma, \Omega, \Sigma')$;
and (2) the context may be strengthened by replacing $\Gamma$
with a context $\Gamma^+$ in which any typing assumption $(x : A) \in \Gamma$
can be replaced with $(x : A^+) \in \Gamma$, if $A \subtype A^+$.

Repeatedly applying (1) with different $\Omega$ leads to a more general notion
of strengthening a signature:

\begin{definition}
\Label{def:signature-str}
  A signature $\Sigma'$ is \emph{stronger than} $\Sigma$,
  written $\Sigma' \sstr \Sigma$, if
  $\Sigma'$ can be obtained from $\Sigma$ by inserting entire signatures at any position
  in $\Sigma$.
\end{definition}

We often use the less general notion (inserting a single $\Omega$), which
simplifies proofs.  For any result stated less generally, however, the
more general strengthening of \Definitionref{def:signature-str}
can be shown by induction on the number of blocks inserted.

\begin{definition}
\Label{def:context-str}
  Under $\Sigma$, a context $\Gamma'$ is \emph{stronger than} $\Gamma$,
  written $\Sigma \entails \Gamma' \gstr \Gamma$,
  if for each $(x : A') \in \Gamma'$, there exists $(x : A) \in \Gamma$ such that
  $\Sigma \entails A' \subtype A$.
\end{definition}

Several lemmas show weakening.
Lemma \ref{lem:weakening-low} says that $\Sigma$ in $\Sigma \entails \mathcal{J}$
can be replaced by a stronger $\Sigma'$, where $\mathcal{J}$ has the form
$A \type$ or $s_1 \subsort s_2$ or $A \subtype B$ or $c : A \arr s$ or $A \refines \tau$
or $c : C$.
Lemma \ref{lem:weakening-safeconat}
  says that $(\Sigma, \Omega, \Sigma')$ can replace $(\Sigma, \Sigma')$
  in
  $
  \safeextat{\Sigma, \Sigma'}{S\sortblock{\blk}}{A \arr s}{c}{t}
  $.
 Lemma \ref{lem:supersort-safeconat}
  allows the sort $t'$ in
  the judgment
  $\safeextat{\Sigma}{S\sortblock{\blk}}{A \arr s}{c}{t'}$
  to be replaced by a supersort $t$.

Using the above lemmas and \Theoremref{thm:non-adjacent-preservation},
we can show that the key judgment
``$\cdots c : A{\arr}s \textit{~safe}$''
can be weakened by inserting $\Omega$ inside the signature:

\begin{restatable}[Weakening `safe']{theorem}{thmweakeningsafe} %
\XLabel{thm:weakening-safe}
~\\
    If
    $\sigjudg{(\Sigma, \Sigma')}$ and $\sigjudg{(\Sigma, \Omega)}$
    and
    $\disjoint{\dom{\Sigma'}}{\dom{\Omega}}$
    \\
    and
    $\disjoint{\dom{\Sigma, \Omega, \Sigma'}}{S}$
    and $\blk$ does not mention anything in $\dom{\Omega}$
    \\
    and
    $S\sortblock{\blk}$ preserves subsorting for $(\Sigma, \Sigma')$
    \\
    and
    $(c : A \arr s) \in \blk$
    and
    $\blkjudg{\Sigma, \Sigma'}{S}{\blk}{c : A \arr s}$
    \\
    then
    $\blkjudg{\Sigma,  \Omega, \Sigma'}{S}{\blk}{c : A \arr s}$.
\end{restatable}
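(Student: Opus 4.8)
The plan is to derive the conclusion $\blkjudg{\Sigma, \Omega, \Sigma'}{S}{\blk}{c : A \arr s}$ by rebuilding a \BlockCon derivation whose prefix is now $\Sigma, \Omega, \Sigma'$ in place of $\Sigma, \Sigma'$. First I would invert the hypothesis $\blkjudg{\Sigma, \Sigma'}{S}{\blk}{c : A \arr s}$ through \BlockCon to recover its three premises: $s \in S$; the constructor typing $\contypejudg{\Sigma, \Sigma', S\sortblock{\blk}}{c : A \arr s}$; and the safety premise, namely that $\safeextat{\Sigma, \Sigma'}{S\sortblock{\blk}}{A \arr s}{c}{t}$ holds for every $t \in \dom{\Sigma, \Sigma'}$ with $\Sigma, \Sigma', S\sortblock{\blk} \entails s \subsort t$. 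The first premise is unchanged. The second transports to $\contypejudg{\Sigma, \Omega, \Sigma', S\sortblock{\blk}}{c : A \arr s}$ by weakening (Lemma \ref{lem:weakening-low}). All the real work is in re-establishing the safety premise, now quantified over the larger domain $\dom{\Sigma, \Omega, \Sigma'}$.

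The crux is a \emph{middle-insertion} subsorting-preservation fact: for sorts $a, b \notin \dom{\Omega}$ we have $\Sigma, \Omega, \Sigma', S\sortblock{\blk} \entails a \subsort b$ iff $\Sigma, \Sigma', S\sortblock{\blk} \entails a \subsort b$. The right-to-left direction is weakening; for left-to-right I would argue that any derivation whose conclusion avoids $\dom{\Omega}$ can be rewritten to avoid $\Omega$ entirely. The key observation is that, by the disjointness hypotheses and the fact that $\blk$ mentions nothing in $\dom{\Omega}$, no subsorting edge declared in $\Sigma$, $\Sigma'$, or $\blk$ touches $\dom{\Omega}$; hence every direct subsorting into or out of a sort of $\dom{\Omega}$ is declared by $\Omega$, whose edges lie within $\dom{\Sigma} \union \dom{\Omega}$. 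Consequently any excursion of a subsorting chain into $\dom{\Omega}$ must enter from some $p \in \dom{\Sigma}$ and leave to some $q \in \dom{\Sigma}$. Since the subsorting-preservation premise of \SigBlock, applied to each block of $\Omega$, forces $\Omega$ to preserve subsorting of $\Sigma$, the induced $p \subsort q$ already holds under $\Sigma$ alone, so the excursion can be short-circuited and $\Omega$ elided. (For the restriction of this fact to sorts already in $\dom{\Sigma, \Omega}$ one can instead invoke \Theoremref{thm:non-adjacent-preservation} with $\Sigma_2 = \Omega$ and $\Sigma_3 = \Sigma', S\sortblock{\blk}$, which are non-adjacent precisely because neither mentions the other's newly declared sorts; the direct edge argument is what additionally covers the new sorts $S$.)

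With preservation available, I would fix an arbitrary $t \in \dom{\Sigma, \Omega, \Sigma'}$ with $\Sigma, \Omega, \Sigma', S\sortblock{\blk} \entails s \subsort t$ and split on whether $t$ is old or new. If $t \in \dom{\Sigma, \Sigma'}$, then $s \in S$ and $t$ both lie outside $\dom{\Omega}$, so preservation yields $\Sigma, \Sigma', S\sortblock{\blk} \entails s \subsort t$; the inverted safety premise gives $\safeextat{\Sigma, \Sigma'}{S\sortblock{\blk}}{A \arr s}{c}{t}$, which weakens to $\safeextat{\Sigma, \Omega, \Sigma'}{S\sortblock{\blk}}{A \arr s}{c}{t}$ by Lemma \ref{lem:weakening-safeconat}. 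If instead $t \in \dom{\Omega}$, I would decompose the chain witnessing $s \subsort t$ into direct edges and let $u$ be the last sort before the chain first enters $\dom{\Omega}$; by the edge analysis, $u \in \dom{\Sigma}$, and the chain splits as $s \subsort u$ (both endpoints outside $\dom{\Omega}$, so preservation gives $\Sigma, \Sigma', S\sortblock{\blk} \entails s \subsort u$) followed by $\Sigma, \Omega, \Sigma', S\sortblock{\blk} \entails u \subsort t$. Applying the inverted safety premise at $u \in \dom{\Sigma} \subseteq \dom{\Sigma, \Sigma'}$ gives $\safeextat{\Sigma, \Sigma'}{S\sortblock{\blk}}{A \arr s}{c}{u}$; weakening by Lemma \ref{lem:weakening-safeconat} and then lifting along $u \subsort t$ by Lemma \ref{lem:supersort-safeconat} yields $\safeextat{\Sigma, \Omega, \Sigma'}{S\sortblock{\blk}}{A \arr s}{c}{t}$. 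Reassembling the three premises under \BlockCon then gives the conclusion.

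I expect the main obstacle to be the new-$t$ case together with the middle-insertion preservation fact it rests on: locating the entry sort $u \in \dom{\Sigma}$ requires the edge/chain analysis establishing that $\Omega$'s subsortings cannot reach $\dom{\Sigma'} \union S$, and transporting $s \subsort u$ back to the $\Omega$-free signature requires that excursions through $\dom{\Omega}$ collapse because $\Omega$ preserves $\Sigma$-subsorting. The old-$t$ case and the constructor-typing premise are routine once preservation is in hand.
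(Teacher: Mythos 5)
Your proof is correct and takes essentially the same route as the paper's: invert \BlockCon, transport the constructor-typing premise by \Lemmaref{lem:weakening-low}, and re-establish the safety premise by case-splitting on whether $t \in \dom{\Sigma,\Sigma'}$ (handled via subsorting preservation plus \Lemmaref{lem:weakening-safeconat}) or $t \in \dom{\Omega}$ (handled by finding an intermediate sort and applying \Lemmaref{lem:supersort-safeconat}). The only presentational difference is that you replay the excursion/short-circuiting chain argument inline, whereas the paper obtains the same preservation fact by invoking \Theoremref{thm:non-adjacent-preservation} with $\Sigma_2 = (\Sigma', S\sortblock{\blk})$ and $\Sigma_3 = \Omega$ together with permutation invariance; your parenthetical alternative instantiation (swapping $\Sigma_2$ and $\Sigma_3$) would cover only sorts in $\dom{\Sigma,\Omega}$, but your primary argument does not depend on it.
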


With this additional lemma, we have weakening for the judgments involved
in checking that a signature is well-formed, so we can show that
if $\Sigma$ is safely extended by $\Sigma'$ and separately by $\Omega$,
then $\Omega$ and $\Sigma'$, together, safely extend $\Sigma$.

\begin{restatable}[Signature Interleaving]{theorem}{thmsignatureinterleaving}
\XLabel{thm:signature-interleaving}
~\\
  If $\sigjudg{(\Sigma, \Sigma')}$
  and $\sigjudg{(\Sigma, \Omega)}$
  and $\disjoint{\dom{\Sigma'}}{\dom{\Omega}}$
  then
  $\sigjudg{(\Sigma, \Omega, \Sigma')}$.
\end{restatable}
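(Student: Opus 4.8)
The plan is to induct on the number of blocks in $\Sigma'$, peeling off the rightmost one. The base case $\Sigma' = \emptyctx$ is immediate, since then $(\Sigma, \Omega, \Sigma') = (\Sigma, \Omega)$ is well-formed by hypothesis. For the inductive step I would write $\Sigma' = \Sigma'', S\sortblock{\blk}$ and invert \SigBlock on the hypothesis $\sigjudg{(\Sigma, \Sigma'', S\sortblock{\blk})}$, extracting $\sigjudg{(\Sigma, \Sigma'')}$, the freshness condition $(S \sect \dom{(\Sigma, \Sigma'')}) = \emptyset$, the subsorting-preservation ``iff'' for $(\Sigma, \Sigma'')$, and the safety judgment $\blkjudg{(\Sigma, \Sigma'')}{S}{\blk}{\blkelem}$ for every $\blkelem \in \blk$. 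Since $\dom{\Sigma''} \subseteq \dom{\Sigma'}$ is still disjoint from $\dom{\Omega}$, the induction hypothesis applies and gives $\sigjudg{(\Sigma, \Omega, \Sigma'')}$. It then remains to rebuild \SigBlock for $(\Sigma, \Omega, \Sigma'', S\sortblock{\blk})$, re-establishing its four premises with $\Omega$ inserted in the middle.

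Two premises are routine: the prefix premise is exactly the induction hypothesis, and freshness $(S \sect \dom{(\Sigma, \Omega, \Sigma'')}) = \emptyset$ follows from $(S \sect \dom{(\Sigma, \Sigma'')}) = \emptyset$ together with $S \subseteq \dom{\Sigma'}$ and $\disjoint{\dom{\Sigma'}}{\dom{\Omega}}$. For the safety premise I would split on the form of each $\blkelem$. A subsort declaration $s_1 \subsort s_2$ is discharged directly by \BlockSubsort, whose scoping side condition $s_1, s_2 \in \dom{(\Sigma, \Omega, \Sigma'')} \union S$ follows by weakening from the same condition over $(\Sigma, \Sigma'')$. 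A constructor declaration $c : A \arr s$ is precisely the case handled by the Weakening `safe' theorem (\Theoremref{thm:weakening-safe}), instantiated with the middle signature $\Omega$; I would discharge its hypotheses from the inverted data --- well-formedness of $(\Sigma, \Sigma'')$ and of $(\Sigma, \Omega)$, the two disjointness conditions, the inverted subsorting preservation for $(\Sigma, \Sigma'')$, and the inverted $\blkjudg{(\Sigma, \Sigma'')}{S}{\blk}{c : A \arr s}$. The only hypothesis needing an argument is that $\blk$ mentions nothing in $\dom{\Omega}$: by the scoping discipline a block refers only to sorts in $\dom{\Sigma} \union \dom{\Sigma''} \union S \subseteq \dom{\Sigma} \union \dom{\Sigma'}$, and both $\dom{\Sigma}$ and $\dom{\Sigma'}$ are disjoint from $\dom{\Omega}$ (the former from well-formedness of $(\Sigma, \Omega)$, the latter by hypothesis).

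I expect the remaining subsorting-preservation premise to be the main obstacle: I must show $S\sortblock{\blk}$ preserves subsorting not merely for $(\Sigma, \Sigma'')$ but for the larger $(\Sigma, \Omega, \Sigma'')$. The forward (weakening) direction of the ``iff'' is automatic, so the content is that inserting $\Omega$ creates no new subsortings between old sorts via the block. The two enabling facts are that $\blk$ names nothing in $\dom{\Omega}$ (so the fresh sorts $S$ are never adjacent in the subsort graph to any sort of $\dom{\Omega}$) and that $\Omega$ and $\Sigma'$ are non-adjacent --- which is forced by scoping, since blocks of $\Omega$ cannot name $\Sigma'$-sorts and blocks of $\Sigma'$ cannot name $\Omega$-sorts. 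My plan is to argue on the reflexive--transitive structure of $\subsort$: a derivation of $t_1 \subsort t_2$ with $t_1, t_2 \in \dom{(\Sigma, \Omega, \Sigma'')}$ is a chain of declared edges, and since the $\Omega$-edges live in $\dom{\Sigma} \union \dom{\Omega}$ while the $\Sigma''$- and $\blk$-edges live in $\dom{\Sigma} \union \dom{\Sigma''} \union S$, the two families meet only in $\dom{\Sigma}$. Splitting the chain at its $\dom{\Sigma}$-nodes therefore separates it into $\Omega$-segments, whose edges already lie in $(\Sigma, \Omega, \Sigma'')$, and $(\Sigma, \Sigma'', \blk)$-segments, each connecting two sorts of $\dom{(\Sigma, \Sigma'')}$; by the inverted preservation for $(\Sigma, \Sigma'')$ each such segment can be rerouted through $(\Sigma, \Sigma'')$ alone, removing every use of $\blk$ and leaving a chain entirely within $(\Sigma, \Omega, \Sigma'')$. (Where convenient this edge-chasing can be packaged using \Theoremref{thm:non-adjacent-preservation} together with the fact, read off from well-formedness of $(\Sigma, \Omega)$, that $\Omega$ preserves subsorting of $\Sigma$.) Assembling the four premises yields \SigBlock, hence $\sigjudg{(\Sigma, \Omega, \Sigma')}$, completing the induction.
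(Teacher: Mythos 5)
Your proposal matches the paper's proof essentially step for step: induction peeling the rightmost block of $\Sigma'$, inversion on \SigBlock, the induction hypothesis for the prefix, Theorem~\ref{thm:weakening-safe} for the constructor elements (with the ``$\blk$ mentions nothing in $\dom{\Omega}$'' condition discharged from the scoping premises of \BlockSubsort and \BlockCon), and non-adjacency of $S\sortblock{\blk}$ and $\Omega$ fed into Theorem~\ref{thm:non-adjacent-preservation} for the subsorting-preservation premise. The only cosmetic difference is that you spell out the edge-chasing that the paper delegates entirely to the non-adjacent-preservation theorem; the argument is the same.
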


Ultimately, we will show type preservation; in the preservation case for
the \TypeDeclare rule, we extend the signature in a premise.
We therefore need to show that the typing judgment can be weakened.
Since the typing rules for matches involve the $\intersectx$ function,
we need to show that a stronger input to $\intersectx$
yields a stronger output; that is, a longer (stronger) signature 
yields a stronger type $B_+$ (a subtype of $B$)
and a stronger context $\Gamma_+$ typing $\xAs$-variables.

\begin{definition}
\Label{def:track-str}
  Under a signature $\Sigma$,
  a track $\intresult{\Gamma_+}{B_+}$
  is stronger than  $\intresult{\Gamma}{B}$,
  written
$%
  \Sigma
  \entails
  \intresult{\Gamma_+}{B_+}
  \trkstr
  \intresult{\Gamma}{B}
$, %
  if and only if
  $\Sigma \entails \Gamma_+ \gstr \Gamma$
  and
  $\Sigma \entails B_+ \subtype B$.

  \smallskip

  A set of tracks $\vec{\Bdec_+}$ is stronger than $\vec{\Bdec}$,
  written $\vec{\Bdec_+} \trkstr \vec{\Bdec}$,
  if and only if, for each track
  $\intresult{\Gamma_+}{B_+} \in \vec{\Bdec_+}$,
  there exists a track
  $\intresult{\Gamma}{B} \in \vec{\Bdec}$
  such that
  $\intresult{\Gamma_+}{B_+}
  \trkstr
  \intresult{\Gamma}{B} \in \vec{\Bdec}$.
\end{definition}

Lemma \ref{lem:intersect-strength}
says that the result of $\intersectx$ on a stronger signature is stronger.

We can then show that weakening holds for the typing judgment itself,
along with substitution typing (defined in the appendix) and match typing.

\begin{restatable}[Weakening]{theorem}{thmweakening}
\XLabel{thm:weakening}
~\\
    If $\sigjudg{(\Sigma, \Sigma')}$,
    $\sigjudg{(\Sigma, \Omega)}$,
    $\dom{\Sigma'} \sect \dom{\Omega} = \emptyset$
    and
    $\Sigma, \Omega, \Sigma' \entails \Gamma^+ \gstr \Gamma$
    then
   \vspace*{-1.0ex}

    \begin{enumerate}[(1)]
      \item[(1)]
          If $\Sigma, \Sigma'; \Gamma \entails e : A$
          then $\Sigma, \Omega, \Sigma'; \Gamma^+ \entails e : A$.
      \item[(2)]
          If $\Sigma, \Sigma'; \Gamma \entails \theta : \Gamma'$
          then $\Sigma, \Omega, \Sigma'; \Gamma^+ \entails \theta : \Gamma'$.
      \item[(3)]
          If $\matchassn{\Sigma, \Sigma'}{\Gamma}{A}{p} \entails ms : D$
          then $\matchassn{\Sigma, \Omega, \Sigma'}{\Gamma^+}{A}{p} \entails ms : D$.
    \end{enumerate}
\end{restatable}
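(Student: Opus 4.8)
The plan is to prove all three statements simultaneously by mutual induction on the given typing, substitution-typing, and match-typing derivations, holding the signature decomposition $(\Sigma, \Sigma')$ and the inserted block $\Omega$ general so that the induction hypothesis can be reused under any re-decomposition of the signature. The three parts genuinely interleave: rule \TypeDataE in part~(1) has a match-typing premise, so it appeals to part~(3); conversely rule \TypeMs in part~(3) type-checks each arm body $e_1$, so it appeals to part~(1); and part~(2) is componentwise, following from part~(1) applied to each entry of $\theta$. In each case the premises are strictly smaller derivations (possibly of a different judgment form), so the mutual induction is well-founded.

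Most cases of part~(1) combine a previously established weakening lemma with the induction hypothesis. Throughout, I would use that $(\Sigma, \Omega, \Sigma')$ is stronger than $(\Sigma, \Sigma')$ in the sense of Definition~\ref{def:signature-str}, so that the side judgments ($A \subtype B$, $c : A \arr s$, $A \type$, $A \refines \tau$) arising in \TypeSub, \TypeDataI, and the refinement premises all weaken by Lemma~\ref{lem:weakening-low}. For \TypeVar I would recover the original type: since $\Sigma, \Omega, \Sigma' \entails \Gamma^+ \gstr \Gamma$, the assumption $x : A$ is matched by some $x : A^+ \in \Gamma^+$ with $A^+ \subtype A$, so \TypeVar gives $x : A^+$ and \TypeSub recovers $x : A$. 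For \TypeArrI (and any binder) I would apply the induction hypothesis to the body under the extended context, noting that $\Sigma, \Omega, \Sigma' \entails (\Gamma^+, x{:}A) \gstr (\Gamma, x{:}A)$ holds because $A \subtype A$ by reflexivity (Lemma~\ref{lem:subtyping-reflexivity}). The remaining structural rules (\TypeArrE, \TypeProdI, \TypeSectI, \TypeUnitI) are direct.

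The case I expect to be most involved is \TypeDeclare for an expression $\declare{\Sigma_0}{e}$ (renaming the rule's local extension to $\Sigma_0$ to avoid clashing with the theorem's $\Sigma'$). The original premises are $\sigjudg{(\Sigma, \Sigma', \Sigma_0)}$, $(\Sigma, \Sigma') \entails B \type$, and $(\Sigma, \Sigma', \Sigma_0); \Gamma \entails e : B$, and I must rebuild them with $\Omega$ inserted just after $\Sigma$. Since the sorts of $\Sigma_0$ are bound within $e$, I may assume by alpha-renaming that $\dom{\Sigma_0}$ is disjoint from $\dom{\Omega}$; then $\sigjudg{(\Sigma, \Omega, \Sigma', \Sigma_0)}$ follows from Signature Interleaving (\Theoremref{thm:signature-interleaving}), instantiated with right part $(\Sigma', \Sigma_0)$. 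The type premise weakens by Lemma~\ref{lem:weakening-low}, and the body premise follows from the induction hypothesis, re-decomposing the signature as base $\Sigma$ with right part $(\Sigma', \Sigma_0)$, after checking that $\gstr$ persists under the longer signature (again by Lemma~\ref{lem:weakening-low}). Reassembling with \TypeDeclare under $(\Sigma, \Omega, \Sigma')$ closes the case.

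The other delicate case is \TypeMs in part~(3), where the subtlety is that inserting $\Omega$ changes the \emph{set of tracks} $\intersect{\Sigma, \Sigma'}{A}{p \patsect p_1}$ against which the arm must be checked, not merely the ambient context. By Lemma~\ref{lem:intersect-strength} the new set $\intersect{\Sigma, \Omega, \Sigma'}{A}{p \patsect p_1}$ is stronger (Definition~\ref{def:track-str}) than the old one, so every new track $\intresult{\Gamma'_+}{B_+}$ is stronger than some old track $\intresult{\Gamma'}{B}$, in particular $\Gamma'_+ \gstr \Gamma'$. For that old track the premise supplies $\Sigma, \Sigma'; \Gamma, \Gamma' \entails e_1 : D$, which I would feed to the induction hypothesis of part~(1) under the combined strengthening $(\Gamma^+, \Gamma'_+) \gstr (\Gamma, \Gamma')$, obtaining the arm's typing under the new track; the leftmost refinement premise weakens by Lemma~\ref{lem:weakening-low} while $\Ursig \entails p_1 : \tau$ is unchanged, and the residual-pattern premise is handled by the induction hypothesis of part~(3). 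Rule \TypeMsEmpty is the same mechanism in miniature: $\intersect{\Sigma, \Sigma'}{A}{p} = \emptyset$ together with Lemma~\ref{lem:intersect-strength} forces $\intersect{\Sigma, \Omega, \Sigma'}{A}{p} = \emptyset$, since nothing can be stronger than the empty set of tracks. The main obstacle overall is the bookkeeping in \TypeDeclare and \TypeMs, namely the freshness assumption on $\Sigma_0$ and the simultaneous strengthening of the context by both $\Gamma^+$ and the track-level $\Gamma'_+$.
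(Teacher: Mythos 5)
Your proposal is correct and follows essentially the same route as the paper's proof: induction on the given derivations (with the parts mutually dependent), Lemma~\ref{lem:weakening-low} for the auxiliary judgments, alpha-renaming of the local extension plus Signature Interleaving (\Theoremref{thm:signature-interleaving}) for \TypeDeclare, and Lemma~\ref{lem:intersect-strength} with track strengthening for \TypeMs and \TypeMsEmpty. Your handling of \TypeVar via \TypeSub is in fact slightly more explicit than the paper's, which glosses over recovering $A$ from $A^+$.
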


\vspace{-1.5ex}

\mypara{Properties of values.}
Substitution properties (Lemmas \ref{lem:value-substitution} and \ref{lem:multiple-value-substitution})
and inversion (or \emph{canonical forms}) properties (Lemma \ref{lem:inversion}) hold.

\mypara{Type preservation and progress.}
\Label{sec:type-preservation-and-progress}
The last important piece needed for type preservation is that
$\intersectx$ does what it says: if a value $v$
matches $p$, then $v$ has type $B$ where $B$ is one of the outputs
of $\intersectx$.

\begin{restatable}[Intersect]{theorem}{thmintersect}
\XLabel{thm:intersect}
\!\!If  $\sigjudg{\Sigma}$
  and $\Sigma; \cdot \entails v : A$
  and $\Sigma \entails A \type$
  and $\vmatch{p}{v}{\theta}$
  and $\intersect{\Sigma}{A}{p} = \vec{\Bdec}$
  then
  there exists $\intrsctresult{\Sigma'}{\Gamma'}{B} \in \vec{\Bdec}$
  s.t.\ 
  $\Sigma%
  ; \cdot \entails v : B$
\\
  and
  $\Sigma%
  ; \cdot \entails \theta : \Gamma'$
  where
  $\Sigma%
  \entails B \type$
  and $\Sigma \entails B \subtype A$.
\end{restatable}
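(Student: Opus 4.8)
The plan is to proceed by induction on the derivation of the matching judgment $\vmatch{p}{v}{\theta}$, which is driven by the structure of the pattern $p$; the case analysis then mirrors the clauses defining $\intersectx$ (\Figureref{fig:type-pattern-intersection}). In each case I must exhibit a track $\intrsctresult{\Sigma'}{\Gamma'}{B} \in \vec{\Bdec}$ and discharge all four obligations: $\Sigma; \cdot \entails v : B$, the substitution typing $\Sigma; \cdot \entails \theta : \Gamma'$, well-formedness $\Sigma \entails B \type$, and subtyping $\Sigma \entails B \subtype A$.

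The purely structural cases are routine. For $p = \wildcard$ the only track is $\intrsctresult{\cdot}{\cdot}{A}$ and $\theta$ is empty, so taking $B = A$ makes $\Sigma; \cdot \entails v : B$ and $\Sigma \entails B \type$ immediate from the hypotheses, the substitution typing trivial, and $B \subtype A$ holds by reflexivity (Lemma \ref{lem:subtyping-reflexivity}). For $p = \emptypattern$ the case is vacuous, as no value matches the empty pattern. For $p = x \As p_0$ I invert the match into $\vmatch{p_0}{v}{\theta_0}$ with $\theta$ extending $\theta_0$ by $x \mapsto v$, apply the induction hypothesis to $p_0$ to obtain a track $\intrsctresult{\Sigma'}{\Gamma'}{B} \in \intersect{\Sigma}{A}{p_0}$, and select the augmented track $\intrsctresult{\Sigma'}{\Gamma', x : B}{B}$; its substitution typing holds because $\Sigma; \cdot \entails v : B$ from the hypothesis, and the remaining obligations carry over unchanged. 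For $p = p_1 \patunion p_2$, inverting the match reveals which disjunct matched, and the induction hypothesis on that disjunct yields a track lying in $\intersect{\Sigma}{A}{p_1} \union \intersect{\Sigma}{A}{p_2}$.

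The two interesting cases rely on the inversion (canonical forms) lemma, Lemma \ref{lem:inversion}. For $p = \Pair{p_1}{p_2}$, the match forces $v = \Pair{v_1}{v_2}$, and inversion on $\Sigma; \cdot \entails v : A$ (with $A = A_1 * A_2$) yields $\Sigma; \cdot \entails v_i : A_i$; I apply the induction hypothesis componentwise, combine the resulting tracks into $\intrsctresult{\Sigma_1, \Sigma_2}{\Gamma_1, \Gamma_2}{B_1 * B_2}$, re-type $v$ via \TypeProdI, derive well-formedness of the product from its components, and obtain the subtyping via \SubProd (the substitution typing being the concatenation of the two componentwise substitution typings). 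For $p = c(p_0)$ with $A = s$, the match forces $v = c(v_0)$, and inversion must supply a constructor typing $(c : A_c \arr s_c) \in \Sigma$ with $\Sigma \entails s_c \subsort s$ and $\Sigma; \cdot \entails v_0 : A_c$; applying the induction hypothesis to $p_0$ against $A_c$ produces a track whose type component is replaced by $s_c$, matching the $\intersectx$ clause for $c(p_0)$. I then re-type $v = c(v_0)$ at sort $s_c$ via \TypeDataI, obtain $\Sigma \entails s_c \type$ from well-formedness of $\Sigma$ together with the constructor typing, and get $s_c \subtype s$ from $s_c \subsort s$ via \SubSort.

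The main obstacle is the constructor case: extracting a usable constructor typing and a typing of $v_0$ from the derivation of $\Sigma; \cdot \entails c(v_0) : s$. The difficulty is that this derivation need not end in \TypeDataI---it may end in a chain of subsumptions \TypeSub---so the inversion must peel off subsumption to recover a \ConArr-style typing $(c : A_c \arr s_c) \in \Sigma$ with $s_c \subsort s$. This is exactly the content Lemma \ref{lem:inversion} must deliver, and it is where the hypothesis $\sigjudg{\Sigma}$ earns its keep: well-formedness guarantees that the inversion principle read off from $\Sigma$---\emph{every value of sort $s$ arises from a constructor whose codomain is a subsort of $s$}---is valid, which is precisely the invariant that the signature-well-formedness conditions of \Figureref{fig:wf} are designed to maintain.
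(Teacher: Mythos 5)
Your proposal is correct and follows essentially the same route as the paper: structural induction on the pattern (equivalently, on the match derivation, since matching is syntax-directed in $p$), with the same case split mirroring the clauses of $\intersectx$, the same use of the canonical-forms lemma (Lemma \ref{lem:inversion}) in the pair and constructor cases, and the same reassembly of the track via \TypeProdI/\SubProd and \TypeDataI/\SubSort. The only detail you gloss over — needing \TypeSub to lift $v_0 : B_0$ to $v_0 : A_c$ before applying \TypeDataI in the constructor case — is exactly how the paper handles it, so there is no gap.
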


The preservation result allows for a longer signature, to model entering the scope of
a $\xdeclare$ expression or the arms of a \textkw{match}.
We implicitly assume that, in the given typing derivation, all types
are well-formed under the local signature: for any
subderivation of $\Sigma; \Gamma \entails e' : B$, it is the case that
$\Sigma \entails B \type$.

\begin{restatable}[Preservation]{theorem}{thmpreservation}
\XLabel{thm:preservation}
  If $\sigjudg{\Sigma}$
  and $\Sigma; \cdot \entails e : A$
  and $e \step e'$
  \\
  then there exists $\Sigma'$ such that
  $\Sigma, \Sigma' \entails e' : A$
  where $\sigjudg{(\Sigma, \Sigma')}$.
\end{restatable}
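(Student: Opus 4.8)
The plan is to proceed by induction on the derivation of $\Sigma; \cdot \entails e : A$, with a case analysis on the reduction $e \step e'$ inside each case. The only rule that does not pin down the head of $e$ is subsumption: if the derivation ends in \TypeSub, with $\Sigma; \cdot \entails e : A'$ and $\Sigma \entails A' \subtype A$, I would apply the induction hypothesis to obtain $\Sigma_0$ with $\Sigma, \Sigma_0 \entails e' : A'$ and $\sigjudg{(\Sigma, \Sigma_0)}$, weaken the subtyping to $\Sigma, \Sigma_0 \entails A' \subtype A$ via Lemma \ref{lem:weakening-low}, and re-apply \TypeSub to conclude $\Sigma, \Sigma_0 \entails e' : A$. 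Every other rule fixes the top-level form of $e$, so the reduction is either a congruence step into a subterm or a top-level contraction.

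For the congruence steps --- reducing a subterm of an application $e_1\,e_2$, a pair, a constructor argument $c(e)$, or the scrutinee of a \textkw{case} --- the recipe is uniform: apply the induction hypothesis to the reducing subterm to get an extension $\Sigma_0$ and a typing under $(\Sigma, \Sigma_0)$, then rebuild the term. Rebuilding requires lifting the typings of the \emph{non}-reducing subterms, and any auxiliary judgments such as $\Sigma \entails c : A \arr s$ (for \TypeDataI) or the match typing of the remaining arms (for \TypeDataE), from $\Sigma$ to $\Sigma, \Sigma_0$. This lifting is exactly Weakening (\Theoremref{thm:weakening}, together with Lemma \ref{lem:weakening-low} for the auxiliary judgments), applied with the outer signature empty so that $\Sigma_0$ is appended at the end; the side conditions come from the induction hypothesis and the freshness of the sorts declared in $\Sigma_0$. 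The reported extension is $\Sigma_0$.

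For the top-level contractions there are three interesting cases. \emph{(Beta.)} For $(\Lam{x} e_0)\,v \step [v/x]e_0$, with the application typed by \TypeArrE from $\Sigma \entails \Lam{x} e_0 : A_2 \arr A$ and $\Sigma \entails v : A_2$, I would invert the typing of the value $\Lam{x} e_0$ at an arrow type using canonical forms (Lemma \ref{lem:inversion}) to recover a body typing suitable for substitution, and then conclude $\Sigma; \cdot \entails [v/x]e_0 : A$ by value substitution (Lemma \ref{lem:value-substitution}); the extension is empty. \emph{(Declare.)} For $\declare{\Sigma'}{e_0}$, the \TypeDeclare premises already supply $\sigjudg{(\Sigma, \Sigma')}$, $\Sigma \entails A \type$, and $\Sigma, \Sigma'; \cdot \entails e_0 : A$; since the declaration merely brings $\Sigma'$ into scope and the reduction discharges the wrapper, I simply take the reported extension to be $\Sigma'$ itself. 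This is precisely where the existential ``there exists $\Sigma'$'' in the conclusion is used nontrivially, modelling entry into the scope of the declaration. \emph{(Case.)} For $\Case{v}{ms} \step e'$ via $ms \step_v e'$, suppose the scrutinee has type $A_0$ and the chosen arm $\Match{p_j}{e_j}$ satisfies $\vmatch{p_j}{v}{\theta}$ with $e' = [\theta]e_j$. I would run an inner induction walking the match-stepping derivation in lockstep with the \TypeMs derivation, maintaining the invariant that $v$ lies in the value space of the current residual pattern $p$; each arm that fails to match advances the residual to $p \patsect \patcomplement p_i$ while preserving the invariant, using that $v$ did not match $p_i$. When arm $j$ matches, the invariant plus $\vmatch{p_j}{v}{\theta}$ give that $v$ matches $p \patsect p_j$, so the Intersect theorem (\Theoremref{thm:intersect}) yields a track $\intrsctresult{\Sigma'}{\Gamma'}{B} \in \intersect{\Sigma}{A_0}{p \patsect p_j}$ with $\Sigma; \cdot \entails \theta : \Gamma'$; this is one of the tracks against which the ``for all'' premise of \TypeMs already checked $e_j$, so $\Sigma; \Gamma' \entails e_j : A$, and multiple value substitution (Lemma \ref{lem:multiple-value-substitution}) gives $\Sigma; \cdot \entails [\theta]e_j : A$ with empty extension.

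The main obstacle I anticipate is the \textkw{case} contraction, and within it the bookkeeping that aligns the matching evidence $\theta$ with the tracks produced by $\intersectx$. The delicate point is the residual-pattern invariant: I must show that failing the earlier arms keeps $v$ inside the residual $p$, and that matching $p_j$ together with membership in $p$ yields a match of the combined pattern $p \patsect p_j$ whose substitution agrees with $\theta$ on the variables of $\Gamma'$, so that \Theoremref{thm:intersect} lands in exactly the track set that \TypeMs inspected. Establishing this requires the lemmas about $\patcomplement$ and $\patsect$ and their interaction with $\vmatch$ (from the appendix), and is where the argument departs from the otherwise routine congruence and substitution reasoning.
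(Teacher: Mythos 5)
Your proposal is correct and follows essentially the same route as the paper: induction on the typing derivation, with \TypeSub handled by weakening the subtyping, congruence cases by the induction hypothesis plus Weakening, beta by canonical forms (Lemma \ref{lem:inversion}) and value substitution, \TypeDeclare reporting $\Sigma'$ as the extension, and the case contraction resolved via pattern intersection, \Theoremref{thm:intersect}, and multiple value substitution. The only packaging difference is that the paper factors your ``inner induction'' on the match-stepping derivation out into a separate Match Preservation lemma (Lemma \ref{lem:match-preservation}), proved by induction on the match typing derivation while maintaining exactly the residual-pattern invariant $\vmatch{p}{v}{\theta}$ that you describe.
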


\vspace{-2ex}

\begin{restatable}[Progress]{theorem}{thmprogress}
\XLabel{thm:progress}
 ~\\
  If $\sigjudg{\Sigma}$ and $\Sigma; \cdot \entails e : A$
  then $e$ is a value or there exists $e'$ such that $e \step e'$.
\end{restatable}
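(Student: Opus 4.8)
The plan is to proceed by induction on the derivation of $\Sigma; \cdot \entails e : A$, with the context fixed as empty throughout (so \TypeVar is vacuous, having no closed instance). Most cases are routine. The pure value-introduction cases—\TypeArrI, \TypeUnitI, and \TypeSectI (whose subject is syntactically a value)—conclude immediately that $e$ is a value. For subsumption \TypeSub, the conclusion follows directly from the induction hypothesis on the premise, since stepping and value-hood do not depend on the type. The congruence cases—\TypeProdI, \TypeDataI, \TypeArrE, and the scrutinee position of \TypeDataE—first apply the induction hypothesis to a subterm: if it steps, the whole term steps by embedding that step in the appropriate evaluation context $\E$; otherwise the subterm is a value, and for \TypeProdI and \TypeDataI this already makes $e$ a value.

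The two genuinely informative cases are application and case analysis. For \TypeArrE, once both $e_1$ and $e_2$ are values, I would appeal to the inversion/canonical-forms lemma (Lemma \ref{lem:inversion}) to conclude that a closed value of type $A \arr B$ is, after stripping any annotations, a $\lambda$-abstraction, so that the $\beta$-reduction rule of the operational semantics applies. The \TypeDeclare case is handled by applying the induction hypothesis under the extended signature $(\Sigma, \Sigma')$—which is well-formed by the rule's first premise, so the hypothesis of the theorem is met—and then either lifting a step of $e$ through the declare form or taking the step that discards the signature extension when $e$ is already a value.

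The main obstacle is the \TypeDataE case with a value scrutinee $v$: here progress is exactly exhaustiveness, and I must show that some arm of $ms$ actually matches $v$. I would isolate this in an \emph{exhaustiveness lemma} proved by induction on the match-typing derivation $\matchassn{\Sigma}{\cdot}{A}{p} \entails ms : D$, carrying the invariant that $v$ matches the residual pattern $p$ (initially $\wildcard$, which every value matches). In the \TypeMs case I would use a pattern-decomposition fact—that a value matching $p$ either matches the head pattern $p_1$ or matches the new residual pattern $p \patsect \patcomplement p_1$—to either conclude that the first arm fires (giving a step of $ms$) or to invoke the induction hypothesis on the tail with the smaller residual pattern. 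The \TypeMsEmpty case, where $ms$ is empty and $\intersect{\Sigma}{A}{p} = \emptyset$, is discharged by contradiction: by \Theoremref{thm:intersect}, a value of type $A$ that matches $p$ forces $\intersect{\Sigma}{A}{p}$ to be nonempty, so this case cannot arise while the invariant holds.

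The delicate points will be the two auxiliary facts feeding the exhaustiveness lemma. The reliance on \Theoremref{thm:intersect} is what makes the signature-well-formedness machinery pay off: it guarantees that a well-typed value matching the residual pattern is witnessed by a track, so the empty-match case is genuinely unreachable. The pattern-decomposition property—relating matching against $p$ to matching against $p_1$ and against $p \patsect \patcomplement p_1$—must be established at the level of the $\vmatch{\cdot}{\cdot}{\cdot}$ judgment from the definitions of complement and intersection in the appendix; I expect this, rather than the outer structural induction, to be where the real work lies.
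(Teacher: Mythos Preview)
Your plan is essentially the paper's: induction on the typing derivation, with the case work you describe, and a separate match-progress lemma (the paper's Lemma~\ref{lem:match-progress}) carrying the invariant that $v$ matches the residual pattern and using \Theoremref{thm:intersect} to rule out \TypeMsEmpty. The paper factors your ``pattern-decomposition fact'' into two pieces: a Choice lemma (Lemma~\ref{lem:pattern-choice}), which says that either $\vmatch{p_1}{v}{\theta_1}$, or $\novmatch{p_1}{v}$ \emph{and} $\vmatch{\patcomplement p_1}{v}{\cdot}$; and a pattern-intersection lemma (Lemma~\ref{lem:pattern-intersection}) to combine the latter with $\vmatch{p}{v}{\theta}$ into $\vmatch{p \patsect \patcomplement p_1}{v}{\theta'}$. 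Note that your decomposition as stated only gives a disjunction, but \StepElse also needs the negative fact $\novmatch{p_1}{v}$, which is why the paper's Choice lemma is phrased to deliver it.

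One correction on \TypeDeclare: there is no evaluation context of the form $\declare{\Sigma'}{\E}$ (see \Figureref{fig:step}), so you cannot ``lift a step of $e$ through the declare form.'' Fortunately the case is simpler than you propose: \StepDeclare has no premise, so $(\declare{\Sigma'}{e}) \step e$ fires unconditionally, and no induction hypothesis is needed at all. Your detour through the extended signature is unnecessary here (it is needed for preservation, not progress).
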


\vspace{-2.5ex}

\renewcommand{\Hand}{~}

\section{Bidirectional Typing}
\Label{sec:bidir}

The type assignment system in \Figureref{fig:typing} is not syntax-directed,
because
the rules \TypeSub and \TypeSectI apply to any shape of expression.
Nor is the system directed by the syntax of types: rule \TypeSub can conclude
$e : B$ for any type $B$ that is a supertype of some other type $A$.
Finally, while the choice to apply rule \TypeDataI is guided by the shape
of the expression---it must be a constructor application $c(e)$---the resulting
sort is not uniquely determined, since the signature can have multiple constructor
typings for $c$.

Fortunately, obtaining an algorithmic system is straightforward,
following previous work with datasort refinements and intersection types.
We follow the bidirectional typing recipe
of
\citet{Davies00icfpIntersectionEffects,DaviesThesis,Dunfield04:Tridirectional}:
\begin{enumerate}
\item Split the typing judgment into checking $\Sigma; \Gamma \entails e \chk A$
and synthesis $\Sigma; \Gamma \entails e \syn A$ judgments.  In the checking judgment,
the type $A$ is input (it might be given via type annotation);
in the synthesis judgment, the type $A$ is output.

\item Allow change of direction: Change the subsumption rule to
  synthesize a type, then check if it is a subtype of a type being checked against;
  add an annotation rule that checks $e$ against $A$ in the annotated expression $(e : A)$.

\item In each introduction rule, \eg \TypeArrI, make the conclusion a checking judgment;
     in each elimination rule, \eg \TypeDataE, make the premise
     that contains the eliminated connective a synthesis judgment.  

\item Make the other judgments in the rules either checking or synthesizing,
  according to what information is available.  For example, the premise of
  \TypeArrI becomes a checking judgment, because we know $B$ from
  the conclusion.

 \item Since the subsumption rule cannot synthesize, add rules such as \SynSectE{1},
   which were admissible in the type assignment system.
\end{enumerate}
This yields the rules in \Figureref{fig:bidir}.  (Rules for the match typing
judgment $\matchassn{\Sigma}{\Gamma}{A}{p} \entails ms \chk B$
can be obtained from \Figureref{fig:pattern-typing}
by replacing ``:'' in ``$e_1 : D$'' and ``$ms : D$'' with ``$\chk$''.)
While this system is much more algorithmic than \Figureref{fig:typing},
the presence of intersection types requires backtracking: if we apply a function of
type $(\Even \arr \Odd) \sectty (\Odd \arr \Even)$, we need to synthesize
$\Even \arr \Odd$ first; if we subsequently fail (\eg if the argument has type $\Odd$),
we backtrack and try
$\Odd \arr \Even$.  Similarly, if the signature contains several typings for a constructor
$c$, we may need to try rule \ChkDataI with each typing.

\begin{figure}[t]
  \centering

  \judgbox{
       \normalsize\arrayenvbl{
         \Sigma; \Gamma \entails e \chk A
         ~~~~
         \Sigma; \Gamma \entails e \syn A
       }
     }
      {%
        Expr.\ $e$ checks against ($\chk$) / synthesizes ($\syn$) type $A$
      ~}
  \vspace{-1.3ex}
  \begin{mathpar}
    \Infer{\!\SynVar}
         { }
         {\Sigma; \Gamma, x : A, \Gamma' \entails x \syn A}
    ~~
    \Infer{\!\ChkSub}
         {\arrayenvbl{\Sigma; \Gamma \entails e \syn A
             \\
           \Sigma \entails A \subtype B
           }
         }
         {\Sigma; \Gamma \entails e \chk B}
    ~~
     \Infer{\!\SynAnno}
          {%
              A \in \Avec
              \\
              \Sigma; \Gamma \entails e \chk A
            }
          {\Sigma; \Gamma \entails \anno{e}{\Avec} \syn A}
    \\
    \Infer{\!\ChkArrI}
         {\Sigma; \Gamma, x : A \entails e \chk B
         }
         {\Sigma; \Gamma \entails (\Lam{x} e) \chk (A \arr B)}
    ~~~~
    \Infer{\!\SynArrE}
         {%
           \Sigma; \Gamma \entails e_1 \syn (A \arr B)
           \\
           \Sigma; \Gamma \entails e_2 \chk A
         }
         {\Sigma; \Gamma \entails e_1\,e_2 \syn B}
    \\
    \Infer{\!\ChkSectI}
         {%
             \Sigma; \Gamma \entails v \chk A_1
             \\
             \Sigma; \Gamma \entails v \chk A_2
           }
         {\Sigma; \Gamma \entails v \chk (A_1 \sectty A_2)}
    ~~~~
     \Infer{\!\SynSectE{k}}
          {\Sigma; \Gamma \entails e \syn (A_1 \sectty A_2)
            \\
            k \in \{1, 2\}
          }
          {\Sigma; \Gamma \entails e \syn A_k}
    \\
        \Infer{\!\ChkProdI}
             {\Sigma; \Gamma \entails e_1 \chk A_1
              \!\!\\\!\!
              \Sigma; \Gamma \entails e_2 \chk A_2
             }
             {\Sigma; \Gamma \entails \Pair{e_1}{e_2} \chk A_1 * A_2}
     ~~~~
     \Infer{\!\ChkUnitI}
             {}
             {\Sigma; \Gamma \entails \unit \chk \unitty}
     \\
     \Infer{\!\ChkDataI}
          {
              \Sigma \entails c : A \arr s
              ~~~
              \Sigma; \Gamma \entails e \chk A
          }
          {\Sigma; \Gamma \entails c(e) \chk s}
     ~~
     \Infer{\!\ChkDataE}
         {
             \Sigma; \Gamma \entails e \syn A
             ~~~~
             \matchassn{\Sigma}{\Gamma}{A}{\wildcard}
             \entails
             ms \chk B
         }
         {\Sigma; \Gamma \entails (\Case{e}{ms}) \chk B}
     \\
     \Infer{\ChkDeclare}
          {
                \sigjudg{(\Sigma, \Sigma')}
            \\
                \Sigma \entails B \type
                \\
                \Sigma, \Sigma'; \Gamma \entails e \chk B
          }
          {\Sigma; \Gamma \entails (\declare{\Sigma'}{e}) \chk B}
  \vspace{-2.3ex}
  \end{mathpar}
  
\caption{Bidirectional typing rules}
\FLabel{fig:bidir}
\end{figure}

Type-checking for this system is almost certainly PSPACE-complete~\citep{Reynolds96:Forsythe};
however, the experience of \citet{DaviesThesis} shows that a similar system, differing
primarily in whether the signature can be extended, is practical if certain techniques,
chiefly memoization, are used.

Using these rules, annotations are required exactly on
(1) the entire program $e$ (if $e$ is a checked form, such as a $\lambda$)
and (2) expressions not in normal
form, such as a $\lambda$ immediately applied to an argument, a recursive
function declaration, or a let-binding (assuming the rule for \textkw{let}
synthesizes a type for the bound expression).
Rules with ``more synthesis''---such as a synthesizing version of \TypeProdI---could be added along the lines of previous bidirectional type systems~\citep{XiThesis,Dunfield13}.

Following \citet{DaviesThesis}, an annotation can list several types $\vec{A}$.
Rule \SynAnno chooses one of these, backtracking if necessary.
Multiple types may be needed if a $\lambda$-term is checked against intersection type:
when checking $(\Lam{x} x)$ against 
$(\Even \arr \Even) \sectty (\Odd \arr \Odd)$,
the type of $x$ will be $\Even$ inside the left subderivation of \ChkSectI, but $\Odd$ inside the
right subderivation.  
Thus, if we annotate $x$ with $\Even$, the check against $\Odd \arr \Odd$ fails;
if we annotate $x$ with $\Odd$, the check against $\Even \arr \Even$ fails.
For a less contrived example, and for a variant annotation form %
that reduces backtracking, see \citet{Dunfield04:Tridirectional}.

In the appendix, we prove that our bidirectional system is sound and complete with
respect to our type assignment system:

\begin{restatable}[Bidirectional soundness]{theorem}{thmbidirsoundness}
\XLabel{thm:bidir-soundness}
  If
  $\Gamma \entails e \chk A$
  or 
  $\Gamma \entails e \syn A$
  then
  $\Gamma \entails |e| : A$
  where $|e|$ is $e$ with all annotations erased.
\end{restatable}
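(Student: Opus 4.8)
The plan is to proceed by mutual induction on the structure of the two bidirectional typing derivations, that is, on the derivation of $\Sigma; \Gamma \entails e \chk A$ and $\Sigma; \Gamma \entails e \syn A$ simultaneously. The theorem statement asserts that from a bidirectional derivation we can produce a type assignment derivation $\Gamma \entails |e| : A$ (with the signature $\Sigma$ left implicit in the usual way). Since the bidirectional rules were engineered by annotating the type assignment rules with directions and by adding the annotation and $\sectty$-elimination rules, most cases should follow by applying the induction hypothesis to each premise and then invoking the analogous type assignment rule from Figure \ref{fig:typing}.

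First I would dispatch the structural rules. For the introduction rules (\ChkArrI, \ChkProdI, \ChkUnitI, \ChkDataI, \ChkSectI) and the elimination rules (\SynArrE, \ChkDataE), I apply the induction hypothesis to each subderivation and conclude via the corresponding type assignment rule (\TypeArrI, \TypeProdI, \TypeUnitI, \TypeDataI, \TypeSectI, \TypeArrE, \TypeDataE). The variable rule \SynVar maps directly to \TypeVar. The key observation making these go through is that annotation erasure commutes with the term constructors, so $|\Lam{x}e| = \Lam{x}|e|$, $|c(e)| = c(|e|)$, and so forth; the type assignment rule then applies to the erased subterms. Rule \ChkDeclare maps to \TypeDeclare: its signature-well-formedness and $\type$ premises carry over unchanged, and the body is handled by the induction hypothesis under the extended signature $(\Sigma, \Sigma')$.

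The \emph{direction-change} rules require a little more care but no real difficulty. For \ChkSub, I apply the induction hypothesis to obtain $\Gamma \entails |e| : A$, then use the subtyping premise $\Sigma \entails A \subtype B$ together with \TypeSub to conclude $\Gamma \entails |e| : B$. For \SynSectE{k}, I apply the induction hypothesis to get $\Gamma \entails |e| : A_1 \sectty A_2$ and then appeal to the admissible intersection-elimination rule (noted in Section \ref{sec:subtyping} as derivable from \TypeSub and \SubSectL{k}) to obtain $\Gamma \entails |e| : A_k$. The annotation rule \SynAnno is where erasure does interesting work: since $|\anno{e}{\Avec}| = |e|$, the annotation simply vanishes, and the induction hypothesis on the checking premise $\Sigma; \Gamma \entails e \chk A$ directly yields $\Gamma \entails |e| : A$. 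The match-typing judgment is handled in a parallel clause of the induction, where each \TypeMs premise is threaded through the induction hypothesis; since the bidirectional match rules differ from Figure \ref{fig:pattern-typing} only by replacing ``$:$'' with ``$\chk$'', and the $\intersectx$ function is unchanged, these cases are essentially immediate.

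\textbf{The main obstacle} I anticipate is bookkeeping rather than conceptual: getting the mutual induction to cover expression typing, match typing, and (implicitly) substitution typing in lockstep, and verifying in each case that annotation erasure commutes correctly with the syntactic form so that the targeted type assignment rule genuinely applies to $|e|$ rather than to $e$. A secondary subtlety is that the bidirectional \SynSectE{k} and \SynAnno rules have no directly corresponding single rule in the type assignment system, so I must rely on \emph{admissibility} results from Section \ref{sec:subtyping} rather than on primitive rules; I would state the required admissible intersection-elimination principle explicitly as a preliminary before running the induction, so that the \SynSectE{k} case is a one-line appeal. Everything else should reduce to a routine case analysis once these two supporting observations are in place.
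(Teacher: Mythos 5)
Your proposal is correct and matches the paper's proof, which likewise proceeds by induction on the given derivation, handles \SynAnno by noting that erasure discards the annotation, and dispatches every other case by applying the induction hypothesis and the corresponding type assignment rule (with \SynSectE{k} relying on the admissibility of intersection elimination via \TypeSub and \SubSectL{k}, exactly as you describe). Your version is simply more explicit about the direction-change cases than the paper's terse write-up.
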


\vspace*{-1.5ex}

\begin{restatable}[Annotatability]{theorem}{thmannotatability}
\XLabel{thm:annotatability}
  If
  $\Gamma \entails e : A$
  then:
  \vspace{-1.2ex}
  \begin{enumerate}[(1)]
  \item 
    There exists $e_{\chk}$ such that $|e_{\chk}| = e$
    and $\Gamma \entails e_{\chk} \chk A$.
  \item 
    There exists $e_{\syn}$ such that $|e_{\syn}| = e$
    and $\Gamma \entails e_{\syn} \syn A$.
  \end{enumerate}
\end{restatable}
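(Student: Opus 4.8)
The plan is to prove part (1) by induction on the type-assignment derivation $\Gamma \entails e : A$, and then obtain part (2) as an immediate corollary: given $e_{\chk}$ with $|e_{\chk}| = e$ and $\Gamma \entails e_{\chk} \chk A$, the singleton annotation $\anno{e_{\chk}}{A}$ synthesizes $A$ by rule \SynAnno and erases to $e$. Consequently, wherever the induction for (1) needs a \emph{synthesizing} subterm --- namely in the principal premises of the elimination rules \TypeArrE and \TypeDataE --- I apply the induction hypothesis for (1) to the subderivation and wrap the result in a singleton annotation. I prove the analogous statement for match typing by a simultaneous inner induction: from $\matchassn{\Sigma}{\Gamma}{A}{p} \entails ms : D$ I produce $ms^{\chk}$ with $|ms^{\chk}| = ms$ and $\matchassn{\Sigma}{\Gamma}{A}{p} \entails ms^{\chk} \chk D$, where the \TypeMs case invokes (1) on each arm body and carries the residual-pattern and $\intersectx$ premises over unchanged.

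Most cases are routine congruences. The introduction rules map to their checking counterparts (\TypeArrI to \ChkArrI, \TypeProdI to \ChkProdI, \TypeUnitI to \ChkUnitI, \TypeDataI to \ChkDataI), applying the induction hypothesis (1) to each premise; note that the constructor-typing premise $\Sigma \entails c : A \arr s$ is literally shared between the two systems. Rule \TypeVar becomes \SynVar followed by \ChkSub at a reflexive subtyping. The elimination rules synthesize their principal premise by the wrapping trick above, check the remaining premises by (1), conclude with \SynArrE or \ChkDataE (the latter also invoking the match sub-induction), and, where a checking form is wanted, finish with \ChkSub at reflexivity. Rule \TypeSub is handled by synthesizing $e$ (via wrapping at type $A$) and applying \ChkSub with the premise $\Sigma \entails A \subtype B$. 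Rule \TypeDeclare maps to \ChkDeclare, invoking (1) on the body under the extended signature $(\Sigma, \Sigma')$ and reusing the unchanged well-formedness and $\Sigma \entails B \type$ premises.

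\textbf{The crux is the intersection-introduction rule \TypeSectI.} Here the hypothesis (1) yields two annotated values $v_1$ and $v_2$, both erasing to the same $v$, with $\Gamma \entails v_1 \chk A_1$ and $\Gamma \entails v_2 \chk A_2$. To apply \ChkSectI I must exhibit a \emph{single} annotated term erasing to $v$ that checks against both $A_1$ and $A_2$. The device is the list-valued annotation $\anno{e}{\Avec}$: I define a \emph{merge} of $v_1$ and $v_2$ that at each annotation site takes the concatenation of the two type lists. For this to be well defined, $v_1$ and $v_2$ must share an annotation \emph{skeleton} --- the same subterms annotated in both --- so the lists line up. The only source of divergence is \TypeSub, whose application is a feature of the derivation rather than of $v$; I therefore strengthen the construction so that annotations sit at a canonical set of positions fixed by the erasure and rich enough that subsumption is recorded by \emph{extending a governing list} (composing with the latent synthesis type via transitivity of subtyping, Lemmas~\ref{lem:subtyping-reflexivity}--\ref{lem:subtyping-transitivity}) rather than by inserting a position-dependent wrapper. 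The two values then differ only in their lists, the merge concatenates pointwise, and a short monotonicity lemma --- extending any annotation list preserves every checking and synthesis judgment, since \SynAnno tests only membership $A \in \Avec$ and no rule tests non-membership --- shows the merged term still checks against $A_1$ (inheriting $v_1$'s choices) and against $A_2$ (inheriting $v_2$'s), whence \ChkSectI concludes $\chk A_1 \sectty A_2$.

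The main obstacle is exactly this merge, and specifically the shared-skeleton invariant: without list annotations a single term cannot satisfy both branches, as the excerpt's example of checking $\Lam{x} x$ against $(\Even \arr \Even) \sectty (\Odd \arr \Odd)$ shows. The remaining work is the bookkeeping supporting the merge --- the list-extension monotonicity lemma, the canonical-placement invariant maintained through every case (most delicately through \TypeSub, where subsumption must be absorbed into an already-present list), and the parallel match-typing induction --- all of which are routine once the skeleton invariant is fixed, after which part (2) follows by the top-level wrapping described above.
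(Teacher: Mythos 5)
Your proof follows the paper's own argument: induction on the type-assignment derivation with congruence cases, synthesis recovered from checking by wrapping in a singleton annotation via \SynAnno, and the crux at \TypeSectI resolved by merging the two annotated terms through pointwise union of the list-valued annotations before applying \ChkSectI. The paper states the merge more casually---it simply takes ``all annotations from $e^1_{\chk}$ and $e^2_{\chk}$,'' without your canonical-placement invariant or the explicit list-extension monotonicity lemma---so your extra machinery is a tightening of the same idea (and addresses a real subtlety the paper glosses over, namely that the two subderivations may place annotations at different positions) rather than a different route.
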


We also prove that the $\syn$ and $\chk$ judgments are decidable
(\Theoremref{thm:decidability}).

\vspace*{-2.5ex}

\section{Related Work}
\Label{sec:related}

\noskipmypara{Datasort refinements.}
\citet{Freeman91} %
introduced datasort refinements with intersection types,
defined the refinement restriction (where $A \sectty B$ is well-formed only if
$A$ and $B$ are refinements of the same type), and developed an inference algorithm
in the spirit of abstract interpretation.  As discussed earlier, the lack
of annotations not only makes the types difficult to see, but makes inference prone
to finding long, complex types that include accidental invariants.

\citet{DaviesThesis}, building on the type system developed by
\citet{Davies00icfpIntersectionEffects}, used a bidirectional typing algorithm,
guided by annotations on redexes.
This system supports parametric polymorphism through a front end based on
Damas--Milner inference, but---like Freeman's system---does not support extensible
refinements. %
Davies's CIDRE implementation~\citep{DaviesSMLCIDREnew} goes beyond
his formalism by allowing a single type to be refined via multiple declarations,
but this has no formal basis; CIDRE appears to simply gather the multiple declarations
together, and check the entire program using the combined declaration, even when
this violates the expected scoping rules of SML declarations.

Datasort refinements were combined with union types and indexed types by
\citet{Dunfield03:IntersectionsUnionsCBV,Dunfield04:Tridirectional}, who
noticed the expressive power of nominal subsorting, called ``invaluable refinement''~\citep[pp.\ 113, 220--230]{DunfieldThesis}.

Giving multiple refinement declarations for a single datatype was
mentioned early on, as future work:
``embedded refinement type declarations'' \citep[p.\ 275]{Freeman91};
``or even \dots declarations that have their scope limited'' \citep[p.\ 167]{FreemanThesis};
``it does seem desirable to be able to make local datasort declarations'' \citep[p.\ 245]{DaviesThesis}.
But the idea seems not to have been pursued.

\mypara{Logical frameworks.}
In the logical framework LF \citep{Harper93}, data is characterized
by declaring constructors with their types.  In this respect, our system is closer
to LF than to ML: %
LF doesn't require all of a type's constructors to be declared together.
By itself, LF has no need for inversion principles.  However,
systems such as Twelf~\citep{Pfenning99:Twelf}, Delphin~\citep{Poswolsky08:Delphin}
and Beluga~\citep{Pientka10:Beluga} use LF as an object-level language but also
provide meta-level features.  One such feature is coverage (exhaustiveness) checking,
which needs inversion principles for LF types.
Thus, these systems mark a type as \emph{frozen} when its inversion principle is
applied (to process \textvtt{\%covers} in Twelf, or a case expression in Beluga);
they also allow the user to mark types as frozen.
These systems lack subtyping and subsorting; once a type is frozen,
it is an error to declare a new constructor for it. %

\citet{LovasThesis} extended LF with refinements and subsorting, and developed
a constraint-based algorithm for signature checking.
This work did not consider meta-level features such as coverage checking, so it yields no
immediate insights about inversion principles or freezing.
Since Lovas's system takes the subsorting relation directly from declarations, rather than
by inferring it from a grammar, it supports what \citet{DunfieldThesis} called
invaluable refinements; see Lovas's example~\citep[pp.\ 145--147]{LovasThesis}.

\mypara{Indexed types and refinement types.}
As the second generation of datasort refinements (exemplified by the work of Davies and Pfenning)
began, so did a related approach to lightweight type-based verification: \emph{indexed types}
or \emph{limited dependent types}~\citep{Xi99popl,XiThesis},  %
in which datatypes are refined by indices drawn from a (possibly infinite) constraint domain.
Integers with linear inequalities are the standard example of an
index domain; another good example is physical units or dimensions~\citep{Dunfield07:Stardust}.
More recent work in this vein, such as liquid types~\citep{Rondon08}, uses ``refinement types''
for a mechanism close to indexed types.%

Datasort refinements have always smelled like a special case of indexed types.
At the dawn of indexed types (and the second generation
of datasort refinements), the relationship was obscured by datasorts' ``fellow traveller'',
intersection types, which were absent from the first indexed type systems, and remain absent
from the approaches now called ``refinement types''.  That is, while datasorts themselves
strongly resemble a specific form of indices---albeit related by a partial order (subtyping),
rather than by equality---and would thus suggest that indexed type systems subsume 
datasort refinement type systems, the inclusion of intersection types confounds such a comparison.
Intersection types \emph{are} present, along with both datasorts and indices,
in \citet{Dunfield03:IntersectionsUnionsCBV}
and \citet{DunfieldThesis};  the relationship is less obscured.  But no one has given an encoding
of types with datasorts into types with indices, intersections or no.

The focus of this paper is a particular kind of \emph{extensibility} of datasort refinements,
so it is natural to ask whether indexed types and (latter-day) refinement types have anything
similar.  Indexed types are not immediately extensible: both Xi's DML and Dunfield's Stardust require
that a given datatype be refined exactly once.  Thus, a particular list type may carry its length,
or the value of its largest element, or the parity of its boolean elements.  By refining the type with a tuple
of indices, it may also carry combinations of these, such as its length \emph{and} its largest
element.  Subsequent uses of the type can leave out some of the indices, but the combination
must be stated up front.

However, some of the approaches descended from DML, such as liquid types, allow
refinement with a predicate that can mention
various attributes.  These attributes are declared separately
from the datatype; adding a new attribute does not invalidate existing code.
Abstract refinement types~\citep{Vazou13} even allow types to quantify over predicates.

Setting aside extensibility, datasort refinements %
can
express certain invariants more clearly and succinctly than indexed types (and
their descendants).  %
\mypara{Program analysis.}
\citet{Koot15} formulate a type system that analyzes where exceptions can be raised, including
match exceptions raised by nonexhaustive case expressions.  This system appears to be less
precise than datasorts, but has advantages typical to program analysis: no type annotations
are required.

\section{Future Work}
\Label{sec:future}

\paragraph{Modular refinements.}
This paper establishes a critical mechanism for extensible refinements,
safe signature extension, in the setting of a core language without
modules: refinements are lexically scoped.
To scale up to a language with modules, we need to ask:
what notions of scope are appropriate?
For example, a strict $\lambda$-calculus interpreter
could be refined with a sort \tyname{val} of values, while a lazy
interpreter could be refined with a sort \tyname{whnf}
of terms in weak head normal form.  If every \tyname{val}
is a \tyname{whnf}, we might want to have $\tyname{val} \subsort \tyname{whnf}$.
In the present system, these two refinements could be in separate
\keyword{declare} blocks; in that case, \tyname{val} and \tyname{whnf}
could not both be in scope, and the subsorting is not well-formed.
Alternatively, one \keyword{declare} block could be nested inside the other.
In that case, $\tyname{val} \subsort \tyname{whnf}$ could be given in the nested
block, since it would not add new subsortings within the outer refinement.
In a system with modules, we would likely want to
have $\tyname{val} \subsort \tyname{whnf}$,
at least for clients of \emph{both} modules;
such backpatching is currently not allowed, but should be safe
since the new subsorting crosses two independent signature blocks
(the block declaring \tyname{val} and the block declaring \tyname{whnf})
without changing the subsortings within each block.

\mypara{Type polymorphism.}
Standard parametric polymorphism is absent in this paper, but
it should be feasible to follow the approach of
\citet{DaviesThesis}, as long as the unrefined datatype declarations
are not themselves extensible (which would break signature
well-formedness, even without polymorphism).

\mypara{Datasort polymorphism.}
Extensible signatures open the door to sort-bounded polymorphism.  In our current system,
a function that iterates over an abstract syntax tree and $\alpha$-renames free variables---which
would conventionally have the type $\Exp \arr \Exp$---must be duplicated,
even though the resulting tree has the same shape and the same constructors,
and therefore should always produce a tree of the same sort as the input tree
(at least, if the free variables are not specified with datasorts).  We would like
the function to check against a polymorphic type
$\BAll{\alpha}{\Exp} \alpha \arr \alpha$, which works for any
sort $\alpha$ below $\Exp$.

We would like to
reason ``backwards'' from a pattern match over a polymorphic sort variable
$\alpha$.  For example, if a value of type $\alpha$ matches the pattern
$\Plus(x1, x2)$, then we know that $\Plus : (\alpha_1 * \alpha_2) \arr \alpha$ for some
sorts $\alpha_1$ and $\alpha_2$.
The recursive calls on $x1$ and $x2$ must preserve the property of
being in $\alpha_1$ and $\alpha_2$, so $\Plus(f~x1, f~x2)$
has type $\alpha$, as needed.
The mechanisms we have developed may be a good foundation
for adding sort-bounded polymorphism: the $\intersectx$ function
would need to return a signature, as well as a context and type,
so that the constructor typing $\Plus : (\alpha_1 * \alpha_2) \arr \alpha$
can be made available.

\mypara{Implementation.}
Currently, we have a prototype of a few pieces of the system, including
a parser and implementations of the $\sigjudg{\Sigma}$ judgment
and the $\intersectx$ function.  Experimenting with these pieces was helpful
during the design of the system (and reassured us that the most novel
parts of our system can be implemented), but they fall short of a usable
implementation.

\ifnum\OPTIONConf=0
  \addtolength{\bibsep}{-2.0pt}
\fi
\ifnum\OPTIONConf=1
  \addtolength{\bibsep}{-0.5pt}
\fi
\ifnum\OPTIONConf=2
  \addtolength{\bibsep}{2.0pt}
\fi

\bibliographystyle{plainnat}
\bibliography{intcomp}

\clearpage
\appendix
\section{Omitted Definitions}
\Label{apx:defs}

\subsection{Pattern complement and intersection}

\begin{figure}[h]
  \judgboxa{\patcomplement p}{Complement of a pattern under unrefined signature $\Ursig$}
  \vspace{-3.0ex}
  \[
  \begin{array}[t]{r@{~~}c@{~~}lll}
      \patcomplement \wildcard &=& \emptypattern
  \\
      \patcomplement \emptypattern &=& \wildcard
  \\
      \patcomplement (x \As p)
         &=& 
         \patcomplement p
   \\
      \patcomplement \Pair{p_1}{p_2} 
          &=&
          \Pair{\patcomplement p_1}{\wildcard}
          \patunion
          \Pair{\wildcard}{\patcomplement p_2}
  \end{array}
  ~~
  \begin{array}[t]{r@{~~}c@{~~}lll}
       \patcomplement (p_1 \patunion p_2)
         &=&
         (\patcomplement p_1) \patsect (\patcomplement p_2)
  \\
      \patcomplement c(p)
         &=& 
         c(\patcomplement p)
         \patunion
         c_1(\wildcard) \patunion \dots \patunion c_n(\wildcard)
      \\
      \multicolumn{3}{r}{\text{where $\constructors{\Ursig} \setminus \{c\}  =  \{c_1, \dots, c_n\}$}}
  \end{array}
  \vspace*{-0.5ex}
  \]

  \judgboxa{p_1 \patsect p_2}{Intersection of patterns}
  \vspace*{-1.5ex}
  \[
  \begin{array}[t]{@{}r@{~}c@{~}lll@{}}
      \emptypattern \patsect p
        ~=~
        p \patsect \emptypattern
        &=&
        \emptypattern
  \\
      \wildcard \patsect p
        ~=~
        p \patsect \wildcard
        &=& p
  \\
      c(p_1) \patsect c(p_2)
          &=&
          c(p_1 \patsect p_2)
  \\
      c_1(p_1) \patsect c_2(p_2)
          &=&
          \emptypattern
          ~~~~
          \text{where $c_1 \neq c_2$}
  \\
      \Pair{p_1}{p_2} \patsect c(p)
      &=&  %
      c(p) \patsect \Pair{p_1}{p_2}
      ~=~ \emptypattern 
  \end{array}
  ~~~~
  \begin{array}[t]{@{}r@{~}c@{~}lll@{}}
      \multicolumn{3}{@{}l@{}}{\hspace*{-4ex}
          \Pair{p_{11}}{p_{12}}
            \patsect
            \Pair{p_{21}}{p_{22}}
            =
            \Pair{p_{11} \patsect p_{21}}
                 {p_{12} \patsect p_{22}}
      }
      \\
      (p_1 \patunion p_2) \patsect p
      &=&
          (p_1 \patsect p) \patunion (p_2 \patsect p)
      \\ &&
      ~\text{where $p \notin \{\emptypattern, \wildcard\}$}
      \\
      p \patsect (p_1 \patunion p_2)
      &=&
          (p \patsect p_1)
          \patunion
          (p \patsect p_2)
     \\ && ~
      \text{where $p \notin \{\emptypattern,\; \wildcard,\; \cdots \patunion \cdots\}$}
  \end{array}
  \]
  
  \lesscaptionspace
  \caption{Pattern complement and pattern intersection}
  \FLabel{fig:complement-intersection}
\end{figure}

\clearpage
\subsection{Evaluation contexts, matching, and stepping}

\begin{figure}[h!]
  \begin{grammar}
    Evaluation contexts
    & $\E$ & $\bnfas$ & $
        \hole
        \bnfalt
        \E\,e
        \bnfalt
        v\,\E
        \bnfalt
        c\,\E
        \bnfalt
        \Pair{\E}{e}
        \bnfalt
        \Pair{v}{\E}
        \bnfalt 
        \Case{\E}{ms}
   $        
  \end{grammar}
  \vspace*{-1.0ex}

  \judgbox{\vmatch{p}{v}{\theta}}{Value $v$ matches pattern $p$ by substitution $\theta$, \ie $[\theta]p = v$}
  \vspace*{-1.0ex}
  \begin{mathpar}
    \Infer{\MatchWild}
         {}
         {
           \vmatch{\wildcard}{v}{\emptysubst}
         }
    \and
    \Infer{\MatchAs}
         {
           \vmatch{p}{v}{\theta}
         }
         {
           \vmatch{x \As p}{v}{\theta, v/x}
         }
    \and
    \Infer{\MatchOr}
         {
           \vmatch{p_k}{v}{\theta}\text{~~~for some $k \in \{1, 2\}$}
         }
         {\vmatch{p_1 \patunion p_2}{v}{\theta}}
    \and
    \Infer{\MatchCon}
         {
           \vmatch{p}{v}{\theta}
         }
         {\vmatch{c(p)}{c(v)}{\theta}}
    \and
    \Infer{\MatchUnit}
         {
         }
         {\vmatch{\unit}{\unit}{\emptysubst}}
    ~~
    \Infer{\MatchPair}
         {
           \vmatch{p_1}{v_1}{\theta_1}
           \\
           \vmatch{p_2}{v_2}{\theta_2}
         }
         {\vmatch{\Pair{p_1}{p_2}}{\Pair{v_1}{v_2}}{(\theta_1, \theta_2)}}
  \end{mathpar}

  \smallskip

\judgbox{\novmatch{p}{v}}{Value $v$ does not match pattern $p$}
\vspace*{-1.5ex}
  \begin{mathpar}
    \Infer{\!\NomatchEmpty}
         {}
         {\novmatch{\emptypattern}{v}}
    ~~
    \Infer{\!\NomatchAs}
         {\novmatch{p}{v}}
         {\novmatch{x \As p}{v}}
    ~~
    \Infer{\!\NomatchOr}
         {
           \arrayenvbl{
           \novmatch{p_1}{v}
           \\
           \novmatch{p_2}{v}
           }
         }
         {\novmatch{p_1 \patunion p_2}{v}}
    \and
    \Infer{\NomatchConHead}
         {\text{there does not exist $v'$ such that $v = c(v')$}}
         {\novmatch{c(p)}{v}}
    \and
    \Infer{\NomatchConInner}
         {
           \novmatch{p}{v}
         }
         {\novmatch{c(p)}{c(v)}}
    \and
    \Infer{\NomatchUnit}
         {\text{$v \neq \unit$}}
         {\novmatch{\unit}{v}}
    \\
    \Infer{\NomatchPairHead}
         {\text{$v$ not a pair}}
         {\novmatch{\Pair{p_1}{p_2}}{v}}
    ~
    \Infer{\NomatchPairInner}
         {
           \novmatch{p_k}{v_k}\text{~~~for some $k \in \{1, 2\}$}
         }
         {\novmatch{\Pair{p_1}{p_2}}{\Pair{v_1}{v_2}}}
  \end{mathpar}

  \smallskip

  \judgbox{ms \step_v e'}{Matching $v$, \\ $ms$ step to $e'$}
  \vspace{-6.0ex}
  \begin{mathpar}
    \hspace*{20ex}
     \Infer{\StepMatch}
          {\vmatch{p_1}{v}{\theta}}
          {\big((\Match{p_1}{e_1}) \matchor ms\big)
           \step_v
           [\theta]e_1}
    \and
    \Infer{\StepElse}
         {\novmatch{p_1}{v}
          \\
          ms \step_v e'}
         {\big((\Match{p_1}{e_1}) \matchor ms\big) \step_v e'}
  \vspace{-1ex}
  \end{mathpar}

  \judgbox{e \step e'}{\!$e$ steps to $e'$}
  \vspace{-5.5ex}
  \begin{mathpar}
    \hspace*{95pt}
    \Infer{\!\StepBeta}
         {}
         {(\Lam{x} e_1)\,v_2 \step [v_2/x]e_1}
    ~~~
     \Infer{\!\StepCase}
          {ms \step_v e'}
          {\Case{v}{ms} \step e'}
    \vspace*{-0.5ex}
    \\
    \Infer{\!\StepDeclare}
         {}
         {(\declare{\Sigma'}{e}) \step e}
    ~~~%
    \Infer{\!\StepContext}
         {e \step e'}
         {\E[e] \step \E[e']}
  \end{mathpar}

  \lesscaptionspace
  \caption{Operational semantics}
  \FLabel{fig:step}
\end{figure}

\clearpage

\subsection{Substitution typing}

\begin{figure}[h!]
  \centering
  
  \judgbox{\Sigma; \Gamma \entails \theta : \Gamma'}
       {Substitution $\theta$, applied to something well-formed
         under $\Sigma; (\Gamma, \Gamma')$,
         \\
         replaces variables in $\Gamma'$ to yield
         something well-formed under $\Gamma$
       }
  \begin{mathpar}
    \Infer{\SubstEmpty}
         {}
         {\Sigma; \Gamma \entails \emptysubst : \cdot}
    ~~~~
    \Infer{\SubstVar}
         {
           \Sigma; \Gamma \entails \theta : \Gamma'
           \\
           \Sigma; \Gamma \entails v : A
         }
         {\Sigma; \Gamma \entails (\theta, v/x) : (\Gamma', x : A)}
  \vspace{-3ex}
  \end{mathpar}
  
  \caption{Substitution typing}
  \FLabel{fig:subst-typing}
\end{figure}

Substitution typing (\Figureref{fig:subst-typing})
is used to state \Lemmaref{lem:multiple-value-substitution}.

\section{Proofs}
\Label{sec:proofs}

\renewcommand{\Hand}{~}

\subsection{Properties of subtyping}

\Label{apx:PF-FIRST}

\begin{restatable}[Reflexivity]{lemma}{lemsubtypingreflexivity}
\label{lem:subtyping-reflexivity}
  If $\typejudg{\Sigma}{A}$
  then
  $\Sigma \entails A \subtype A$.
\end{restatable}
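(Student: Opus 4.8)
The plan is to proceed by structural induction on the derivation of $\typejudg{\Sigma}{A}$ (equivalently, on the structure of $A$), producing in each case a subtyping derivation of $\Sigma \entails A \subtype A$. For the non-intersection connectives, the subtyping rules mirror the type structure directly, so each case follows by applying the corresponding subtyping rule to the results of the induction hypothesis on the immediate subcomponents.

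Concretely, the base case $A = \unitty$ is immediate by \SubUnit. For $A = A_1 * A_2$, the inversion of \TypewfProd gives $\typejudg{\Sigma}{A_1}$ and $\typejudg{\Sigma}{A_2}$; the induction hypothesis yields $\Sigma \entails A_1 \subtype A_1$ and $\Sigma \entails A_2 \subtype A_2$, and \SubProd concludes. The case $A = A_1 \arr A_2$ is similar: \SubArr has a contravariant premise $B_1 \subtype A_1$, but since we are proving reflexivity we instantiate $B_1 := A_1$ and $B_2 := A_2$, so both premises $A_1 \subtype A_1$ and $A_2 \subtype A_2$ are again supplied by the induction hypothesis. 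For $A = s$, the only applicable well-formedness rule is \TypewfSort, so $s \in S$ for some block in $\Sigma$, hence $s \in \dom{\Sigma}$; this is exactly the side condition needed by \SubsortRefl to derive $\Sigma \entails s \subsort s$, and \SubSort then gives $\Sigma \entails s \subtype s$.

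The one case requiring a little care is $A = A_1 \sectty A_2$, since no single subtyping rule matches an intersection on both the left and the right simultaneously. Here I would build the derivation top-down: apply \SubSectR, reducing the goal to the two subgoals $\Sigma \entails (A_1 \sectty A_2) \subtype A_1$ and $\Sigma \entails (A_1 \sectty A_2) \subtype A_2$. Each of these is discharged by a left rule composed with the induction hypothesis: \SubSectL{1} reduces the first to $\Sigma \entails A_1 \subtype A_1$, and \SubSectL{2} reduces the second to $\Sigma \entails A_2 \subtype A_2$, both of which hold by induction (using $\typejudg{\Sigma}{A_1}$ and $\typejudg{\Sigma}{A_2}$, obtained by inverting \TypewfSect). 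This intersection case is the only mild obstacle, and even it is routine; the orthogonal, transitivity-free formulation of subtyping means every case is a direct rule application with no need to guess intermediate types.
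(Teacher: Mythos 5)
Your proof is correct and follows essentially the same route as the paper: structural induction on $A$, with \SubsortRefl{} and \SubSort{} handling the datasort case. The only difference is that you spell out the intersection case (\SubSectR{} followed by \SubSectL{1} and \SubSectL{2}) explicitly, which the paper leaves implicit; your treatment of it is exactly right.
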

\begin{proof}
  By structural induction on $A$.

  The case for $A = \unitty$, the case for $A = A_1 * A_2$, and the case for $A = A_1 \arr A_2$ are straightforward.

  \begin{itemize}
      \ProofCaseThing{A = s}  Use \SubsortRefl and \SubSort.
  \qedhere
  \end{itemize}
\end{proof}

\begin{restatable}[Transitivity]{lemma}{lemsubtypingtransitivity}
\XLabel{lem:subtyping-transitivity}
~\\
  If $\Sigma \entails A \subtype B$
  and $\Sigma \entails B \subtype C$
  where $\typejudg{\Sigma}{A}$ and $\typejudg{\Sigma}{B}$ and $\typejudg{\Sigma}{C}$
  then
  $\Sigma \entails A \subtype C$.
\end{restatable}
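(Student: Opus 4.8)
The plan is to prove transitivity by a single induction that reduces a derivation $\mathcal{D}_1$ of $\Sigma \entails A \subtype B$ together with a derivation $\mathcal{D}_2$ of $\Sigma \entails B \subtype C$ to a derivation of $\Sigma \entails A \subtype C$. Because the rules of \Figureref{fig:subtyping} already bake transitivity into each rule (every premise compares against the ambient type rather than against a guessed middle type), no middle type ever has to be invented; instead, the \emph{given} middle type $B$ has to be dismantled. I would therefore induct on the lexicographic measure $(|B|,\; |\mathcal{D}_1| + |\mathcal{D}_2|)$, where $|B|$ counts the type constructors of the middle type $B$ and $|\mathcal{D}_i|$ is the size of the $i$th derivation. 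The well-formedness hypotheses $\typejudg{\Sigma}{A}$, $\typejudg{\Sigma}{B}$, $\typejudg{\Sigma}{C}$ are propagated to subcomponents by inverting the rules of \Figureref{fig:type-wf} (for instance, $\typejudg{\Sigma}{B_1 * B_2}$ yields $\typejudg{\Sigma}{B_1}$ and $\typejudg{\Sigma}{B_2}$), which is exactly what is needed to discharge the well-formedness premises of each appeal to the induction hypothesis.

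The case analysis is on the last rule of $\mathcal{D}_2$. If $\mathcal{D}_2$ ends in \SubUnit, then $A \subtype C$ is literally $\mathcal{D}_1$. If it ends in \SubSort, then $B = s$ and $C = t$ with $\Sigma \entails s \subsort t$, and I case on $\mathcal{D}_1$: when $\mathcal{D}_1$ ends in \SubSort (so $A = s'$ with $s' \subsort s$) I compose the two subsortings by \SubsortTrans and conclude with \SubSort, using no induction hypothesis; when $\mathcal{D}_1$ ends in \SubSectL{j} (so $A = A_1 \sectty A_2$ with $A_j \subtype s$) I apply the induction hypothesis to $(A_j \subtype s,\, \mathcal{D}_2)$---same middle type $B = s$, strictly smaller $\mathcal{D}_1$---and re-apply \SubSectL{j}. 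The structural cases \SubProd and \SubArr are analogous: $\mathcal{D}_1$ must end in the matching structural rule or in \SubSectL{j}; in the structural subcase I recurse on the components, where the middle type strictly shrinks (from $B_1 * B_2$ to $B_1$ and $B_2$, or from $B_1 \arr B_2$ to $B_1$ and $B_2$), so the first lexicographic component decreases. The \SubArr case additionally uses contravariance, composing $C_1 \subtype B_1$ (from $\mathcal{D}_2$) with $B_1 \subtype A_1$ (from $\mathcal{D}_1$) through the smaller middle type $B_1$.

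The two intersection cases are where the measure earns its keep. If $\mathcal{D}_2$ ends in \SubSectR, then $C = C_1 \sectty C_2$ with subderivations of $B \subtype C_1$ and $B \subtype C_2$; I recurse with $(\mathcal{D}_1,\, B \subtype C_i)$ for each $i$---the middle type $B$ is unchanged but $\mathcal{D}_2$ shrinks---and reassemble with \SubSectR. If $\mathcal{D}_2$ ends in \SubSectL{k}, then $B = B_1 \sectty B_2$ with a subderivation of $B_k \subtype C$, and I case on $\mathcal{D}_1$: if $\mathcal{D}_1$ ends in \SubSectR it supplies $A \subtype B_k$ directly, and I recurse on $(A \subtype B_k,\, B_k \subtype C)$, where the middle type strictly drops from $B_1 \sectty B_2$ to $B_k$; if $\mathcal{D}_1$ ends in \SubSectL{j}, then $A = A_1 \sectty A_2$ with $A_j \subtype B$, and I recurse on $(A_j \subtype B,\, \mathcal{D}_2)$ with $B$ held fixed and $\mathcal{D}_1$ strictly smaller, then re-apply \SubSectL{j}.

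The main obstacle is precisely the \SubSectL{k} case of $\mathcal{D}_2$: since the middle type $B$ is itself an intersection, one cannot make progress by decomposing $B$ alone, and a naive induction on a single derivation, or on $B$ in isolation, loops. The resolution is the observation that when $\mathcal{D}_1$ ends in \SubSectR, selecting the appropriate component $A \subtype B_k$ makes the middle type strictly decrease, whereas when $\mathcal{D}_1$ ends in \SubSectL{j} the middle type is held fixed while the two derivations shrink; the lexicographic measure is exactly what records both forms of progress. I expect that no separate inversion lemma of the shape ``$A \subtype B_1 \sectty B_2$ implies $A \subtype B_k$'' will be needed, because that reasoning is subsumed by treating \SubSectL{k} inline within the main induction.
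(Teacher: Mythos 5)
Your proof is correct and follows essentially the same route as the paper's: a simultaneous induction on the two given derivations with a case split on their concluding rules, composing subsortings via \SubsortTrans in the sort case and handling the intersection rules inline without any auxiliary inversion lemma. The only cosmetic difference is the termination measure---the paper's simultaneous induction (in effect, the sum of the two derivation sizes) already decreases in every inductive appeal, so the extra leading component $|B|$ in your lexicographic measure is sound but not actually needed.
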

\begin{proof}
  By simultaneous induction on the given derivations.

  \begin{itemize}
      \item   If either derivation is by \SubUnit, then $A = B$ or $B = C$, and
          the other given derivation is the desired result.

      \item If both derivations are by \SubProd, apply the i.h.\ as needed, then apply \SubProd.

      \item If both derivations are by \SubArr, apply the i.h.\ as needed, then apply \SubArr.

      \item If both derivations are by \SubSort, apply \SubsortTrans, then apply \SubSort.

      \item If the first derivation is by \SubSectL{k}, we have $A = A_1 \sectty A_2$.

        \begin{llproof}
          \ePf{\Sigma}{A_k \subtype B}  {Subderivation}
          \ePf{\Sigma}{A_k \subtype C}  {By i.h.}
          \ePf{\Sigma}{(A_1 \sectty A_2) \subtype C}  {By \SubSectL{k}}
        \end{llproof}

      \item If the second derivation is by \SubSectR, we have $C = C_1 \sectty C_2$.

        \begin{llproof}
          \ePf{\Sigma}{A \subtype B}   {Given}
          \ePf{\Sigma}{B \subtype C_1}   {Subderivation}
          \ePf{\Sigma}{A \subtype C_1}   {By i.h.}
          \ePf{\Sigma}{A \subtype C_2}   {Similar ($B \subtype C_2$ subderivation)}
        \Hand  \ePf{\Sigma}{A \subtype (C_1 \sectty C_2)}   {By \SubSectR}
        \end{llproof}

      \item If the first derivation $\Dee_1$ is by \SubSectR and the second derivation $\Dee_2$ is by \SubSectL{k},
        we have $B = B_1 \sectty B_2$.

          \begin{llproof}
            \ePf{\Sigma}{A \subtype B_k}   {Subderivation of $\Dee_1$}
            \ePf{\Sigma}{B_k \subtype C}   {Subderivation of $\Dee_2$}
          \Hand  \ePf{\Sigma}{A \subtype C}   {By i.h.}
          \end{llproof}
  \end{itemize}
  Other combinations of concluding rules are impossible.
\end{proof}

\subsection{Subsort properties}

\thmnonadjacentpreservation*
\begin{proof}
  Suppose $\Sigma_1, \Sigma_2, \Sigma_3 \entails s \subsort t$
  where $s, t \in \dom{\Sigma_1, \Sigma_2}$.
  Following \Definitionref{def:preserving}, we need to show that
  $\Sigma_1, \Sigma_2 \entails s \subsort t$.

  Every derivation of a subsorting judgment is essentially a finite path in a directed graph
  from the subsort to the supersort.  The path from $s$ to $t$ must pass through
  edges (subsortings) in $\Sigma_3$ a finite number of times, say $n$ times.
  Proceed by induction on $n$:

  \begin{itemize}
  \item If $n = 0$, the path from $s$ to $t$ does not pass through $\Sigma_3$
    at all, so we can simply replace $(\Sigma_1, \Sigma_2, \Sigma_3)$ in the given
    derivation with $(\Sigma_1, \Sigma_2)$.

  \item If $n > 0$, choose the last $\Sigma_3$ segment in the path:
    \[
      s \subsort \dots \subsort s_1 \subsort \underbrace{t_3 \subsort \dots \subsort  t_3'}_{\text{subsort edges in $\Sigma_3$}} \subsort s_1' \subsort \dots \subsort t
    \]
    Here, $t_3, t_3' \in \dom{\Sigma_3}$.
    Now consider the vertices (sorts) $s_1$ and $s_1'$.  These sorts must be
    in $\dom{\Sigma_1, \Sigma_2}$.  Since $\Sigma_2$ and $\Sigma_3$ are
    non-adjacent, neither $s_1$ nor $s_1'$ can be in $\dom{\Sigma_2}$.
    Therefore, $s_1, s_1' \in \dom{\Sigma_1}$.

    All the edges from $t_3$ to $t_3'$ are in $\Sigma_3$, so from
    $\Sigma_1, \Sigma_2, \Sigma_3 \entails s_1 \subsort s_1'$
    we get $\Sigma_1, \Sigma_3 \entails s_1 \subsort s_1'$.

    It is given that $\Sigma_3$ preserves subsorting of $\Sigma_1$.
    Therefore, $\Sigma_1 \entails s_1 \subsort s_1'$, yielding a path
    \[
      s \subsort \dots \subsort \underbrace{s_1 \subsort \dots \subsort s_1'}_{\text{subsort edges in $\fighi{\Sigma_1}$}}
      \subsort \dots \subsort t
    \]
    This path has one less $\Sigma_3$ segment than the one we started with, so
    the result follows by induction.
  \qedhere
  \end{itemize}
\end{proof}

\subsection{Strengthening, weakening, and substitution}

\subsubsection{Weakening of the supporting judgments}

\begin{restatable}[Weakening (lowest level)]{lemma}{lemweakeninglow}
\XLabel{lem:weakening-low}
~%
Given $\Sigma$ and $\Sigma'$ such that $\Sigma' \sstr \Sigma$:%
    \begin{enumerate}[(i)]
      \item %
          If $\Sigma \entails A \type$ then $\Sigma' \entails A \type$.
      \item %
          If $\Sigma \entails s_1 \subsort s_2$ then $\Sigma' \entails s_1 \subsort s_2$.
      \item %
          If $\Sigma \entails A \subtype B$ then $\Sigma' \entails A \subtype B$.
      \item %
          If $\Sigma \entails c : A \arr s$ then $\Sigma' \entails c : A \arr s$.
      \item %
        If $\Sigma \entails A \refines \tau$
        then $\Sigma' \entails A \refines \tau$.
      \item %
          If $\contypejudg{\Sigma}{c : C}$
          then $\contypejudg{\Sigma'}{c : C}$.
    \end{enumerate}

    In part (iii), the resulting derivation has the same size (number of horizontal lines)
    as the given derivation.
\end{restatable}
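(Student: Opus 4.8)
The plan is to prove each part by structural induction on the given derivation, all resting on a single monotonicity observation about $\sstr$. By \Definitionref{def:signature-str}, $\Sigma'$ arises from $\Sigma$ by inserting \emph{entire} signatures at positions between blocks; such insertion never deletes a sort, a subsorting declaration, or a constructor typing, and never edits the interior of an existing block. Hence $\dom{\Sigma} \subseteq \dom{\Sigma'}$, every membership fact $(c : A \arr s') \in \Sigma$ and $(s \refines d) \in \Sigma$ persists in $\Sigma'$, and any block $S\sortblock{\blk}$ of $\Sigma$ survives intact as a block of $\Sigma'$. These are exactly the side conditions that the axiom rules of the six judgments consult. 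Following the paper's stated methodology, I would first treat a single insertion $\Sigma = (\Sigma_L, \Sigma_R)$, $\Sigma' = (\Sigma_L, \Omega, \Sigma_R)$, and then lift to general $\sstr$ by induction on the number of inserted blocks.

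I would discharge the six parts in dependency order, so that each inductive case may appeal to the parts below it as already-established lemmas. The self-contained parts are (ii) $\subsort$ and (v) $\refines$: their only signature-sensitive rules are \SubsortAssum and \SubsortRefl, and \RefinesData, all handled by the monotonicity facts above. Next, (i) $\type$ follows, invoking (v) in the \TypewfSect case and using the block-preservation fact in \TypewfSort. Then (iii) $\subtype$ and (iv) $c : A \arr s$ follow, each invoking (ii) in the \SubSort and \ConArr cases respectively. Finally (vi) $\contypejudg$ follows, invoking (i) and (v) in its single rule \ContypeArr. In every case I simply apply the induction hypothesis to the subderivations and re-apply the same rule, with the axiom cases closed by monotonicity.

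The size claim in part (iii) comes for free: because each subtyping rule is replayed verbatim—the same rule, same branching, only over the larger signature—the number of horizontal lines in the reconstructed $\subtype$ derivation matches the original. The $\subsort$ premise consumed by \SubSort is rebuilt via part (ii), but that derivation lives in a separate judgment and does not count toward the subtyping derivation's size.

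The main obstacle, and the only step requiring real care, is the pair of block-structured rules \SubsortAssum and \TypewfSort, whose conclusions name a specific splitting of the signature (e.g.\ $\Sigma_L, S\sortblock{\dots, s_1 \subsort s_2, \dots}, \Sigma_R$ for \SubsortAssum, and $\Sigma, S\sortblock{\blk}, \Sigma'$ for \TypewfSort). I must verify that after inserting $\Omega$ the enlarged signature can still be re-split so that the \emph{same} block, containing the \emph{same} declaration, occupies the witnessing position. This holds precisely because $\sstr$ inserts whole signatures between blocks rather than modifying a block's contents: the target block is untouched, so the new signature admits a splitting of the required shape and the same axiom applies. Once this is confirmed, the remaining cases are routine and the induction closes.
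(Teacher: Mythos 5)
Your proposal matches the paper's proof, which is exactly this induction on each derivation in dependency order, replaying each rule and closing the lookup/axiom cases by monotonicity of membership under block insertion; the paper merely states it far more tersely ("all cases are straightforward," with the cross-part uses of the i.h.\ noted). Your explicit reordering of the parts (doing (ii) and (v) before (i), since \TypewfSort needs nothing but domain membership while \TypewfSect needs $\refinessym$) and your attention to re-splitting the signature in \SubsortAssum and \TypewfSort are sound refinements of the same argument, not a different route.
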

\begin{proof}  For each part, by induction on the given derivation,
  assuming lower-numbered parts.

  Part (i): All cases are straightforward.

  Part (ii), subsorting: All 3 cases
  (\SubsortAssum, \SubsortRefl and \SubsortTrans)
  are straightforward.

  Part (iii), subtyping: All cases are straightforward; the case for \SubSort
  uses part (ii) of the i.h.

  Part (iv), constructor typing: There is one case, \ConArr, which uses
  the i.h.\ (ii).

  Part (v), $A \refines \tau$: Straightforward.

  Part (vi), constructor type well-formedness: use part (i).
\end{proof}

\begin{restatable}[Weakening `SafeConAt']{lemma}{lemweakeningsafeconat}
\XLabel{lem:weakening-safeconat}
~\\
  If
  $\safeextat{\Sigma, \Sigma'}{S\sortblock{\blk}}{A \arr s}{c}{t}$
  then
  $\safeextat{\Sigma, \Omega, \Sigma'}{S\sortblock{\blk}}{A \arr s}{c}{t}$.
\end{restatable}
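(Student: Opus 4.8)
The plan is to invert the given derivation, weaken each of its premises, and reassemble with the same rule. Since \SafeConAt is the only rule concluding a \safeextat judgment, inverting $\safeextat{\Sigma, \Sigma'}{S\sortblock{\blk}}{A \arr s}{c}{t}$ produces a constructor typing $(c : A' \arr s') \in (\Sigma, \Sigma')$ together with three subderivations: $\Sigma, \Sigma', S\sortblock{\blk} \entails s' \subsort t$, then $\Sigma, \Sigma', S\sortblock{\blk} \entails s \subsort s'$, and $\Sigma, \Sigma', S\sortblock{\blk} \entails A \subtype A'$. The goal is to rebuild the conclusion over $(\Sigma, \Omega, \Sigma')$, that is, with $\Omega$ spliced in after $\Sigma$.

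The membership premise transfers immediately: inserting $\Omega$ between $\Sigma$ and $\Sigma'$ leaves every declaration of $(\Sigma, \Sigma')$ in place, so $(c : A' \arr s') \in (\Sigma, \Omega, \Sigma')$. For the three subsorting and subtyping premises, I would first observe that $\Sigma, \Omega, \Sigma', S\sortblock{\blk}$ arises from $\Sigma, \Sigma', S\sortblock{\blk}$ by inserting the entire signature $\Omega$ at one position, so it is stronger in the sense of \Definitionref{def:signature-str}: $(\Sigma, \Omega, \Sigma', S\sortblock{\blk}) \sstr (\Sigma, \Sigma', S\sortblock{\blk})$. Lemma~\ref{lem:weakening-low} then applies directly: part (ii) weakens the two subsorting subderivations and part (iii) weakens the subtyping subderivation, yielding $\Sigma, \Omega, \Sigma', S\sortblock{\blk} \entails s' \subsort t$, then $\Sigma, \Omega, \Sigma', S\sortblock{\blk} \entails s \subsort s'$, and $\Sigma, \Omega, \Sigma', S\sortblock{\blk} \entails A \subtype A'$. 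Reapplying \SafeConAt to these weakened premises gives $\safeextat{\Sigma, \Omega, \Sigma'}{S\sortblock{\blk}}{A \arr s}{c}{t}$, as desired.

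This lemma is essentially mechanical, so I do not anticipate a real obstacle. The only point requiring a moment's care is recognizing that the single-block insertion used here is an instance of the general strengthening $\sstr$ of \Definitionref{def:signature-str}, so that Lemma~\ref{lem:weakening-low} (stated in terms of $\sstr$) applies without any separate single-$\Omega$ variant. Notably, no well-formedness hypotheses are needed, since the supporting subsorting and subtyping judgments are monotone under signature strengthening regardless of well-formedness.
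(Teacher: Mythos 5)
Your proof is correct and takes essentially the same route as the paper: invert \SafeConAt, weaken the two subsorting premises and the subtyping premise via Lemma~\ref{lem:weakening-low} (parts (ii) and (iii)), and reapply \SafeConAt. The paper states this in one line; your version just spells out the same steps, including the (correct) observation that the membership premise and the $\sstr$ instance transfer trivially.
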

\begin{proof}
  By inversion on \SafeConAt and applying \Lemmaref{lem:weakening-low} parts
  (ii), (ii), and (iii), and then applying \SafeConAt.
\end{proof}

\begin{restatable}[Supersorting `SafeConAt']{lemma}{lemsupersortsafeconat}
\XLabel{lem:supersort-safeconat}
~\\
  If
  $\safeextat{\Sigma}{S\sortblock{\blk}}{A \arr s}{c}{t'}$
  and
  $\Sigma, S\sortblock{\blk} \entails t' \subsort t$
  then
  $\safeextat{\Sigma}{S\sortblock{\blk}}{A \arr s}{c}{{t}}$.
\end{restatable}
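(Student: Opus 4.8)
The plan is to invert the given \SafeConAt derivation, adjust the single premise that mentions the old supersort $t'$, and reassemble. Since \SafeConAt is the only rule whose conclusion is a safe-extension judgment, inversion on the hypothesis $\safeextat{\Sigma}{S\sortblock{\blk}}{A \arr s}{c}{t'}$ yields exactly four facts: the membership $(c : A' \arr s') \in \Sigma$ for some $A'$ and $s'$; the codomain bound $\Sigma, S\sortblock{\blk} \entails s' \subsort t'$; the relation $\Sigma, S\sortblock{\blk} \entails s \subsort s'$; and the domain subtyping $\Sigma, S\sortblock{\blk} \entails A \subtype A'$.

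To build $\safeextat{\Sigma}{S\sortblock{\blk}}{A \arr s}{c}{t}$ via \SafeConAt, three of these premises carry over verbatim with the \emph{same} witnesses $A'$ and $s'$: the membership $(c : A' \arr s') \in \Sigma$, the subsorting $\Sigma, S\sortblock{\blk} \entails s \subsort s'$, and the subtyping $\Sigma, S\sortblock{\blk} \entails A \subtype A'$. The only premise that changes is the codomain bound, which must now read $\Sigma, S\sortblock{\blk} \entails s' \subsort t$. First I would combine the extracted bound $\Sigma, S\sortblock{\blk} \entails s' \subsort t'$ with the second hypothesis $\Sigma, S\sortblock{\blk} \entails t' \subsort t$ by rule \SubsortTrans, obtaining $\Sigma, S\sortblock{\blk} \entails s' \subsort t$. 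Then I would apply \SafeConAt to the three reused premises together with this new codomain bound, which gives the desired conclusion.

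This lemma is essentially a one-step consequence of transitivity of subsorting, so I do not anticipate a genuine obstacle. The only point worth stating carefully is that the same witness $s'$ chosen by the original derivation at $t'$ still works at $t$: the membership and domain premises do not mention the supersort argument at all, so only the codomain bound needs to climb from $t'$ up to $t$, and \SubsortTrans supplies exactly that.
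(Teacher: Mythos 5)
Your proof is correct and matches the paper's own argument exactly: invert \SafeConAt, reuse the premises that do not mention $t'$ (membership, $s \subsort s'$, and $A \subtype A'$) with the same witnesses, upgrade $s' \subsort t'$ to $s' \subsort t$ by \SubsortTrans, and reapply \SafeConAt. Nothing further is needed.
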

\begin{proof}
  Some of the premises of \SafeConAt do not involve $t'$ at all, so we can reuse
  them directly.  The exception is the premise
  $\Sigma, S\sortblock{\blk} \entails s' \subsort t'$.  Applying \SubsortTrans
  gives $\Sigma, S\sortblock{\blk} \entails s' \subsort t$.  
 Now we can apply \SafeConAt.
\end{proof}

\thmweakeningsafe*
\begin{proof} ~
 
        \begin{llproof}
          \blkjudgPf{\Sigma, \Sigma'}{S}{\blk}{c : A \arr s}   {Given}
        1~~~  \inPf{s}{S}   {By inversion on \BlockCon}
         ~~~  \contypejudgPf{\Sigma, \Sigma', S\sortblock{\blk}}{c : A \arr s}   {\ditto}
         ~~~  \safeextatPf{\Sigma, \Sigma'}{S\sortblock{\blk}}{A \arr s}{c}{t}   {\ditto~~~for all $t \in \dom{\Sigma, \Sigma'}$ such that \dots}
        2~~~  \contypejudgPf{\Sigma, \Sigma', S\sortblock{\blk}}{c : A \arr s}   {By \Lemmaref{lem:weakening-low} (vi)}
        \end{llproof}
        
        {\noindent Suppose} that $t \in \dom{\Sigma, \Omega, \Sigma'}$
        and $\Sigma, \Omega, \Sigma', S\sortblock{\blk} \entails s \subsort t$.

        \begin{itemize}
        \ProofCaseThing{t \in \dom{\Sigma, \Sigma'}}

          \begin{llproof}
            \ePf{\Sigma, \Omega, \Sigma', S\sortblock{\blk}}{s \subsort t}   {Assumption}
          \end{llproof}

          It is given that $S\sortblock{\blk}$ preserves subsorting for
          $(\Sigma, \Sigma')$.  By inversion on $\sigjudg{(\Sigma, \Sigma')}$,
          signature $\Sigma'$ preserves subsorting for $\Sigma$.
          Combining these, we have that $(\Sigma', S\sortblock{\blk})$ preserves
          subsorting for $\Sigma$.

          It is also given that $\blk$ does not mention anything in
          $\dom{\Omega}$, and by inversion on $\sigjudg{(\Sigma, \Omega)}$
          (and using $\dom{\Sigma, \Omega, \Sigma'} \sect S = \emptyset$),
          we know that $\Omega$ does not mention anything in $S$; therefore,
          $S\sortblock{\blk}$ and $\Omega$ are non-adjacent
          (\Definitionref{def:non-adjacency}).
          
          By \Theoremref{thm:non-adjacent-preservation}
          with
          $\Sigma_1 = \Sigma$
          and
          $\Sigma_2 = (\Sigma', S\sortblock{\blk})$
          and
          $\Sigma_3 = \Omega$,
          we have that
          $\Omega$ preserves subsorting of $(\Sigma, \Sigma', S\sortblock{\blk})$.

          Preservation of subsorting is invariant under signature permutation, so
          $\Omega$ can be permuted leftward:
          $\Sigma, \Omega, \Sigma', S\sortblock{\blk} \entails s \subsort t$
          if and only if $\Sigma, \Sigma', S\sortblock{\blk} \entails s \subsort t$,
          and we assumed the former judgment.  Therefore:

          \begin{llproof}
            \ePf{\Sigma, \Sigma', S\sortblock{\blk}}{s \subsort t}   {}
          \end{llproof}

          This is the guard of one of the premises under the above ``for all''.  Therefore:

          \begin{llproof}
          \safeextatPf{\Sigma, \Sigma'}{S\sortblock{\blk}}{A \arr s}{c}{t}   {}
          \safeextatPf{\Sigma, \Omega, \Sigma'}{S\sortblock{\blk}}{A \arr s}{c}{t}   {By \Lemmaref{lem:weakening-safeconat}}
          \end{llproof}

        \ProofCaseThing{t \in \dom{\Omega}}
           
          Here we have $\Sigma, \Omega, \Sigma', S\sortblock{\blk} \entails s \subsort t$.
        
          It is given that $\blk$ does not mention any sorts in $\dom{\Omega}$.
          Therefore, $s \subsort t$ must have been derived transitively: there must exist
          another sort $s' \in \dom{\Sigma, \Sigma'} \union S$ such that
          $s \subsort s'$ and $s' \subsort t$.  Using the reasoning in the subcase for
          when $t \in \dom{\Sigma, \Sigma'}$, we get

          \begin{llproof}
            \safeextatPf{\Sigma, \Omega, \Sigma'}{S\sortblock{\blk}}{A \arr s}{c}{\fighi{s'}}   {}
            \safeextatPf{\Sigma, \Omega, \Sigma'}{S\sortblock{\blk}}{A \arr s}{c}{\fighi{t}}   {By \Lemmaref{lem:supersort-safeconat}}
          \end{llproof}          
        \end{itemize}

        {\noindent This} shows the ``for all'' part of \BlockCon.
        Together with ``1'' and ``2'' above, we can apply \BlockCon:

        \begin{llproof}
           \blkjudgPf{\Sigma,  \Omega, \Sigma'}{S}{\blk}{c : A \arr s}   {By \BlockCon}
        \end{llproof}
        \qedhere
\end{proof}

\thmsignatureinterleaving*
\begin{proof}   By induction on $\Sigma'$.

  If $\Sigma' = \cdot$, then we already have our result.

  Otherwise, $\Sigma' = (\Sigma_0', S\sortblock{\blk})$.

        \begin{llproof}
          \sigjudgPf{(\Sigma, \Sigma_0', S\sortblock{\blk})}   {Above}
          \sigjudgPf{(\Sigma, \Sigma_0')} {By inversion on \SigBlock}
          \disjointPf{S}{\dom{\Sigma, \Sigma_0'}}   {\ditto}
               \eqPf{(\Sigma, \Sigma_0' \entails {\subsort}|_{\dom{\Sigma,\Sigma_0'}})}
               {(\Sigma, \Sigma_0', S\sortblock{\blk} \entails {\subsort}|_{\dom{\Sigma,\Sigma_0'}})}
               {\ditto ~~[Preservation]}
          \blkjudgPf{\Sigma, \Sigma_0'}{S}{\blk}{c : A \arr s}   {\ditto~~for all $(c : A \arr s) \in \blk$}
          \proofsep
      1~~  \sigjudgPf{(\Sigma, \Omega, \Sigma_0')} {By i.h.}
          \decolumnizePf
          \disjointPf{\dom{\Sigma'}}{\dom{\Omega}}    {Given}
          \disjointPf{\dom{\Sigma_0', S\sortblock{\blk}}}{\dom{\Omega}}    {$\Sigma' = (\Sigma_0', S\sortblock{\blk})$}
          \disjointPf{\big(\dom{\Sigma_0'} \union S\big)}{\dom{\Omega}}    {By def.\ of $\dom{-}$}
          \disjointPf{S}{\dom{\Omega}}    {By a property of $\sect$ and $\union$}
          \disjointPf{S}{\dom{\Sigma, \Sigma_0'}}   {Above}
      2~~  \disjointPf{S}{\dom{\Sigma, \Omega, \Sigma_0'}}   {By def.\ of $\dom{-}$ and a property of $\sect$}
     \end{llproof}

     We still need to prove the following:

     \begin{llproof}
      3~~  
          \eqPf{(\Sigma, \Omega, \Sigma_0' \entails {\subsort}|_{\dom{\Sigma,\Omega,\Sigma_0'}})}
               {(\Sigma, \Omega, \Sigma_0', S\sortblock{\blk} \entails {\subsort}|_{\dom{\Sigma,\Omega,\Sigma_0'}})}
               {To be proved}
          \decolumnizePf
     4~~          \blkjudgPf{\Sigma, \Omega, \Sigma_0'}{S}{\blk}{c : A \arr s}   {To be proved~~for all $(c : A \arr s) \in \blk$}
          \proofsep
        \Hand
          \sigjudgPf{(\Sigma, \Omega, \Sigma_0', S\sortblock{\blk})}   {By \SigBlock (1, 2, 3, 4)}
      \end{llproof}

      \begin{itemize}
      \item Proof of 3:

           We have $\sigjudg{(\Sigma, \Omega)}$.

           Elaborating equation 3, we need to show, for all
           $t_1, t_2 \in \dom{\Sigma, \Omega, \Sigma_0'}$, that each direction holds:

           \begin{itemize}
           \item (a) If $\Sigma, \Omega, \Sigma_0' \entails t_1 \subsort t_2$
               then $\Sigma, \Omega, \Sigma_0', S\sortblock{\blk} \entails t_1 \subsort t_2$.
           \item (b) If $\Sigma, \Omega, \Sigma_0', S\sortblock{\blk} \entails t_1 \subsort t_2$
             then $\Sigma, \Omega, \Sigma_0' \entails t_1 \subsort t_2$.
           \end{itemize}
           
           For direction (a), \Lemmaref{lem:weakening-low} (ii) suffices.
        
           For direction (b):

           We have $\sigjudg{(\Sigma, \Sigma_0', S\sortblock{\blk})}$.

           By inversion on \SigBlock,
           the block $\blk$ preserves subsorting of the signature
           $(\Sigma, \Sigma_0')$. 

           By inversion on \BlockSubsort, none of the subsortings added by $\blk$
           are in $\dom{\Omega}$, and none of the subsortings added by $\Omega$
           are in $S$.  That is, $S\sortblock{\blk}$ and $\Omega$ are non-adjacent.

           By \Theoremref{thm:non-adjacent-preservation} with
           $\Sigma_1 = (\Sigma, \Sigma_0')$
           and
           $\Sigma_2 = \Omega$
           and
           $\Sigma_3 = S\sortblock{\blk}$,
           we have that
            $\Sigma_3$ preserves subsorting of $(\Sigma_1, \Sigma_2)$, that is,
            $S\sortblock{\blk}$ preserves subsorting of $(\Sigma, \Sigma_0', \Omega)$.

           Preservation of subsorting is invariant under signature permutation,
           so $S\sortblock{\blk}$ preserves subsorting of $(\Sigma, \Omega, \Sigma_0')$.
           By \Definitionref{def:preserving}, (b) holds.

      \item Proof of 4:

           Most of the work will be done by \Theoremref{thm:weakening-safe}.  We need to
           show all of that theorem's conditions.  All of our meta-variables match up with
           the statement of the lemma, except that our $\Sigma_0'$ will play the role
           of $\Sigma'$.

           \begin{itemize}
           \item $\sigjudg{(\Sigma, \Sigma_0')}$: Above.

           \item $\sigjudg{(\Sigma, \Omega)}$: Given.

           \item $\dom{\Sigma_0'} \sect \dom{\Omega} = \emptyset$: Follows from
             $\dom{\Sigma'} \sect \dom{\Omega} = \emptyset$, which was given.
           
           \item $\dom{\Sigma, \Omega, \Sigma_0'} \sect S = \emptyset$: This is ``2'', shown above.

           \item $\blk$ does not mention anything in $\dom{\Omega}$:

             If $\blk$ mentioned anything in $\dom{\Omega}$, it would contradict
             the premise of \BlockSubsort and/or the \textit{contype} premise of
             \BlockCon.

           \item $\blk$ preserves subsorting for $(\Sigma, \Sigma_0')$:
               This is ``Preservation'', shown above.

           \item $(c : A \arr s) \in \blk$: Assumption.

           \item $\blkjudg{\Sigma, \Sigma_0'}{S}{\blk}{c : A \arr s}$: Above.
           \end{itemize}

           The judgment marked 4 follows by \Theoremref{thm:weakening-safe}.
      \end{itemize}

     {\noindent Now} we can apply \SigBlock.

     \begin{llproof}
        \Hand
          \sigjudgPf{(\Sigma, \Omega, \Sigma_0', S\sortblock{\blk})}   {By \SigBlock (1, 2, 3, 4)}
        \end{llproof}
  \qedhere
\end{proof}

\subsubsection{Intersect strengthening}

\begin{restatable}[Properties of stronger contexts]{lemma}{lemcontextstr}
\XLabel{lem:context-str}
~
\begin{enumerate}[(i)] \raggedright
\item \emph{Reflexivity:} For all contexts $\Gamma$, we have $\Sigma \entails \Gamma \gstr \Gamma$.
\item \emph{Transitivity:} If $\Sigma \entails \Gamma'' \gstr \Gamma'$ and $\Sigma \entails \Gamma' \gstr \Gamma$ then $\Sigma \entails \Gamma'' \gstr \Gamma$.
\item \emph{Concatenation:} If $\Sigma \entails \Gamma_1' \gstr \Gamma_1$
  and $\Sigma \entails \Gamma_2' \gstr \Gamma_2$
  then $\Sigma \entails (\Gamma_1', \Gamma_2') \gstr (\Gamma_1, \Gamma_2)$.
\end{enumerate}
\end{restatable}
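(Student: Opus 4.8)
The plan is to unfold \Definitionref{def:context-str} in each part and reduce the claim about contexts to the corresponding structural property of the subtyping relation $\subtype$. Recall that $\Sigma \entails \Gamma' \gstr \Gamma$ asserts, for every $(x : A') \in \Gamma'$, the existence of some $(x : A) \in \Gamma$ with $\Sigma \entails A' \subtype A$. So reflexivity and transitivity of $\gstr$ should descend directly from reflexivity and transitivity of $\subtype$ (\Lemmaref{lem:subtyping-reflexivity} and \Lemmaref{lem:subtyping-transitivity}), while concatenation should be a routine case analysis on where a binding originates. None of the three parts needs induction.

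For part (i), given any $(x : A) \in \Gamma$, I would take the witness $(x : A) \in \Gamma$ itself and discharge the obligation $\Sigma \entails A \subtype A$ by \Lemmaref{lem:subtyping-reflexivity}. For part (ii), fix $(x : A'') \in \Gamma''$; applying $\Sigma \entails \Gamma'' \gstr \Gamma'$ yields some $(x : A') \in \Gamma'$ with $\Sigma \entails A'' \subtype A'$, and feeding this $(x : A')$ into $\Sigma \entails \Gamma' \gstr \Gamma$ yields some $(x : A) \in \Gamma$ with $\Sigma \entails A' \subtype A$. Chaining the two subtyping facts via \Lemmaref{lem:subtyping-transitivity} gives $\Sigma \entails A'' \subtype A$, so $(x : A)$ is the required witness in $\Gamma$.

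For part (iii), fix $(x : A') \in (\Gamma_1', \Gamma_2')$ and split on which half contains it. If $(x : A') \in \Gamma_1'$, then $\Sigma \entails \Gamma_1' \gstr \Gamma_1$ supplies $(x : A) \in \Gamma_1$ with $\Sigma \entails A' \subtype A$, whence $(x : A) \in (\Gamma_1, \Gamma_2)$ as needed; the case $(x : A') \in \Gamma_2'$ is symmetric, using $\Sigma \entails \Gamma_2' \gstr \Gamma_2$. (The definition asks only for \emph{some} matching binding in the concatenation, so any shadowing between $\Gamma_1$ and $\Gamma_2$ is harmless.)

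The only real obstacle is bookkeeping around the well-formedness side conditions of the subtyping lemmas: \Lemmaref{lem:subtyping-reflexivity} requires $\typejudg{\Sigma}{A}$, and \Lemmaref{lem:subtyping-transitivity} requires well-formedness of all three types $A''$, $A'$, and $A$ it connects. I would discharge these by appealing to the paper's standing convention that every type appearing in a context is well-formed under the ambient signature $\Sigma$, so the witnesses produced by the hypotheses are themselves well-formed and the subtyping lemmas apply without further hypotheses.
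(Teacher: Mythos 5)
Your proposal is correct and matches the paper's proof in substance: both reduce each part to reflexivity and transitivity of subtyping (Lemmas \ref{lem:subtyping-reflexivity} and \ref{lem:subtyping-transitivity}) plus a per-binding check, the only cosmetic difference being that the paper phrases parts (i) and (iii) as inductions on the context rather than as direct unfoldings of Definition \ref{def:context-str}. Your explicit attention to the well-formedness side conditions of the subtyping lemmas is more careful than the paper, which silently relies on the same standing convention you invoke.
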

\begin{proof}
  Part (i): By induction on $\Gamma$, using \Lemmaref{lem:subtyping-reflexivity}.

  Part (ii): Use \Lemmaref{lem:subtyping-transitivity}.

  Part (iii): By induction on $\Gamma_2$.
\end{proof}

\begin{restatable}[Constructor]{lemma}{lemconstructor}
\XLabel{lem:constructor}
  ~\\
     If
     $\sigjudg{\Sigma}$
     and $\sigjudg{\Sigma_+}$
     and
     $\Sigma_+ \sstr \Sigma$
     \\
     and
     $(c : A_c^+ \arr s_c^+) \in \Sigma_+$
     and $\Sigma_+ \entails s_c^+ \subsort s$
     and $s \in \dom{\Sigma}$
     \\
     then
     $(c : A_c \arr s_c) \in \Sigma$
     and $\Sigma_+ \entails s_c^+ \subsort s_c$
     and $\Sigma_+ \entails A_c^+ \subtype A_c$.
\end{restatable}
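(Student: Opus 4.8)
The plan is to induct on the position (block index) $k$ of the block of $\Sigma_+$ that declares the given typing $c : A_c^+ \arr s_c^+$. Write $\Sigma_+ = \Sigma_L, S\sortblock{\blk}, \Sigma_R$, where $S\sortblock{\blk}$ is that block (so $s_c^+ \in S$ and the typing lies in $\blk$) and $\Sigma_L$ is its prefix. Note that this induction stays within the fixed, well-formed $\Sigma_+$ and $\Sigma$, moving only to an \emph{earlier} typing for $c$; thus it never appeals to well-formedness of intermediate (partially-extended) signatures. The base case is when $S\sortblock{\blk}$ is one of the original, unchanged blocks of $\Sigma$: then $(c : A_c^+ \arr s_c^+) \in \Sigma$, so I take $A_c = A_c^+$ and $s_c = s_c^+$, deriving $\Sigma_+ \entails s_c^+ \subsort s_c$ by \SubsortRefl and $\Sigma_+ \entails A_c^+ \subtype A_c$ by \Lemmaref{lem:subtyping-reflexivity} (its hypothesis $\Sigma_+ \entails A_c^+ \type$ comes from the \ContypeArr premise of the \BlockCon that checked the typing, weakened by \Lemmaref{lem:weakening-low}).

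In the inductive case $S\sortblock{\blk}$ is an inserted block, so $S \cap \dom{\Sigma} = \emptyset$. The crux, discussed below, is to exhibit an original sort $t \in \dom{\Sigma_L} \cap \dom{\Sigma}$ with $\Sigma_L, S\sortblock{\blk} \entails s_c^+ \subsort t$, i.e.\ to pull the hypothesis $\Sigma_+ \entails s_c^+ \subsort s$ back into the prefix of $c$'s block. Granting this, I invert $\sigjudg{\Sigma_+}$ via \SigBlock (peeling the blocks of $\Sigma_R$) to obtain $\blkjudg{\Sigma_L}{S}{\blk}{c : A_c^+ \arr s_c^+}$, then invert \BlockCon and instantiate its ``for all $t$'' premise with this $t$, yielding $\safeextat{\Sigma_L}{S\sortblock{\blk}}{A_c^+ \arr s_c^+}{c}{t}$. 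Inverting \SafeConAt produces a typing $(c : A' \arr s') \in \Sigma_L$ with $\Sigma_L, S\sortblock{\blk} \entails s_c^+ \subsort s'$, $\Sigma_L, S\sortblock{\blk} \entails A_c^+ \subtype A'$, and $\Sigma_L, S\sortblock{\blk} \entails s' \subsort t$. Since this typing lies in $\Sigma_L$, its declaring block has index strictly below $k$, and $t \in \dom{\Sigma}$ with $\Sigma_+ \entails s' \subsort t$ (weakening by \Lemmaref{lem:weakening-low}). The induction hypothesis applied to $c : A' \arr s'$ gives $(c : A_c \arr s_c) \in \Sigma$ with $\Sigma_+ \entails s' \subsort s_c$ and $\Sigma_+ \entails A' \subtype A_c$; composing with the weakened prefix judgments via \SubsortTrans and \Lemmaref{lem:subtyping-transitivity} yields $\Sigma_+ \entails s_c^+ \subsort s_c$ and $\Sigma_+ \entails A_c^+ \subtype A_c$, as required. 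The recursion is well-founded because each step strictly decreases $k$, and a successful application already forces $\Sigma_L$ to be nonempty.

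The main obstacle is the claim that $s_c^+$ has an original supersort derivable \emph{within} its block's prefix $\Sigma_L, S\sortblock{\blk}$. I would establish it from two facts. First, every block after $c$'s block preserves the subsort relation on the prefix domain (the ``iff'' premise of \SigBlock, composed over the blocks of $\Sigma_R$), so any subsorting between sorts of $\dom{\Sigma_L, S\sortblock{\blk}}$ that holds in $\Sigma_+$ already holds in the prefix; this settles the subcase $s \in \dom{\Sigma_L}$ by taking $t = s$. Second, because $\Sigma_+ \sstr \Sigma$ inserts whole blocks and leaves the original blocks \emph{unchanged}, no original block can mention an inserted sort, so (by \BlockSubsort scoping together with subsorting preservation) no declared subsorting links an inserted sort to an original sort declared after the insertion. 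Hence, when $s$ is declared after $c$'s block, any subsort path from $s_c^+$ to $s$ must first reach an original sort $u \in \dom{\Sigma_L}$ using only edges available in the prefix, and $t = u$ suffices. This is precisely the finite-path crossing argument used in the proof of \Theoremref{thm:non-adjacent-preservation}; it is cleanest after first reducing to a single inserted signature (licensed by the remark following \Definitionref{def:signature-str}), where the scoping fact specializes to the statement that no subsort edge joins an inserted sort with the post-insertion original segment.
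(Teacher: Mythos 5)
Your proof reaches the same conclusion by the same essential mechanism as the paper's---inverting $\sigjudg{\Sigma_+}$ down to the \BlockCon premise for $c$'s block, instantiating its ``for all $t$'' at a suitable original supersort, and inverting \SafeConAt to obtain an earlier, larger typing for $c$---but you wrap this in an explicit induction on the position of the declaring block, whereas the paper performs a single inversion. The paper's one-step argument silently assumes two things: (i) that the prefix $\Sigma_1$ of $c$'s block is a prefix of $\Sigma$, so that the typing produced by \SafeConAt already lies in $\Sigma$ (not automatic when several signatures are inserted, or when the prior typing sits in an earlier block of the same inserted signature), and (ii) that $s \in \dom{\Sigma_1}$, so that the ``for all $t$'' may legitimately be instantiated at $t := s$ (not automatic when $s$'s block lies to the right of the insertion point). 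Your induction repairs (i) by iterating \SafeConAt until an original block is reached, and your ``crux'' discussion repairs (ii) by locating an intermediate original sort $u \in \dom{\Sigma_L} \cap \dom{\Sigma}$ using the subsorting-preservation premise of \SigBlock together with a crossing argument in the style of \Theoremref{thm:non-adjacent-preservation}. This buys genuine generality with respect to \Definitionref{def:signature-str}, at the cost of a longer argument; the paper instead leans on its stated convention of proving results for a single insertion and generalizing afterwards. The one soft spot in your write-up is that the crossing argument for (ii) is sketched rather than carried out: you assert that any subsort path from $s_c^+$ to a later-declared original $s$ must first reach a sort of $\dom{\Sigma_L}$ via edges already available in $\Sigma_L, S\sortblock{\blk}$, and while the ``iff'' premise of \SigBlock and the scoping of original blocks do support this, the descent (repeatedly shortening the prefix in which $s_c^+ \subsort u$ is derivable) deserves to be spelled out, since it is exactly the step the paper's own proof does not confront.
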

\begin{proof}
  If $(c : A_c^+ \arr s_c^+) \in \Sigma$, then:  Let $A_c = A_c^+$ and $s_c = s_c^+$.
  By \Lemmaref{lem:subtyping-reflexivity}, $\Sigma_+ \entails A_c^+ \subtype A_c$.

  Otherwise, by inversion on $\sigjudg{\Sigma_+}$,
  there is a derivation using \BlockCon that says that, for some $\Sigma_1$
  such that $\Sigma = (\Sigma_1, \Sigma_2)$, we have
  $\safeextat{\Sigma_1}{S\sortblock{\blk}}{A_c^+ \arr s_c^+}{c}{t}$
  for all $t \in \dom{\Sigma_1}$ such that $\Sigma_1, S\sortblock{\blk} \entails s_c^+ \subsort t$.

  We have $\Sigma_+ \entails s_c^+ \subsort s$.
  By well-formedness of the given signatures, all blocks preserve subsorting of previously-defined
  sorts.  Therefore
  \[
     \Sigma_1, S\sortblock{\blk} \entails s_c^+ \subsort s
  \]
  Instantiating the above $t$ with $s$ gives
  \[
    \safeextat{\Sigma_1}{S\sortblock{\blk}}{A_c^+ \arr s_c^+}{c}{s}
  \]
  \begin{llproof}
    \inPf{(c : A_c \arr s_c)} {\Sigma_1}   {By inversion on \SafeConAt}
    \ePf{\Sigma_1, S\sortblock{\blk}} {s_c \subsort s}  {\ditto}
    \ePf{\Sigma_1, S\sortblock{\blk}} {s_c^+ \subsort s_c} {\ditto}
    \ePf{\Sigma_1, S\sortblock{\blk}} {A_c^+ \subtype A_c} {\ditto}
    \decolumnizePf
  \Hand  \ePf{\Sigma_+} {s_c^+ \subsort s_c}  {By \Lemmaref{lem:weakening-low} (ii)}
  \Hand  \ePf{\Sigma_+} {A_c^+ \subtype A_c} {By \Lemmaref{lem:weakening-low} (iii)}
  \end{llproof}
  \qedhere
\end{proof}

\begin{restatable}[Intersect strengthening]{lemma}{lemintersectstrength}
\XLabel{lem:intersect-strength}
~\\
    If $\sigjudg{\Sigma_+}$ and $\sigjudg{\Sigma}$
    and $\Sigma_+ \sstr \Sigma$
    and $\Sigma \entails A \type$
    \\
    and $\Sigma_+ \entails A_+ \subtype A$
    \\
    then
    $\Sigma_+
    \entails
    \intersect{\Sigma_+}{A_+}{p}
    \trkstr
    \intersect{\Sigma}{A}{p}$.
\end{restatable}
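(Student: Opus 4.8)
The plan is to proceed by structural induction on the pattern $p$, following the case structure of the definition of $\intersectx$ (\Figureref{fig:type-pattern-intersection}). Throughout, I read the goal $\Sigma_+ \entails \intersect{\Sigma_+}{A_+}{p} \trkstr \intersect{\Sigma}{A}{p}$ through \Definitionref{def:track-str}, so it suffices to exhibit, for each track on the stronger side, a track on the weaker side that it dominates (stronger context under $\gstr$, smaller type under $\subtype$). Since $\intersectx$ is partial, I treat both invocations as defined; this is exactly what pins down the top-level shape of $A_+$ in the two structured cases, since a top-level intersection would leave the relevant $\intersectx$ clause unmatched.

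The wildcard, empty, or-, and as-pattern cases are routine. For $\wildcard$ the two singleton results are $\intresult{\cdot}{A_+}$ and $\intresult{\cdot}{A}$, and $A_+ \subtype A$ is the hypothesis; for $\emptypattern$ both sides are $\emptyset$ and the claim is vacuous. For $p_1 \patunion p_2$ each result is a union, so a stronger-side track lies in one of the two sub-results and the matching instance of the induction hypothesis supplies a dominating track. For $x \As p_0$ a stronger-side track has the form $\intresult{(\Gamma_+', x : B_+)}{B_+}$ arising from $\intresult{\Gamma_+'}{B_+} \in \intersect{\Sigma_+}{A_+}{p_0}$; the induction hypothesis yields $\intresult{\Gamma'}{B} \in \intersect{\Sigma}{A}{p_0}$ with $\Sigma_+ \entails \Gamma_+' \gstr \Gamma'$ and $\Sigma_+ \entails B_+ \subtype B$, and I rebuild $\intresult{(\Gamma', x : B)}{B}$, extending the context comparison by Lemma~\ref{lem:context-str}(iii). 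For $\Pair{p_1}{p_2}$ the weaker-side call forces $A = A_1 * A_2$, and definedness of the stronger-side call forces $A_+ = A_1^+ * A_2^+$; since $A_+$ is then not an intersection, the only rule deriving $A_+ \subtype A$ is \SubProd, whose inversion gives $A_i^+ \subtype A_i$. Applying the induction hypothesis componentwise and reassembling with Lemma~\ref{lem:context-str}(iii) for the contexts and \SubProd for the type discharges the case.

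The constructor case $c(p_0)$ is the crux and the main obstacle. Here $A = s$ is a sort and, by definedness, $A_+ = s_+$ is a sort with $s_+ \subsort s$ (the only rule deriving $s_+ \subtype s$ is \SubSort). A stronger-side track $\intresult{\Gamma_+}{s_c^+}$ comes from some $(c : A_c^+ \arr s_c^+) \in \Sigma_+$ with $\Sigma_+ \entails s_c^+ \subsort s_+$ together with a recursive track $\intresult{\Gamma_+}{B_+} \in \intersect{\Sigma_+}{A_c^+}{p_0}$. Transitivity gives $\Sigma_+ \entails s_c^+ \subsort s$, and since $s \in \dom{\Sigma}$ (because $A = s$ is well-formed under $\Sigma$), Lemma~\ref{lem:constructor} produces a matching declaration $(c : A_c \arr s_c) \in \Sigma$ with $\Sigma_+ \entails s_c^+ \subsort s_c$ and $\Sigma_+ \entails A_c^+ \subtype A_c$. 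To place the corresponding base track in $\intersect{\Sigma}{s}{c(p_0)}$ I also need $\Sigma \entails s_c \subsort s$: this is precisely the subsorting fact that the proof of Lemma~\ref{lem:constructor} exposes when inverting \SafeConAt at $t = s$ (equivalently, it follows because each inserted block conserves the subsort relation on previously-declared sorts, so $\Sigma_+$ preserves subsorting of $\Sigma$, and then Lemma~\ref{lem:weakening-low}(ii) moves the fact back into $\Sigma$).

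The recursion then reduces to comparing $\intersect{\Sigma_+}{A_c^+}{p_0}$ with $\intersect{\Sigma}{A_c}{p_0}$, and here I invoke the induction hypothesis at the strictly smaller pattern $p_0$ with the smaller type $A_c^+ \subtype A_c$ — this is exactly why the lemma must be stated for an \emph{arbitrary} subtype $A_+$ rather than a fixed $A$, since the recursive type $A_c^+$ is genuinely below $A_c$ rather than equal to it. The induction hypothesis supplies a base recursive track $\intresult{\Gamma}{B} \in \intersect{\Sigma}{A_c}{p_0}$ with $\Sigma \entails \Gamma_+ \gstr \Gamma$; this witnesses the base track $\intresult{\Gamma}{s_c} \in \intersect{\Sigma}{s}{c(p_0)}$, which dominates $\intresult{\Gamma_+}{s_c^+}$ because $\Sigma_+ \entails \Gamma_+ \gstr \Gamma$ and $s_c^+ \subsort s_c$ yields $s_c^+ \subtype s_c$ by \SubSort. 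The delicate points are thus entirely the shape inversions and discharging $\Sigma \entails s_c \subsort s$ from the internals of Lemma~\ref{lem:constructor}; once these are in hand, the remaining bookkeeping is mechanical.
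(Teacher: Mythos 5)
Your proof is correct and follows essentially the same route as the paper's: structural induction on $p$, with the constructor case resting on Lemma~\ref{lem:constructor} and the observation that $s \in \dom{\Sigma}$ follows from $\Sigma \entails s \type$. You are in fact slightly more careful than the paper in two spots---inverting $\Sigma_+ \entails s_+ \subtype s$ and using transitivity to get $s_c^+ \subsort s$, and explicitly discharging $\Sigma \entails s_c \subsort s$ so that the witnessing track really is a member of $\intersect{\Sigma}{s}{c(p_0)}$---both of which the paper's write-up leaves implicit.
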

\begin{proof}
  By structural induction on $p$.

  Suppose we have
  \begin{align*}
    \intersect{\Sigma_+}{A_+}{p}
    &~=~
    \big\{
      \intersectresult{\Sigma_+^1}{\Gamma_+^1}{B_+^1},
      \,\dots,\,
      \intersectresult{\Sigma_+^n}{\Gamma_+^n}{B_+^n}
    \big\}
  \\    
    \intersect{\Sigma}{A}{p}
   &~=~
    \big\{
      \intersectresult{\Sigma^1}{\Gamma^1}{B^1},
      \,\dots,\,
      \intersectresult{\Sigma^m}{\Gamma^m}{B^m}
    \big\}
  \end{align*}
 For each track
 $B^*_+ = \intersectresult{\Sigma^+_k}{\Gamma^+_k}{B^+_k}
   \in
   \intersect{\Sigma_+}{A_+}{p}$,
 we need to find some track $B^*$ in $\intersect{\Sigma}{A}{p}$
 such that $B^*_+ \trkstr B^*$.

 \begin{itemize}
   \ProofCaseThing{p = \wildcard}
      In this case, $\intersectx$ ignores the signature entirely.
      The result follows by %
      \Lemmaref{lem:context-str} (i), 
      and the given subtyping $\Sigma_+ \entails A_+ \subtype A$.

   \ProofCaseThing{p = (x \As p_0)}
   
      \begin{llproof}
        \trkstrPf{\intersect{\Sigma_+}{A_+}{p_0}}{\intersect{\Sigma}{A}{p_0}}  {By i.h.}
      \end{llproof}

      Suppose $\intrsctresult{\argle}{\Gamma'_+}{B_+} \in \intersect{\Sigma_+}{A_+}{p_0}$.

      \begin{llproof}
        \inPf{\intrsctresult{\argle}{\Gamma'_+, x : B_+}{B_+}}
             {\intersect{\Sigma_+}{A_+}{x \As p_0}}   {By definition of \intersectx}
        \proofsep
        \inPf{\intrsctresult{\Sigma'}{\Gamma'}{B}} {\intersect{\Sigma}{A}{p_0}}   {By above $\trkstrsym$}
        \trkstrPf{\intrsctresult{\argle}{\Gamma'_+}{B_+}}{\intrsctresult{\Sigma'}{\Gamma'}{B}}   {\ditto}
        \ePf{\Sigma_+\argle}{\Gamma'_+ \gstr \Gamma'}   {From \Definitionref{def:track-str}}
        \ePf{\Sigma_+\argle}{B_+ \subtype B}   {\ditto}
        \proofsep
        \inPf{\intrsctresult{\Sigma'}{\Gamma', x : B}{B}}
             {\intersect{\Sigma}{A}{x \As p_0}}   {By definition of \intersectx}        
        \proofsep
        \trkstrPf{\intrsctresult{\argle}{\Gamma'_+, x : B_+}{B_+}}{\intrsctresult{\Sigma'}{\Gamma', x : B}{B}}  {By \Definitionref{def:track-str}}
      \end{llproof}

   \ProofCaseThing{p = \emptypattern}
      In this case, $\intersectx$ ignores the signature---and the result is the empty set,
      which is, trivially, a strengthening of the empty set.

   \ProofCaseThing{p = (p_1 \patunion p_2)}

      By i.h., $\intersect{\Sigma_+}{A_+}{p_1} \trkstr \intersect{\Sigma}{A}{p_1}$. \\
      Similarly,  $\intersect{\Sigma_+}{A_+}{p_2} \trkstr \intersect{\Sigma}{A}{p_2}$. \\
      Then:
      \[
          \arrayenvl{~~~~~
            \big(\intersect{\Sigma_+}{A_+}{p_1}
          \union
          \intersect{\Sigma_+}{A_+}{p_2}\big)
          \\ 
          \trkstr
          \big(\intersect{\Sigma}{A}{p_1}
          \union
          \intersect{\Sigma}{A}{p_2}\big)
          }
      \]

  \medskip
        
  \ProofCaseThing{p = (p_1 * p_2)}

      \begin{llproof}
        \ePf{\Sigma_+}{(A^1_+ * A^2_+) \subtype (A^1 * A^2)}    {Given}
        \ePf{\Sigma_+}{A^1_+ \subtype A^1}    {By inversion on \SubProd}
        \ePf{\Sigma_+}{A^2_+ \subtype A^2}    {\ditto}
        \trkstrPf{\intersect{\Sigma_+}{A^1_+}{p_1}}{\intersect{\Sigma}{A^1}{p_1}}  {By i.h.}    
        \trkstrPf{\intersect{\Sigma_+}{A^2_+}{p_2}}{\intersect{\Sigma}{A^2}{p_2}}  {By i.h.}
      \end{llproof}

      Suppose:
      \begin{align*}
        \intrsctresult{\argle}{\Gamma^1_+}{B^1_+}
          &\in
          \intersect{\Sigma_+}{A^1_+}{p_1}
        \\
        \intrsctresult{\bargle}{\Gamma^2_+}{B^2_+}
          &\in
          \intersect{\Sigma_+}{A^2_+}{p_2}
      \end{align*}

      \begin{llproof}
        \inPf{\intrsctresult{\arglebargle}{\Gamma^1_+, \Gamma^2_+}{B^1_+ * B^2_+}}
             {\intersect{\Sigma_+}{A_1 * A_2}{\Pair{p_1}{p_2}}}   {By definition of \intersectx}
        \decolumnizePf
        \inPf{\intrsctresult{\Sigma^1}{\Gamma^1}{B^1}} {\intersect{\Sigma}{A_1}{p_1}}   {By above $\trkstrsym$}
        \trkstrPf{\intrsctresult{\argle}{\Gamma^1_+}{B^1_+}}{\intrsctresult{\Sigma^1}{\Gamma^1}{B^1}}   {\ditto}
        \ePf{\Sigma_+\argle}{\Gamma^1_+ \gstr \Gamma^1}   {From \Definitionref{def:track-str}}
        \ePf{\Sigma_+\argle}{B^1_+ \subtype B^1}   {\ditto}
        \proofsep
        \ePf{\Sigma_+\bargle}{\Gamma^2_+ \gstr \Gamma^2}   {Similar, with $2$ substituted for $1$}
        \ePf{\Sigma_+\bargle}{B^2_+ \subtype B^2}   {\ditto}
        \proofsep
        \ePf{\Sigma_+\arglebargle}{(\Gamma^1_+, \Gamma^2_+) \gstr (\Gamma^1, \Gamma^2)}   {By \Lemref{lem:context-str} (iii)}
        \ePf{\Sigma_+\arglebargle}{(B^1_+ * B^2_+) \subtype (B^1 * B^2)}   {By \SubProd}
        \decolumnizePf
        \inPf{\intrsctresult{\Sigma^1, \Sigma^2}{\Gamma^1, \Gamma^2}{(B^1 * B^2)}}
             {\intersect{\Sigma}{A_1 * A_2}{\Pair{p_1}{p_2}}}   {By definition of \intersectx}        
        \proofsep
        \trkstrPf{\intrsctresult{\arglebargle}{\Gamma^1_+, \Gamma^2_+}{B^1_+ * B^2_+}}{\intrsctresult{\Sigma^1, \Sigma^2}{\Gamma^1, \Gamma^2}{B^1 * B^2}}  {By \Definitionref{def:track-str}}
      \end{llproof}

  \ProofCaseThing{p = c(p_0)}
      Suppose $c : (A_c^+ \arr s_c^+) \in \Sigma_+$
      where $\Sigma_+ \entails s_c^+ \subsort s$.
      
      \begin{llproof}
        \inPf{(c : A_c^+ \arr s_c^+)}{\Sigma_+}   {Above}
        \ePf{\Sigma_+} {s_c^+ \subsort s}   {Above}
        \ePf{\Sigma} {s \type}  {Given}
        \inPf{s}{\dom{\Sigma}}  {By inversion on \TypewfSort}
        \proofsep
        \sigjudgPf{\Sigma_+}   {Given}
        \sigjudgPf{\Sigma}   {Given}
        \inPf{(c : A_c \arr s_c)}{\Sigma}   {By \Lemmaref{lem:constructor}}
        \ePf{\Sigma_+} {s_c^+ \subsort s_c}   {\ditto}
        \ePf{\Sigma_+} {A_c^+ \subtype A_c}  {\ditto}
        \decolumnizePf
        \trkstrPf{\intersect{\Sigma_+}{A_c^+}{p_0}}
              {\intersect{\Sigma}{A_c}{p_0}}   {By i.h.}
     \end{llproof}

     Suppose $\intrsctresult{\argle}{\Gamma'_+}{B'_+} \in \intersect{\Sigma_+}{A_c^+}{p_0}$.

     \begin{llproof}
        \inPf{\intrsctresult{\Sigma'}{\Gamma'}{B'}}{\intersect{\Sigma}{A_c}{p_0}}   {By above $\trkstrsym$}
        \trkstrPf{\intrsctresult{\argle}{\Gamma'_+}{B'_+}}{\intrsctresult{\Sigma'}{\Gamma'}{B'}}   {\ditto}
        \ePf{\Sigma_+\argle} {\Gamma'_+ \gstr \Gamma'}  {By \Definitionref{def:track-str}}
        \ePf{\Sigma\argle}{B'_+ \subtype B'}  {\ditto}
        \proofsep
        \ePf{\Sigma_+} {\Gamma'_+ \gstr \Gamma'}  {Above}
        \ePf{\Sigma_+} {s_c^+ \subsort s_c}   {Above}
        \ePf{\Sigma_+} {s_c^+ \subtype s_c}   {By \SubSort}
        \proofsep
\Hand        \ePf{\Sigma_+}  {\intresult{\Gamma'_+}{s_c^+} \trkstr \intresult{\Gamma'}{s_c}}
                        {By \Definitionref{def:track-str}}
      \end{llproof}
 \qedhere
 \end{itemize}
\end{proof}

\subsubsection{Weakening of the main judgments}

\thmweakening*
\begin{proof}  For each part, by induction on the height of the given derivation.

  Part (1), expression typing:

  \begin{itemize}
      \item \TypeVar: apply \TypeVar.
      \item \TypeSub: apply the i.h.\ (1) and \Lemmaref{lem:weakening-low} (iii).
      \item %
        \TypeUnitI, \TypeArrI, \TypeArrE, \TypeProdI,
        \TypeDataI:
        straightforward.  %
        The \TypeDataI case uses \Lemmaref{lem:weakening-low} (iv).

     \DerivationProofCase{\TypeDataE}
         {\Sigma, \Sigma'; \Gamma \entails e : B
          \\
          \matchassn{\Sigma, \Sigma'}{\Gamma}{B}{\wildcard}
          \entails
          ms : A
         }
         {\Sigma, \Sigma'; \Gamma \entails (\Case{e}{ms}) : A}

         \begin{llproof}
             \ePf{\Sigma, \Sigma'; \Gamma} {e : B}  {Subderivation}
             \ePf{\Sigma, \Omega, \Sigma'; \Gamma} {e : B}  {By i.h.\ (1)}
             \proofsep
             \ePf{\matchassn{\Sigma, \Sigma'}{\Gamma}{B}{\wildcard}}{ms : A}   {Subderivation}
             \ePf{\matchassn{\Sigma, \Omega, \Sigma'}{\Gamma^+}{B}{\wildcard}}{ms : A}   {By i.h.\ (3)}
             \proofsep
             \decolumnizePf
             \ePf{\Sigma, \Sigma'; \Gamma}{(\Case{e}{ms}) : A}
                    {By \TypeDataE}
         \end{llproof}

      \DerivationProofCase{\TypeDeclare}
              {
                \arrayenvbl{
                    \sigjudg{(\Sigma, \Sigma', \Sigma'')}  %
                }
                \\
                \Sigma, \Sigma' \entails A \type
                \\
                \Sigma, \Sigma', \Sigma''; \Gamma \entails e_0 : A
              }
              {\Sigma, \Sigma'; \Gamma \entails (\declare{\Sigma''}{e_0}) : A}
           
              \begin{llproof}
                \sigjudgPf{(\Sigma, \Sigma', \Sigma'')}   {Subderivation}
                \sigjudgPf{(\Sigma, \Omega)}   {Given}
                \disjointPf{\dom{\Sigma'}}{\dom{\Omega}}   {Given}
                \disjointPf{\dom{\Sigma''}}{\dom{\Omega}}   {By renaming $\dom{\Sigma''}$}
                \disjointPf{\big(\dom{\Sigma'} \union \dom{\Sigma''}\big)}{\dom{\Omega}}{By set theory}
                \disjointPf{\dom{\Sigma', \Sigma''}}{\dom{\Omega}}   {By def.\ of $\dom{-}$}
               \decolumnizePf
              1~  \sigjudgPf{(\Sigma, \Omega, \Sigma')}  {By \Theoremref{thm:signature-interleaving}}
                \proofsep
                \ePf{\Sigma, \Sigma'} {A \type}   {Subderivation}
              2~  \ePf{\Sigma, \Omega, \Sigma'} {A \type}   {By \Lemmaref{lem:weakening-low} (i)}
                \proofsep
                \ePf{\Sigma, \Sigma', \Sigma''; \Gamma} {e_0 : A}   {Subderivation}
              3~  \ePf{\Sigma, \Omega, \Sigma', \Sigma''; \Gamma} {e_0 : A}   {By i.h.\ (1)}
                \decolumnizePf
              \Hand \ePf{\Sigma, \Omega, \Sigma'; \Gamma} {(\declare{\Sigma''}{e_0}) : A}   {By \TypeDeclare on 1, 2, 3}
              \end{llproof}
  \end{itemize}

  \smallskip

  {\noindent Part} (2), substitution typing: In the \NoLinkSubstVar\xspace case, use part (1).

  \smallskip

  {\noindent Part} (3), where a derivation of $\matchassn{\Sigma, \Sigma'}{\Gamma}{B}{p} \entails ms : D$
  is given:

  \begin{itemize}
      \DerivationProofCase{\TypeMsEmpty}
           {\intersect{\Sigma, \Sigma'}{A}{p} = \emptyset}
           {\matchassn{\Sigma, \Sigma'}{\Gamma}{A}{p} \entails \emptyms : D}

           Suppose $\intersect{\Sigma, \Omega, \Sigma'}{A^+}{p} \neq \emptyset$.
           
           By \Lemmaref{lem:intersect-strength}, for each track in $\intersect{\Sigma, \Omega, \Sigma'}{A^+}{p}$ there exists a track in $\intersect{\Sigma, \Sigma'}{A}{p}$.  But we have as a premise that $\intersect{\Sigma, \Sigma'}{A}{p} = \emptyset$, a contradiction.

           Therefore
           $\intersect{\Sigma, \Omega, \Sigma'}{A^+}{p} = \emptyset$.

           The result follows by \TypeMsEmpty.

{\def\zzpreDerivationProofCase{\hspace{-9ex}}\small      \DerivationProofCase{\!\TypeMs}
           {
             \arrayenvbl{
               \Sigma, \Sigma' \entails A \refines \tau
               \\
               \Ursig \entails p_1 : \tau
             }
             \hspace{-1.5ex}
             \\
             \arrayenvbl{
                 \text{for all $\intresult{\Gamma'}{B}$}
                 \\
                 \text{~$\in \intersect{\Sigma}{A}{p \patsect p_1}$:}
                 \\
                 ~~~~\Sigma; \Gamma, \Gamma'
                              \entails
                              e_1 : D
             }
             \hspace{-2.0ex}
             \\
             \matchassn {\Sigma, \Sigma'} \Gamma A {(p \patsect \patcomplement p_1)}
             \entails
             ms : D
           }
           {
             \matchassn {\Sigma, \Sigma'} \Gamma A p
             \entails
             \big( (\Match{p_1}{e_1}) \matchor ms\big) : D
           }
}
           
           \begin{llproof}
             \ePf{\Sigma, \Sigma'} {A \refines \tau}  {Subderivation}
             \ePf{\Sigma, \Omega, \Sigma'} {A \refines \tau}  {By \Lemmaref{lem:weakening-low} (vi)}
             \ePf{\Ursig} {p_1 : \tau}  {Subderivation}
             \decolumnizePf
             \sstrPf{(\Sigma, \Omega, \Sigma')}{(\Sigma, \Sigma')}       {By \Definitionref{def:signature-str}}
             \ePf{\Sigma, \Omega, \Sigma'}{\Gamma^+ \gstr \Gamma}   {Given}
             \ePf{\Sigma, \Sigma'}{A \type}   {Given}
           \end{llproof}

           By Lemma \ref{lem:intersect-strength},
           \[
                   \intersect{\Sigma, \Omega, \Sigma'}{A}{p \patsect p_1}
                   \trkstr
                   \intersect{\Sigma, \Sigma'}{A}{p \patsect p_1}
           \]
           By \Definitionref{def:track-str}, each track
           in $\intersect{\Sigma, \Omega, \Sigma'}{A}{p \patsect p_1}$ is stronger than some track in
           $\intersect{\Sigma, \Sigma'}{A}{p \patsect p_1}$.

           That is,
           if
           $\intrsctresult{\Sigma_0^+}{\Gamma_0^+}{B^+}
            \in
            \intersect{\Sigma, \Omega, \Sigma'}{A}{p \patsect p_1}$,
           then
           \[
               \Sigma, \Omega, \Sigma'
               \entails
               \intrsctresult{\Sigma_0^+}{\Gamma_0^+}{B^+}
               \trkstr
               \intrsctresult{\Sigma_0}{\Gamma_0}{B}
           \]
           where $\intrsctresult{\Sigma_0}{\Gamma_0}{B} \in \intersect{\Sigma, \Sigma'}{A}{p \patsect p_1}$.

           \begin{llproof}
             \ePf{
               \Sigma, \Sigma' %
               ;
               \Gamma, \Gamma_0 %
             }
             {e_1 : D}
             {Subderivation}
             \proofsep
             \ePf{\Sigma, \Omega, \Sigma'} {\Gamma \gstr \Gamma}  {By \Lemref{lem:context-str} (i)}
             \ePf{\Sigma, \Omega, \Sigma'} {\Gamma_0^+ \gstr \Gamma_0}  {By above $\intrsctresult{\Sigma_0^+}{\Gamma_0^+}{B^+}
               \trkstr
               \intrsctresult{\Sigma_0}{\Gamma_0}{B}$}
             \ePf{\Sigma, \Omega, \Sigma'} {(\Gamma, \Gamma_0^+) \gstr (\Gamma, \Gamma_0)}   {By \Lemref{lem:context-str} (iii)}
             \proofsep
             \ePf{ \Sigma, \Omega, \Sigma'%
               ;\Gamma, \Gamma_0^+%
             }{e_1 : D}  {By i.h.}
           \end{llproof}

          Since the above holds for all tracks, this gives the necessary ``for all'' premises.
          Then:
          
          \begin{llproof}
            \ePf{\matchassn {\Sigma, \Sigma'} \Gamma A {(p \patsect \patcomplement p_1)}}
                   {ms : D}
                   {Subderivation}
            \ePf{\matchassn {\Sigma, \Omega, \Sigma'} {\Gamma^+} {A} {(p \patsect \patcomplement p_1)}}
                   {ms : D}
                   {By i.h.}
            \proofsep
            \ePf{\matchassn{\Sigma, \Omega, \Sigma'}{\Gamma^+}{A}{p}}
                    {\big( (\Match{p_1}{e_1}) \matchor ms\big) : D}
                    {By \TypeMs}
          \end{llproof}
          \qedhere
  \end{itemize}
\end{proof}

\begin{restatable}[Value substitution]{lemma}{lemvaluesubstitution}
\XLabel{lem:value-substitution}
   Suppose $\Sigma; \Gamma_L, \Gamma_R \entails v : A$.
   \vspace{-0.6ex}

    \begin{enumerate}[(i)]
      \item
          If $\Sigma; \Gamma_L, x : A, \Gamma_R \entails e : B$
          then $\Sigma; \Gamma_L, \Gamma_R \entails [v/x]e : B$.
       \item
           If $\matchassn{\Sigma}{\Gamma_L, x : A, \Gamma_R}{B}{p} \entails ms : D$
           then
           $\matchassn{\Sigma}{\Gamma_L, \Gamma_R}{B}{p} \entails [v/x]ms : D$.
    \end{enumerate}
\end{restatable}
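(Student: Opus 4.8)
The plan is to prove both parts simultaneously by induction on the height of the given typing derivation (for part (i), the derivation of $\Sigma; \Gamma_L, x : A, \Gamma_R \entails e : B$; for part (ii), the derivation of $\matchassn{\Sigma}{\Gamma_L, x : A, \Gamma_R}{B}{p} \entails ms : D$). A simultaneous induction is forced because part (i) invokes part (ii) in the \TypeDataE case while part (ii) invokes part (i) in the \TypeMs case, and in each instance the appeal is to a strict subderivation. Throughout I assume the usual Barendregt freshness convention, so that the bound variable of a $\lambda$ and the variables bound by $\As$-patterns are chosen distinct from $x$ and from the free variables of $v$; this lets $[v/x]$ commute with descent under binders without capture.

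For part (i) I would case-split on the concluding rule. The only interesting base case is \TypeVar: if the looked-up variable is $x$, then $B = A$ and $[v/x]x = v$, so the goal is exactly the hypothesis $\Sigma; \Gamma_L, \Gamma_R \entails v : A$; otherwise the variable lies in $\Gamma_L$ or $\Gamma_R$ and \TypeVar reapplies unchanged. The congruence rules \TypeArrI, \TypeArrE, \TypeProdI, \TypeUnitI, and \TypeDataI are routine: apply the induction hypothesis to the subderivations and reassemble, extending $\Gamma_R$ with the (fresh) bound variable in the \TypeArrI case; the constructor-typing premise $\Sigma \entails c : A' \arr s$ of \TypeDataI mentions no term context and is untouched. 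Rule \TypeSub reuses its subtyping premise verbatim after applying the induction hypothesis to the typing premise.

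Three cases deserve care. In \TypeSectI the subject is a value $v'$; here I first observe that $[v/x]v'$ is again a value (substituting a value into a value yields a value), so the value restriction is preserved, and then apply the induction hypothesis to each of the two premises before reapplying \TypeSectI. In \TypeDeclare the body is typed under the enlarged signature $\Sigma, \Sigma''$; since signatures mention no term variables, $[v/x]$ passes through the declaration, and to invoke the induction hypothesis under $\Sigma, \Sigma''$ I must first re-type $v$ there, which I obtain from Weakening (\Theoremref{thm:weakening}, part (1)) applied to $\Sigma; \Gamma_L, \Gamma_R \entails v : A$ (with $\sigjudg{\Sigma}$ recovered by inversion on the premise $\sigjudg{(\Sigma, \Sigma'')}$); the signature-well-formedness and $B \type$ premises are independent of the substituted term and survive unchanged. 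Finally \TypeDataE applies the induction hypothesis (i) to the scrutinee and the induction hypothesis (ii) to the match premise.

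For part (ii), rule \TypeMsEmpty is immediate, since $[v/x]\emptyms = \emptyms$ and the premise $\intersect{\Sigma}{B}{p} = \emptyset$ depends only on $\Sigma$, $B$, and $p$, not on the term context. In \TypeMs the body $e_1$ of each arm is typed in the context extended by the track context $\Gamma'$ produced by $\intersectx$; because $\intersectx$, the refinement premise $B \refines \tau$, and the pattern-typing premise $p_1 : \tau$ all ignore the term context, these premises carry over verbatim, and I apply the induction hypothesis (i) to each $e_1$ regarding $\Gamma_R, \Gamma'$ as the ``right'' context, using that $v$ remains typable once the fresh track variables of $\Gamma'$ are appended (a trivial weakening by unused bindings, licensed by freshness); I then recurse with the induction hypothesis (ii) on the tail $ms$. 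I expect the only real obstacle to be bookkeeping rather than any conceptual difficulty: correctly threading the value restriction through \TypeSectI, the signature enlargement through \TypeDeclare, and the appended track contexts through \TypeMs.
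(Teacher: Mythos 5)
Your proof is correct and takes essentially the same route as the paper's: a mutual induction on the two derivations, with part (i) invoking part (ii) at \TypeDataE and part (ii) invoking part (i) at \TypeMs, and with \TypeVar as the only interesting base case. You additionally spell out details the paper elides---that $[v/x]v'$ remains a value for \TypeSectI, and that \TypeDeclare requires re-typing $v$ under the extended signature via weakening---both of which are genuinely needed and correctly handled.
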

\begin{proof}
  In each part, by induction on the derivation specific to that part.

  For part (i), in the \TypeVar case, we have $\Sigma; \Gamma_L, \Gamma_R \entails v : A$,
  and $v = [v/x]x$.  In all other cases, use the i.h.\ on each premise, then apply the same rule.
  (For \TypeDataE, use part (ii).)
  
  For part (ii), we have two cases:

  \begin{itemize}
  \ProofCaseRule{\TypeMs}
      The $\intersectx$ function does not depend on the typing context, so we get
      the same set of tracks.  Apply the i.h.\ (i) to each typing subderivation and
      the i.h.\ (ii) to the last subderivation.

  \ProofCaseRule{\TypeMsEmpty}  This rule does not depend on the typing context at all, so
          we just apply it.
  \qedhere
  \end{itemize}
\end{proof}

\begin{restatable}[Multiple substitution]{lemma}{lemmultiplevaluesubstitution}
\XLabel{lem:multiple-value-substitution}
~\\
   If $\Sigma; \Gamma, \Gamma' \entails e : B$
   and $\Sigma; \Gamma \entails \theta : \Gamma'$
   then $\Sigma; \Gamma \entails [\theta]e : B$.
\end{restatable}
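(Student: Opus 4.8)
The plan is to induct on the derivation of the substitution typing judgment $\Sigma; \Gamma \entails \theta : \Gamma'$ (equivalently, on the length of $\Gamma'$), peeling off one binding at a time and discharging each with the single-variable substitution lemma, Lemma~\ref{lem:value-substitution}(i). In the base case the derivation ends in \SubstEmpty, so $\theta = \emptysubst$ and $\Gamma' = \cdot$; then $[\theta]e = e$, and the hypothesis $\Sigma; \Gamma, \cdot \entails e : B$ is exactly the goal.

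In the inductive case the derivation ends in \SubstVar, so $\theta = (\theta_0, v/x)$ and $\Gamma' = (\Gamma_0', x : A)$, with subderivations $\Sigma; \Gamma \entails \theta_0 : \Gamma_0'$ and $\Sigma; \Gamma \entails v : A$, and the given typing is $\Sigma; \Gamma, \Gamma_0', x : A \entails e : B$. First I would substitute $v$ for $x$: viewing the context as $(\Gamma, \Gamma_0'), x : A$ (so that $\Gamma_L = \Gamma, \Gamma_0'$ and $\Gamma_R = \cdot$), Lemma~\ref{lem:value-substitution}(i) yields $\Sigma; \Gamma, \Gamma_0' \entails [v/x]e : B$, \emph{provided} $v$ is typed under the surrounding context $\Gamma, \Gamma_0'$. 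Since we only have $\Sigma; \Gamma \entails v : A$, this step needs the evident context weakening---adding the fresh, unused bindings $\Gamma_0'$ preserves typing---which is a routine induction on the typing derivation and is entirely orthogonal to the extensible-signature machinery (note that the variables of $\Gamma_0'$ are fresh for $\Gamma$ and hence do not occur free in $v$). With $\Sigma; \Gamma, \Gamma_0' \entails [v/x]e : B$ in hand, I apply the induction hypothesis to $[v/x]e$ and the smaller substitution $\theta_0 : \Gamma_0'$, obtaining $\Sigma; \Gamma \entails [\theta_0]([v/x]e) : B$.

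It then remains to identify $[\theta_0]([v/x]e)$ with $[\theta]e$. Because $\Gamma, \Gamma_0', x : A$ is well-formed with distinct variables, $x$ is fresh for $\Gamma$ and so does not occur in any value of $\theta_0$ (each is typed under $\Gamma$); dually, $v$ mentions no variable of $\dom{\Gamma_0'} = \dom{\theta_0}$. Hence the sequential substitution $[\theta_0] \circ [v/x]$ coincides with the simultaneous substitution $[\theta]$, and the result follows. The proof is essentially bookkeeping on top of Lemma~\ref{lem:value-substitution}; the only points requiring care---and they are minor---are this freshness/commutation argument and the auxiliary context weakening, which is unavoidable no matter how the induction is arranged (a direct induction on the typing of $e$ handles the variable case cleanly but reintroduces the same need at the binder cases \TypeArrI and \TypeDataE). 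Crucially, none of these steps touches subsorting, signature well-formedness, or $\intersectx$, so this lemma is insensitive to the novel features of the system.
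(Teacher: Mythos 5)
Your proof takes essentially the same route as the paper's: induction on the derivation of $\Sigma; \Gamma \entails \theta : \Gamma'$, with the \SubstEmpty case trivial and the \SubstVar case discharged by \Lemmaref{lem:value-substitution}. The paper's version is a two-line sketch that leaves implicit exactly the two pieces of bookkeeping you make explicit---weakening the typing of $v$ by the unused bindings $\Gamma_0'$, and the identification of $[\theta_0]([v/x]e)$ with $[\theta]e$ via freshness---so your extra care is warranted but not a departure.
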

\begin{proof}
  By induction on the derivation of $\Sigma; \Gamma \entails \theta : \Gamma'$.
  In the \SubstEmpty case, apply the equality $[\theta]e = [\cdot]e = e$.
  In the \SubstVar case, use the i.h.\ and \Lemmaref{lem:value-substitution}.
\end{proof}

\subsection{Value inversion}

A value inversion (or \emph{canonical forms}) lemma holds:

\begin{restatable}[Inversion]{lemma}{leminversion}
\XLabel{lem:inversion}
Suppose $\Sigma; \cdot \entails v :B$.

\vspace{-0.6ex}
\begin{enumerate}[(1)]
\item %
  If $\Sigma \entails B \subtype s$
  then 
  there exist $c$ and $v'$ such that
  $v = c(v')$
  \\
  and
  $(c : A \arr t) \in \Sigma$
  and $\Sigma \entails t \subsort s$
  and $\Sigma; \cdot \entails v' : A$.

\item %
  If $\Sigma \entails B \subtype (A_1 \arr A_2)$
  \\
  then
  $v = \Lam{x} e$
  and
  $\Sigma; \cdot, x : B_1 \entails e : A_2$
  where
  $\Sigma \entails A_1 \subtype B_1$.

\item %
  If $\Sigma \entails B \subtype \unitty$
  then
  $v = \unit$.  

\item %
  If $\Sigma \entails B \subtype (A_1 * A_2)$
  then
  $v = \Pair{v_1}{v_2}$
  where $\Sigma; \cdot \entails v_1 : A_1$
  and  $\Sigma; \cdot \entails v_2 : A_2$.
\end{enumerate}  
\end{restatable}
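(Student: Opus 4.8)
The plan is to proceed by induction on the derivation of $\Sigma; \cdot \entails v : B$, proving all four parts simultaneously. Since the context is empty, \TypeVar cannot apply, and since $v$ is a value, neither the elimination rules \TypeArrE and \TypeDataE nor \TypeDeclare can be the concluding rule. This leaves the two ``non-structural'' rules \TypeSub and \TypeSectI together with the introduction rules \TypeUnitI, \TypeArrI, \TypeProdI, and \TypeDataI.

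The two non-structural cases reduce to the induction hypothesis. In the \TypeSub case we have $\Sigma; \cdot \entails v : A'$ with $\Sigma \entails A' \subtype B$; composing with the hypothesis of the relevant part (say $\Sigma \entails B \subtype s$ in part (1)) via transitivity of subtyping (\Lemmaref{lem:subtyping-transitivity}) gives $\Sigma \entails A' \subtype s$, and the result follows by applying the induction hypothesis to the subderivation $\Sigma; \cdot \entails v : A'$. In the \TypeSectI case, $B = A_1 \sectty A_2$, and the given subtyping hypothesis $\Sigma \entails (A_1 \sectty A_2) \subtype C$ can only have been derived by \SubSectL{k}, since that is the only rule whose left-hand side is an intersection and whose right-hand side can be a sort, arrow, unit, or product; inverting it yields $\Sigma \entails A_k \subtype C$, and I apply the induction hypothesis to $\Sigma; \cdot \entails v : A_k$.

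For each introduction rule, $B$ has a fixed top-level shape, so in each of the four parts I invert the given subtyping hypothesis. Because the rules in \Figureref{fig:subtyping} are orthogonal---each relates types built from the same connective, apart from the two intersection rules---a subtyping judgment between mismatched head connectives (such as an arrow below a sort, or a sort below a product) is underivable, so three of the four parts are vacuous in each case. The matching part is discharged directly: in the \TypeArrI case, part (2) inverts $\Sigma \entails (A \arr B') \subtype (A_1 \arr A_2)$ via \SubArr to obtain $\Sigma \entails A_1 \subtype A$ and $\Sigma \entails B' \subtype A_2$, takes $B_1 = A$, and uses \TypeSub to retype the body at $A_2$; the \TypeProdI and \TypeUnitI cases are analogous, with subsumption used to adjust the component types. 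The \TypeDataI case (part (1)) is the most involved: from the premise $\Sigma \entails c : A \arr s'$ I invert \ConArr to recover a declared typing $(c : A \arr t) \in \Sigma$ with $\Sigma \entails t \subsort s'$, invert the given $\Sigma \entails s' \subtype s$ via \SubSort to get $\Sigma \entails s' \subsort s$, and compose by \SubsortTrans to obtain $\Sigma \entails t \subsort s$; the argument $v'$ already has type $A$ from the second premise of \TypeDataI.

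I expect the main obstacle to be handling the interaction between the subtyping hypothesis and the inductive structure cleanly, in particular invoking the ``inversion'' (underivability) facts for subtyping uniformly across the many vacuous subcases. These facts follow from the orthogonal rule design, and it would streamline the proof to record a small auxiliary observation first: if $\Sigma \entails A \subtype B$ and neither $A$ nor $B$ is an intersection, then $A$ and $B$ share the same head connective. A secondary point is that transitivity of subtyping carries well-formedness side conditions on the intermediate types, which I discharge using the paper's standing assumption that every type occurring in the given derivation is well-formed under the local signature.
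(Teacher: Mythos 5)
Your proposal is correct and follows essentially the same route as the paper's proof: induction on the typing derivation, with \TypeSub handled by transitivity of subtyping plus the induction hypothesis, \TypeSectI by inverting \SubSectL{k} (noting \SubSectR cannot apply since none of the lemma's right-hand types is an intersection), and each introduction rule discharged by inverting the matching subtyping rule (with \ConArr, \SubSort, and \SubsortTrans composed in the \TypeDataI case exactly as the paper does). The only difference is presentational---you organize the case analysis around the concluding typing rule and treat all four parts simultaneously, whereas the paper argues part by part, using the subtyping hypothesis to restrict which typing rules are possible---but the underlying argument is the same.
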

\begin{proof}
  By induction on the given derivation.

  \begin{itemize}
  \item \textbf{Part (1)}:

    We have $\Sigma \entails B \subtype s$.  By inversion on subtyping,
    $B$ has the form
    $B^* \bnfas
      t
      \bnfalt B^*_1 \sectty B^*_2
    $.
    Thus, 
    the only possible cases are \TypeSub,
  \TypeSectI,
  and \TypeDataI.

    (\TypeDeclare is impossible because a $\xdeclare$ is not a value.)

      \begin{itemize}
            \DerivationProofCase{\TypeSub}
                 {\Sigma; \cdot \entails v : B'
                   \\
                   \Sigma \entails B' \subtype B
                 }
                 {\Sigma; \cdot \entails v : B}

                 \begin{llproof}
                   \ePf{\Sigma} {B' \subtype B}   {Subderivation}
                   \ePf{\Sigma; \cdot} {v : B'}   {Subderivation}
                   \ePf{\Sigma} {B \subtype s}   {Given}
                   \ePf{\Sigma} {B' \subtype s}   {By \Lemmaref{lem:subtyping-transitivity}}
                \end{llproof}

                The result follows by i.h.

            \DerivationProofCase{\TypeSectI}
                 {\Sigma; \cdot \entails v : B_1
                  \\
                  \Sigma; \cdot \entails v : B_2
                 }
                 {\Sigma; \cdot \entails v : (B_1 \sectty B_2)}

                \begin{llproof}
                   \ePf{\Sigma}{(B_1 \sectty B_2) \subtype s}   {Given}
                   \ePf{\Sigma}{B_k \subtype s}   {By inversion (\SubSectL{k})}
                   \ePf{\Sigma; \cdot}{v : B_k}   {Subderivation}
                \end{llproof}

                The result follows by i.h.

             \DerivationProofCase{\TypeDataI}
                  {\Sigma \entails c : A \arr s_0
                   \\
                   \Sigma; \cdot \entails v' : A}
                  {\Sigma; \cdot \entails \underbrace{c(v')}_{v} : s_0}

                  \begin{llproof}
                  \Hand  \eqPf{v}{c(v')}   {Above}
                  \Hand  \ePf{\Sigma; \cdot} {v' : A}   {Subderivation}
                      \ePf{\Sigma} {c : A \arr s_0}   {Subderivation}
                      \inPf{(c : A \arr t)}{\Sigma} {By inversion (\ConArr)}
                      \ePf{\Sigma} {t \subsort s_0}  {\ditto}
                      \ePf{\Sigma} {s_0 \subtype s}  {Given}
                      \ePf{\Sigma} {s_0 \subsort s}  {By inversion (\SubSort)}
                  \Hand   \ePf{\Sigma} {t \subsort s}  {By \SubsortTrans}
                  \end{llproof}
      \end{itemize}

  \item \textbf{Part (2)}:

        The only possible cases are \TypeSub,
      \TypeSectI,
      and \TypeArrI.

          \begin{itemize}
               \ProofCasesRules{\TypeSub, \TypeSectI%
               }
                    Similar to the respective cases for part (1).

              \DerivationProofCase{\TypeArrI}
                   {\Sigma; \Gamma, x : B_1 \entails e : B_2
                   }
                   {\Sigma; \Gamma \entails \underbrace{\Lam{x} e}_{v} : (B_1 \arr B_2)}

                   \begin{llproof}
                     \ePf{\Sigma} {(B_1 \arr B_2) \subtype (A_1 \arr A_2)}   {Given}
                   \Hand  \ePf{\Sigma} {A_1 \subtype B_1}   {By inversion}
                     \ePf{\Sigma} {B_2 \subtype A_2}   {\ditto}
                     \ePf{\Sigma; \Gamma, x : B_1}{e : B_2}  {Subderivation}
                   \Hand  \ePf{\Sigma; \Gamma, x : B_1}{e : A_2}  {By \TypeSub}
                   \end{llproof}

          \end{itemize}
          
  \item \textbf{Part (3)}:  Similar to part (2), but with \TypeUnitI instead of \TypeArrI.

  \item \textbf{Part (4)}:  Similar to part (2), but with \TypeProdI instead of \TypeArrI.
  \qedhere
  \end{itemize}
\end{proof}

\subsection{Operational semantics lemmas}

\begin{restatable}{lemma}{lemvaluesdontstep}
\XLabel{lem:values-dont-step}
  If $e$ is a value then there exists no $e'$ such that $e \step e'$.
\end{restatable}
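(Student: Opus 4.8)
The plan is to prove the contrapositive: if $e \step e'$ then $e$ is not a value. This is equivalent to the stated lemma but more convenient, since it lets me induct on the derivation of $e \step e'$ rather than reason about the \emph{absence} of a derivation. The three ``redex'' rules \StepBeta, \StepCase, and \StepDeclare are immediate: their left-hand sides are an application $(\Lam{x} e_1)\,v_2$, a case expression $\Case{v}{ms}$, and a signature extension $\declare{\Sigma'}{e}$, respectively, and none of these is a value form (values are $x$, $\Lam{x} e$, $\Pair{v_1}{v_2}$, $c(v)$, and $\anno{v}{A}$). The only rule requiring real work is the congruence rule \StepContext, where $e = \E[e_0]$ and the premise is a subderivation $e_0 \step e_0'$.

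To handle \StepContext cleanly, I would first establish a small auxiliary fact about evaluation contexts: \emph{if $\E[e_0]$ is a value, then $e_0$ is a value}. This is proved by induction on the structure of $\E$. When $\E$ has the shape $\E_1\,e$, $v\,\E_1$, or $\Case{\E_1}{ms}$, the filled context $\E[e_0]$ is an application or a case expression and hence never a value, so the implication holds vacuously. When $\E = \hole$ the claim is trivial, since $\E[e_0] = e_0$. In the remaining cases $\E = c\,\E_1$, $\E = \Pair{\E_1}{e}$, and $\E = \Pair{v}{\E_1}$, the value grammar forces the immediate subterm $\E_1[e_0]$ to be a value, so the result follows from the induction hypothesis on $\E_1$.

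With this auxiliary lemma in hand, the \StepContext case of the main induction is immediate: by the induction hypothesis applied to the subderivation $e_0 \step e_0'$, the term $e_0$ is not a value; if $e = \E[e_0]$ were a value, the auxiliary lemma would force $e_0$ to be a value, a contradiction, so $\E[e_0]$ is not a value. The main obstacle is precisely this congruence case, and in particular the degenerate context $\E = \hole$, which a naive structural induction on the value itself would fail to make smaller. Routing the argument through the induction on the step derivation, together with the auxiliary lemma, sidesteps that difficulty uniformly: the $\hole$ subcase is just the instance of the auxiliary lemma where $\E[e_0] = e_0$.
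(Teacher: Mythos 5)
Your proof is correct, but it takes a genuinely different route from the paper's. The paper disposes of this lemma with a one-line structural induction on the value $e$ itself: for each value form one checks that no redex rule's conclusion matches, and that the congruence rule \StepContext would require some immediate subterm---itself a value by the grammar---to step, contradicting the induction hypothesis. You instead prove the contrapositive by induction on the derivation of $e \step e'$, factoring out the auxiliary fact that if $\E[e_0]$ is a value then $e_0$ is a value. The two decompositions cover the same cases in a different order, but yours is arguably the more careful one: with the degenerate context $\E = \hole$, rule \StepContext concludes $e_0 \step e_0'$ from that very judgment as its premise, so a bare structural induction on $e$ does not get smaller there and implicitly needs a secondary induction on the (finite) derivation---exactly the wrinkle your choice of induction measure resolves uniformly. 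The cost is the extra evaluation-context lemma, which is routine; either way the result holds.
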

\begin{proof}
  By induction on $e$.
\end{proof}

\subsection{Type preservation and progress}

\begin{restatable}[Pattern intersection]{lemma}{lempatternintersection}
\XLabel{lem:pattern-intersection}
~\\
  If
  $\vmatch{p_1}{v}{\theta_1}$
  and
  $\vmatch{p_2}{v}{\theta_2}$
  then
  $\vmatch{p_1 \patsect p_2}{v}{\theta}$.
\end{restatable}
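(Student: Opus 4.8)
The plan is to argue by induction on the combined size $|p_1| + |p_2|$ of the two patterns, with a case analysis that follows the defining clauses of $p_1 \patsect p_2$ (Figure \ref{fig:complement-intersection}); in each recursive clause the argument patterns shrink, so the measure is well-founded. A convenient simplification is that $\theta$ is existentially quantified in the conclusion, so I only need to \emph{produce} some matching substitution, not relate it to $\theta_1$ or $\theta_2$, which keeps the bookkeeping light.

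The cases split into two kinds. The \emph{vacuous} cases are exactly those in which $\patsect$ returns $\emptypattern$: when $p_1 = \emptypattern$ or $p_2 = \emptypattern$, when the two patterns have distinct head constructors $c_1 \neq c_2$, and when one pattern is a pair and the other a constructor application. In each of these, one of the hypotheses $\vmatch{p_1}{v}{\theta_1}$, $\vmatch{p_2}{v}{\theta_2}$ is in fact underivable: there is no match rule whose conclusion is $\vmatch{\emptypattern}{v}{\theta}$, and inversion on \MatchCon or \MatchPair forces $v$ into incompatible value shapes ($c_1(\cdot)$ vs.\ $c_2(\cdot)$, or $\Pair{\cdot}{\cdot}$ vs.\ $c(\cdot)$), contradicting the syntactic disjointness of value forms and injectivity of constructors. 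So these cases hold trivially.

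The \emph{structural} cases are where the result is actually built. For $\wildcard \patsect p_2 = p_2$ I return $\theta = \theta_2$ directly (symmetrically for $p_2 = \wildcard$, taking $\theta_1$). For $c(p_1') \patsect c(p_2') = c(p_1' \patsect p_2')$ I invert both hypotheses via \MatchCon to learn $v = c(v')$ (using constructor injectivity to identify the two inner values), so that $v'$ matches both $p_1'$ and $p_2'$; I apply the induction hypothesis to $p_1', p_2'$ and $v'$, then re-apply \MatchCon. The pair clause proceeds identically component by component, reassembling the two resulting substitutions with \MatchPair. For the or-pattern clause $(p_{1a} \patunion p_{1b}) \patsect p = (p_{1a} \patsect p) \patunion (p_{1b} \patsect p)$, inversion on \MatchOr tells me that $v$ matches one disjunct, say $p_{1a}$; I apply the induction hypothesis to that disjunct together with $p$ and inject back with \MatchOr.

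The only real subtlety—and the step I expect to need the most care—is confirming that the case analysis is genuinely exhaustive and that each vacuous case is discharged by an actual contradiction rather than mere appeal to emptiness. In particular I must check that the side conditions on the two or-pattern clauses (requiring $p \notin \{\emptypattern, \wildcard\}$, and for the right clause that $p$ is not itself an or-pattern) route all the remaining shapes of the other pattern into the already-handled $\emptypattern$, $\wildcard$, and or-pattern cases, leaving no defining clause unaccounted for. Since $\patsect$ is undefined on $\xAs$-patterns, those simply never arise as a case, matching the fact that the residual and complement patterns fed to this lemma never contain them.
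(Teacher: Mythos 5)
Your proposal is correct and follows essentially the same route as the paper's proof: induction on the patterns (the paper phrases it as mutual induction on $p_1$ and $p_2$, you as induction on combined size, which is equivalent here), with the same split into vacuous clauses discharged by underivability of one hypothesis and structural clauses handled by inversion on the match rules, the induction hypothesis, and reassembly. The only cosmetic difference is that you organize the cases by the defining clauses of $\patsect$ while the paper organizes them by the shape of $p_1$ and then $p_2$, handling the mismatched-constructor and pair-versus-constructor situations implicitly via inversion rather than as named vacuous cases.
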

\begin{proof}  By mutual induction on $p_1$ and $p_2$.

  \begin{itemize}
  \ProofCaseThing{p_1 = \emptypattern  \OR  p_2 = \emptypattern}
       Impossible: $\vmatch{\emptypattern}{v}{\dots}$ is not derivable.

  \ProofCaseThing{p_1 = \wildcard  \OR  p_2 = \wildcard}

      Consider the $p_1 = \wildcard$ case; the $p_2 = \wildcard$ case is similar.
      It is given that $\vmatch{p_2}{v}{\theta_2}$.
      We have 
      $p_1 \patsect p_2 = \wildcard \patsect p_2 = p_2$,
      so $\vmatch{p_1 \patsect p_2}{v}{\theta}$ (letting $\theta = \theta_2$).

  \ProofCaseThing{p_1 = p_{11} \patunion p_{12}}
     
      \begin{llproof}
        \vmatchPf{p_{11} \patunion p_{12}}{v}{\theta_1}   {Given}
      \end{llproof}

      By inversion (\MatchOr), either $p_{11}$ or $p_{12}$ is matched.
      Suppose the former; the latter is similar.

      \begin{llproof}
        \vmatchPf{p_{11}}{v}{\theta_1}   {By inversion (\MatchOr)}
        \vmatchPf{p_{11} \patsect p_2}{v}{\theta}   {By i.h.}
        \vmatchPf{(p_{11} \patsect p_2) \patunion (p_{12} \patsect p_2)}{v}{\theta}   {By \MatchOr}
        \eqPf{(p_{11} \patunion p_{12}) \patsect p_2}{(p_{11} \patsect p_2) \patunion (p_{12} \patsect p_2)}   {By def.\ of $\patsect$}
      \Hand  \vmatchPf{p_1 \patsect p_2}{v}{\theta}   {By above equality}
      \end{llproof}

  \ProofCaseThing{p_2 = p_{21} \patunion p_{22}}
      Similar to the previous case.

  \ProofCaseThing{p_1 = c(p_1')}

      By inversion on $\vmatch{c(p_1')}{v}{\theta_1}$, we have $v = c(v')$
      and $\vmatch{p_1'}{v'}{\theta_1}$.

      We already dealt with the cases for $p_2$ being $\emptypattern$, $\wildcard$
      or a $\patunion$.
      So
      by inversion on $\vmatch{p_2}{c(v')}{\theta_2}$,
      we have $p_2 = c(p_2')$
      and $\vmatch{p_2'}{v'}{\theta_2}$.

      By the definition of $\patsect$, we have $c(p_1') \patsect c(p_2') = c(p_1' \patsect p_2')$.
      By i.h., $\vmatch{p_1' \patsect p_2'}{v'}{\theta'}$.
      
      By \MatchCon, $\vmatch{c(p_1' \patsect p_2')}{c(v')}{\theta'}$.

  \ProofCaseThing{p_1 = \Pair{p_{11}}{p_{12}}}

      By inversion on $\vmatch{ \Pair{p_{11}}{p_{12}}}{v}{\theta_1}$, we have $v = \Pair{v_1}{v_2}$
      and $\vmatch{p_{11}}{v_1}{\theta_{11}}$
      and $\vmatch{p_{12}}{v_2}{\theta_{12}}$
      where $\theta_1 = \theta_{11} \composesubst \theta_{12}$.
      
      We already dealt with the cases for $p_2$ being $\emptypattern$, $\wildcard$
      or a $\patunion$.  So by inversion on $\vmatch{p_2}{\Pair{v_1}{v_2}}{\theta_2}$,
      we have 
      $\vmatch{p_{21}}{v_1}{\theta_{21}}$
      and $\vmatch{p_{22}}{v_2}{\theta_{22}}$
      where $\theta_2 = \theta_{21} \composesubst \theta_{22}$.
      
      By the definition of $\patsect$, we have
      $\Pair{p_{11}}{p_{12}}
      \patsect
      \Pair{p_{21}}{p_{22}})
      =
      \Pair{p_{11} \patsect p_{21}}{p_{21} \patsect p_{22}}$.
      
      By i.h., $\vmatch{p_{11} \patsect p_{21}}{v_1}{\theta_1'}$.
      By i.h., $\vmatch{p_{12} \patsect p_{22}}{v_2}{\theta_2'}$.
      
      By \MatchCon, $\vmatch{\Pair{p_{11} \patsect p_{21}}{p_{21} \patsect p_{22}}}{c(v')}{\theta_1' \composesubst \theta_2'}$.
  \qedhere
  \end{itemize}
\end{proof}

\begin{restatable}[Excluded middle for matching]{lemma}{lemexcludedmiddlematch}
\XLabel{lem:excluded-middle-match}
~\\
  If $\novmatch{p}{v}$ then there exists no $\theta$
  s.t.\ 
  $\vmatch{p}{v}{\theta}$.
\end{restatable}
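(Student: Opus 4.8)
The plan is to argue by induction on the derivation of $\novmatch{p}{v}$, in each case assuming toward a contradiction that there is some $\theta$ with $\vmatch{p}{v}{\theta}$ and inverting that matching derivation. The key observation is that the matching rules of \Figureref{fig:step} are syntax-directed on the pattern: the outermost shape of $p$ determines which rule could have concluded $\vmatch{p}{v}{\theta}$. So each case of the induction is driven by the concluding rule of the given no-match derivation, and the pattern shape it fixes tells me exactly which matching rule to invert.

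The cases split naturally into \emph{head-mismatch} cases and \emph{structural} cases. For the head-mismatch cases the contradiction is immediate. In \NomatchEmpty we have $p = \emptypattern$, and no matching rule has a conclusion of the form $\vmatch{\emptypattern}{v}{\theta}$, so no match can exist at all. In \NomatchUnit, \NomatchConHead, and \NomatchPairHead, the unique matching rule applicable to the respective pattern shape (\MatchUnit, \MatchCon, \MatchPair) imposes a shape on $v$ that directly contradicts the side condition recorded in the no-match rule: \MatchUnit forces $v = \unit$, whereas \NomatchUnit assumes $v \neq \unit$; \MatchCon forces $v = c(v')$ for some $v'$, whereas \NomatchConHead assumes no such $v'$ exists; and \MatchPair forces $v$ to be a pair, whereas \NomatchPairHead assumes it is not.

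For the structural cases I invert the forced matching rule and appeal to the induction hypothesis on the corresponding subderivation of the no-match judgment. In \NomatchAs, inverting \MatchAs yields $\vmatch{p}{v}{\theta'}$, contradicting the i.h.\ on the subderivation $\novmatch{p}{v}$. In \NomatchConInner the pattern is $c(p_0)$ and the value is $c(v_0)$; inverting \MatchCon (whose premise value $c(v')$ must equal $c(v_0)$, forcing $v' = v_0$) yields $\vmatch{p_0}{v_0}{\theta}$, contradicting the i.h.\ on $\novmatch{p_0}{v_0}$. In \NomatchPairInner, inverting \MatchPair gives component matches $\vmatch{p_1}{v_1}{\theta_1}$ and $\vmatch{p_2}{v_2}{\theta_2}$; since the no-match premise supplies $\novmatch{p_k}{v_k}$ for some $k$, the i.h.\ on that component yields the contradiction.

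The only case needing slight care is \NomatchOr, because the matching side is nondeterministic: \MatchOr concludes $\vmatch{p_1 \patunion p_2}{v}{\theta}$ from a match on just \emph{one} branch $p_k$. But $\novmatch{p_1 \patunion p_2}{v}$ is derived from $\novmatch{p_1}{v}$ \emph{and} $\novmatch{p_2}{v}$, so whichever branch the match selected, the i.h.\ on that branch supplies the contradiction. I do not anticipate any genuine obstacle: this is a routine mutual-exclusion lemma between the complementary matching and no-match judgments, and the whole argument turns only on the syntax-directedness of matching together with straightforward inversions.
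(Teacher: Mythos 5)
Your proposal is correct and follows essentially the same route as the paper's proof: induction on the derivation of $\novmatch{p}{v}$, using the syntax-directedness of the matching rules to invert the hypothetical match in each case and appealing to the induction hypothesis for the structural cases (with both branches covered in the $\patunion$ case). The paper's version is terser—it dispatches several cases with ``straightforward, using the i.h.\ as needed''—but the argument is the same.
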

\begin{proof}
  By induction on the given derivation of $\novmatch{p}{v}$.

  In all cases, at most one match rule could plausibly be applied,
  \eg for \NomatchAs, only \MatchAs has a conclusion of the right
  form.  In the \NomatchEmpty case, where $p = \emptypattern$,
  no match rule has a conclusion of the right form.

  \begin{itemize}
    \DerivationProofCase{\NomatchEmpty}
         {}
         {\novmatch{\emptypattern}{v}}
         
         The match rules are directed by the syntax of the pattern,
         and there is no match rule with $\emptypattern$ in its conclusion.
    
    \DerivationProofCase{\NomatchAs}
         {\novmatch{p_0}{v}}
         {\novmatch{x \As p_0}{v}}

         The only possibly applicable rule is \MatchAs, but
         by i.h., there exists no $\theta$ such that $\vmatch{p_0}{v}{\theta}$.
    
    \DerivationProofCase{\NomatchOr}
         {
           \novmatch{p_1}{v}
           \\
           \novmatch{p_2}{v}
         }
         {\novmatch{p_1 \patunion p_2}{v}}

         By i.h., neither $\novmatch{p_1}{v}{\theta_1}$
         nor $\novmatch{p_2}{v}{\theta_2}$, so whether we choose $k = 1$
         or $k = 2$, we can't apply \MatchOr.

    \ProofCasesRules{\NomatchConHead, \NomatchConInner, \NomatchUnit,
      \NomatchPairHead, \NomatchPairInner}
        
      Straightforward, using the i.h.\ as needed.
  \qedhere
  \end{itemize}
\end{proof}

\begin{restatable}[Choice]{lemma}{lempatternchoice}
\XLabel{lem:pattern-choice}
~\\
  If $\sigjudg{\Sigma}$
  and
  $\Ursig \entails p : \tau$
  and
  $\Sigma; \cdot \entails v : A$
  and
  $\Sigma \entails A \refines \tau$
  \\
  then
  either
  (1)
  $\vmatch{p}{v}{\theta}$
  or
  (2)
  $\novmatch{p}{v}$ and $\vmatch{\patcomplement p}{v}{\emptysubst}$.
\end{restatable}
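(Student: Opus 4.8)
The plan is to prove the statement by structural induction on the pattern $p$ (equivalently, on the derivation of $\Ursig \entails p : \tau$), using value inversion (\Lemmaref{lem:inversion}) in the product and constructor cases to expose the shape of $v$ from its type $A$ and the refinement $\Sigma \entails A \refines \tau$. The wildcard, empty, as-, and or-pattern cases need no inversion. For $p = \wildcard$ we land in alternative~(1) via \MatchWild. For $p = \emptypattern$ we land in~(2), since $\novmatch{\emptypattern}{v}$ by \NomatchEmpty and $\patcomplement\emptypattern = \wildcard$ matches by \MatchWild. For $p = (x \As p_0)$ we apply the i.h.\ to $p_0$ (same type $\tau$), rebuilding the match with \MatchAs in case~(1) and using $\patcomplement(x \As p_0) = \patcomplement p_0$ in case~(2). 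For $p = (p_1 \patunion p_2)$ we apply the i.h.\ to both subpatterns, landing in~(1) via \MatchOr if either matches; otherwise both are in case~(2), so $\novmatch{p_1 \patunion p_2}{v}$ by \NomatchOr, and since $\patcomplement(p_1 \patunion p_2) = (\patcomplement p_1) \patsect (\patcomplement p_2)$, the two complement matches combine by the pattern-intersection lemma (\Lemmaref{lem:pattern-intersection}).

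The interesting cases are $\Pair{p_1}{p_2}$ and $c(p_0)$, where I must first force $v$ into canonical form. For this I would establish two auxiliary facts by induction on the derivation of $A \refines \tau$: (a) if $\Sigma \entails A \refines d$ then $\Sigma \entails A \subtype s$ for some sort $s$; and (b) if $\Sigma \entails A \refines (\tau_1 * \tau_2)$ then $\Sigma \entails A \subtype (A_1 * A_2)$ for some $A_1 \refines \tau_1$ and $A_2 \refines \tau_2$. Both hold because the only refinements of $d$ (resp.\ of a product) are sorts (resp.\ products) and intersections thereof, and an intersection subtypes either conjunct via \SubSectL{1}. With~(b), \Lemmaref{lem:inversion}(4) gives $v = \Pair{v_1}{v_2}$ with $v_1 : A_1$ and $v_2 : A_2$. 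Applying the i.h.\ to $p_1$ at $v_1$ and to $p_2$ at $v_2$, I land in~(1) via \MatchPair when both match; otherwise in~(2), using the left disjunct $\Pair{\patcomplement p_1}{\wildcard}$ of $\patcomplement\Pair{p_1}{p_2}$ with \NomatchPairInner when $p_1$ fails, and symmetrically the right disjunct when $p_2$ fails.

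For $c(p_0)$, fact~(a) and \Lemmaref{lem:inversion}(1) give $v = c'(v')$ with $(c' : A' \arr t) \in \Sigma$, $\Sigma \entails t \subsort s$, and $\Sigma; \cdot \entails v' : A'$. If $c' \neq c$, then $\novmatch{c(p_0)}{v}$ by \NomatchConHead, and since $c'$ occurs among the disjuncts $c_1(\wildcard) \patunion \dots \patunion c_n(\wildcard)$ of $\patcomplement c(p_0)$, the value matches the complement, giving~(2). If $c' = c$, I recurse on $p_0$ at $v'$; here I must know $\Sigma \entails A' \refines \tau'$ where $\Ursig \entails c : \tau' \arr d$, which I extract from $\sigjudg{\Sigma}$: well-formedness (through \ContypeArr, after weakening via \Lemmaref{lem:weakening-low}) forces each constructor typing in $\Sigma$ to refine its unique unrefined typing in $\Ursig$, so inverting \RefinesArr yields $A' \refines \tau'$. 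The i.h.\ then lands in~(1) via \MatchCon or in~(2) via the disjunct $c(\patcomplement p_0)$ of $\patcomplement c(p_0)$ with \NomatchConInner. Throughout the case-(2) subcases the witnessing substitution is $\emptysubst$, because pattern complements never contain $\xAs$-patterns (and neither the intersection nor the disjunction of $\xAs$-free patterns introduces them), so only \MatchWild, \MatchCon, \MatchPair, and \MatchOr---never \MatchAs---are ever used.

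The hard part will be keeping the induction hypothesis applicable across the constructor case: the recursion descends from $v : A$ to $v' : A'$, and I must guarantee that the refinement premise $A' \refines \tau'$ survives the descent, which is exactly where signature \emph{well-formedness}---rather than mere typability---is indispensable. The product and constructor cases also hinge on the shape facts~(a) and~(b) to bridge from the refinement judgment to a subtyping into a product or sort, as required by the canonical-forms lemma. Getting these auxiliary inductions right, and correctly threading the asymmetry of $\patcomplement\Pair{p_1}{p_2}$ (which tests the two components in two separate disjuncts), is where most of the care lies, even though no single step is deep.
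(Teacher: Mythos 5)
Your proof is correct and follows essentially the same route as the paper's: structural induction on $p$, using the canonical-forms lemma in the pair and constructor cases, the pattern-intersection lemma for or-patterns, and inversion on $\sigjudg{\Sigma}$ via \ContypeArr to recover the refinement premise $A' \refines \tau'$ needed to recurse under a constructor. Your auxiliary facts (a) and (b) bridging from $A \refines \tau$ to a subtyping into a sort or product merely make explicit a step the paper elides when invoking the inversion lemma, and the only omission is the trivial $\unitty$/\PattypeUnit case.
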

\begin{proof}   By structural induction on $p$.
  
  \begin{itemize}
      \ProofCaseRule{\PattypeWild}
         By \MatchWild, $\vmatch{p}{v}{\emptysubst}$.

      \ProofCaseRule{\PattypeAs}
         We have $p = (x \As p_0)$.
         By i.h., either $\vmatch{p_0}{v}{\theta}$,
         or $\novmatch{p_0}{v}$ and $\vmatch{\patcomplement p_0}{v}{\cdot}$.

         If $\vmatch{p_0}{v}{\theta}$,
         then by \MatchAs, $\vmatch{x \As p_0}{v}{\theta}$.

         Otherwise, by \NomatchAs, $\novmatch{x \As p_0}{v}$.
         Above, we obtained $\vmatch{\patcomplement p_0}{v}{\cdot}$;
         by the definition of $\patcomplementsym$, we have
         $\patcomplement (x \As p_0) = \patcomplement p_0$,
         so $\vmatch{\patcomplement (x \As p_0)}{v}{\cdot}$.

      \ProofCaseRule{\PattypeEmpty}
         By \NomatchEmpty, $\novmatch{p}{v}$.
         By the definition of $\patcomplementsym$, we have
         $\patcomplement \emptypattern = \wildcard$.

         By \MatchWild, $\vmatch{\wildcard}{v}{\emptysubst}$.

      \DerivationProofCase{\PattypeOr}
         {
           \Sigma \entails p_1 : \tau
           \\
           \Sigma \entails p_2 : \tau
         }
         {\Sigma \entails (p_1 \patunion p_2) : \tau}

         If the i.h.\ gives $\vmatch{p_1}{v}{\theta}$,
         then by \MatchOr, $\vmatch{p_1 \patunion p_2}{v}{\theta}$.

         Otherwise, we have $\novmatch{p_1}{v}$
         and
         $\vmatch{\patcomplement p_1}{v}{\theta_1'}$:

         \begin{itemize}
         \item 
             If the i.h.\ on $\Sigma \entails p_2 : \tau$
             gives $\vmatch{p_2}{v}{\theta}$,
             then by \MatchOr, $\vmatch{p_1 \patunion p_2}{v}{\theta}$.

         \item
             Otherwise, we have $\novmatch{p_2}{v}$
             and
             $\vmatch{\patcomplement p_2}{v}{\theta_2'}$.

             By \NomatchOr, $\novmatch{p_1 \patunion p_2}{v}$.

             By the definition of $\patcomplementsym$, we have
             $\patcomplement (p_1 \patunion p_2)
             = (\patcomplement p_1) \patsect (\patcomplement p_2)$.

             By \Lemmaref{lem:pattern-intersection},
             $\vmatch{(\patcomplement p_1) \patsect (\patcomplement p_2)}{v}{\cdot}$,
             which was to be shown.
         \end{itemize}

      \ProofCaseRule{\PattypeUnit}

          By \Lemmaref{lem:inversion} (3), $v = \unit$.

          By \MatchUnit, $\vmatch{\unit}{\unit}{\emptysubst}$.

      \ProofCaseRule{\PattypePair}

          We have $p = \Pair{p_1}{p_2}$ and $\tau = \tau_1 * \tau_2$.

          By \Lemmaref{lem:inversion} (4), $v = \Pair{v_1}{v_2}$
          and $\Sigma; \cdot \entails v_1 : B_1$
          and $\Sigma; \cdot \entails v_2 : B_2$
          where $\Sigma \entails B_1 \subtype A_1$
          and $\Sigma \entails B_2 \subtype A_2$.

          If the i.h.\ gives (1), then we have $\vmatch{p_1}{v_1}{\theta_1}$:

          \begin{itemize}
          \item By i.h.\ ($p_2$), we have either (1) $\vmatch{p_2}{v_2}{\theta_2}$
            or (2) $\novmatch{p_2}{v_2}$ and $\vmatch{\patcomplement p_2}{v_2}{\theta_2'}$.

            If (1), apply \MatchPair.

            If (2), then:
            By \NomatchPairInner, $\novmatch{\Pair{p_1}{p_2}}{v}$.

            By \MatchWild and \MatchPair, $\vmatch{\Pair{\wildcard}{\patcomplement p_2}}{v}{\theta_2'}$.
            By \MatchOr, $\vmatch{\Pair{\patcomplement p_1}{\wildcard} \patunion \Pair{\wildcard}{\patcomplement p_2}}{v}{\theta_2'}$.  By the definition of $\patcomplement$,
            this is $\vmatch{\patcomplement \Pair{p_1}{p_2}}{\Pair{v_1}{v_2}}{\theta_2'}$.
          \end{itemize}
          
          Otherwise, we have $\novmatch{p_1}{v_1}$ and $\vmatch{\patcomplement p_1}{v_1}{\theta_1'}$.

          \begin{itemize}
          \item By rule \NomatchPairInner, $\novmatch{\Pair{p_1}{p_2}}{v}$.

            By \MatchWild and \MatchPair, $\vmatch{\Pair{\patcomplement p_1}{\wildcard}}{v}{\theta_1'}$.

            By \MatchOr, $\vmatch{\Pair{\patcomplement p_1}{\wildcard} \patunion \Pair{\wildcard}{\patcomplement p_2}}{\Pair{v_1}{v_2}}{\theta_1'}$.

            By the definition of $\patcomplement$,
            this is
            $\vmatch{\patcomplement \Pair{p_1}{p_2}}{\Pair{v_1}{v_2}}{\theta_1'}$.
          \end{itemize}

      \ProofCaseRule{\PattypeCon}

           We have $p = c(p_0)$.
           
           By \Lemmaref{lem:inversion} (1), $v = c_0(v_0)$
           and $\Sigma; \cdot \entails v_0 : A_0$
           and $(c_0 : A_0 \arr s_0) \in \Sigma$ and $\Sigma \entails s_0 \subsort s$.
           
           If $c \neq c_0$, then:

               \begin{llproof}
               \Hand  \novmatchPf{c(p_0)}{c_0(v_0)}   {By \NomatchConHead}
                   \vmatchPf{\wildcard}{v_0}{\cdot}   {By \MatchWild}
                   \vmatchPf{c_0(\wildcard)}{c_0(v_0)}{\cdot}   {By \MatchCon}
                   \vmatchPf{c(\patcomplement p_0) \patunion \dots \patunion c_0(\wildcard) \patunion \dots}{c_0(v_0)}{\cdot}   {By \MatchOr}
               \Hand  \vmatchPf{\patcomplement\big(c(p_0)\big)}{c_0(v_0)}{\cdot}    {By def.\ of $\patcomplement$}
               \end{llproof}

           If $c = c_0$, then:

               \begin{llproof}
               1~~  \ePf{\Ursig} {p_0 : \tau_0}    {By inversion on \PattypeCon}
                 \ePf{\Ursig} {c : (\tau_0 \arr d)}    {\ditto}
                 \proofsep
               2~~  \ePf{\Sigma; \cdot} {v_0 : A_0}    {Above}
                 \inPf{(c_0 : A_0 \arr s_0)} {\Sigma}    {Above}
                 \contypejudgPf{\Sigma}{c : A_0 \arr s_0}    {By inversion on $\sigjudg{\Sigma}$}
               3~~  \ePf{\Sigma} {A_0 \refines \tau_0}   {By inversion on \ContypeArr}
               \end{llproof}

               By i.h.\ on (1, 2, 3), either:

                     \begin{itemize}
                       \item (1) $\vmatch{p_0}{v_0}{\theta}$:

                         \begin{llproof}
                           \Hand  \vmatchPf{c(p_0)}{c(v_0)}{\theta}   {By \MatchCon}
                         \end{llproof}

                       \item (2) $\novmatch{p_0}{v_0}$ and $\vmatch{\patcomplement p_0}{v_0}{\cdot}$:

                           \begin{llproof}
                           \Hand  \novmatchPf{c(p_0)}{c(v_0)}   {By \NomatchConInner}
                             \vmatchPf{c(\patcomplement p_0)}{c(v_0)}{\cdot}   {By \MatchCon}
                             \decolumnizePf
                             \vmatchPf{c(\patcomplement p_0) \patunion c_1(\wildcard) \patunion \dots \patunion c_n(\wildcard)}{c(v_0)}{\cdot}   {By \MatchOr}
                           \Hand  \vmatchPf{\patcomplement \big(c(p_0)\big)}{c(v_0)}{\cdot}   {By def.\ of $\patcomplement$}
                           \end{llproof}
                     \qedhere
                     \end{itemize}
  \end{itemize}
\end{proof}

\thmintersect*
\begin{proof}
  By structural induction on $p$.

  Case-analyze the clause of the definition of the $\intersectx$ function:

  \begin{itemize}
  \item Case: $p = \wildcard$:

    \begin{llproof}
      \vmatchPf{\wildcard}{v}{\theta}   {Given}
      \eqPf{\theta}{\cdot}   {By inversion (\MatchWild)}
      \eqPf{\vec{\Bdec}}{\{(\cdot; \cdot \entails A)\}}   {By def.\ of $\intersectx$}
      \ePf{\Sigma; \cdot} {v : A}    {Given}
    \Hand \ePf{\Sigma; \cdot} {v : B}    {$B = A$}
    \Hand \ePf{\cdot} {\cdot : \Gamma'}    {By \SubstEmpty ($\Gamma' = \cdot$)}
    \Hand \ePf{\Sigma} {B \subtype A}    {By \Lemmaref{lem:subtyping-reflexivity}}
    \end{llproof}

    \smallskip

  \item Case: $p = \emptypattern$:

    We have $\vmatch{\emptypattern}{v}{\theta}$, which is not derivable:
    this case is impossible.

    \smallskip

  \item Case: $p = x \As p_0$:

    \begin{llproof}
      \vmatchPf{x \As p_0}{v}{\theta}   {Given}
      \vmatchPf{p_0}{v}{\theta}   {By inversion (\MatchAs)}
      \ePf{\Sigma; \cdot}{v : A}   {Given}
      \proofsep
      \eqPf{\intersect{\Sigma}{A}{x \As p_0}}
           {\vec{\Bdec}}
           {Given}
      \decolumnizePf
      \Pf{\Gamma' = (\Gamma_0', x : B)
        }{\AND}
        { \intrsctresult{\Sigma'}{\Gamma_0'}{B} \in \intersect{\Sigma}{A}{p_0}
        }
           {for all $\intrsctresult{\Sigma'}{\Gamma'}{B} \in \vec{\Bdec}$}
           \trailingjust{By definition of $\intersectx$}
      \decolumnizePf
    \Hand  \ePf{\Sigma\argle; \cdot}{v : B}   {By i.h.}
      \inPf{\intrsctresult{\Sigma'}{\Gamma_0'}{B}}{\vec{\Bdec}}   {\ditto}
      \ePf{\cdot}{\theta_0 : \Gamma_0'}   {\ditto}
    \Hand \ePf{\Sigma} {B \subtype A}    {\ditto}
      \proofsep
    \Hand  \ePf{\cdot; \cdot}{(\theta_0, v/x) : (\Gamma_0', x : B)}   {By \SubstVar}
    \end{llproof}

  \item Case: $p = p_1 \patunion p_2$:

    \begin{llproof}
        \eqPf{\intersect{\Sigma}{A}{p_1 \patunion p_2}}{\vec{\Bdec}}   {Given}
        \decolumnizePf
        \vmatchPf{p_1 \patunion p_2}{v}{\theta}   {Given}
        \vmatchPf{p_1}{v}{\theta}   {By inversion (\MatchOr) wlog}
        \inPf{\intrsctresult{\Sigma'}{\Gamma'}{B}}{\intersect{\Sigma}{A}{p_1}}  {By i.h.}
    \Hand   \ePf{\Sigma\argle; \cdot}{v : B}   {\ditto}
    \Hand   \ePf{\Sigma\argle; \cdot}{\theta : \Gamma'}  {\ditto}
    \Hand   \ePf{\Sigma} {B \subtype A}    {\ditto}
       \decolumnizePf
        \subseteqPf{\intersect{\Sigma}{A}{p_1}}{\vec{\Bdec}}   {By def.\ of $\intersectx$}
    \Hand    \inPf{\intrsctresult{\Sigma'}{\Gamma'}{B}}{\vec{\Bdec}}  {By a property of $\in$}
    \end{llproof}

  \item Case: $p = \Pair{p_1}{p_2}$:

      Throughout this case, interpret $k$ as universally quantified.  For example,
      \Lemmaref{lem:inversion} (4) shows both $\Sigma; \cdot \entails v_1 : A_1$
      and $\Sigma; \cdot \entails v_2 : A_2$.

      \begin{llproof}
              \vmatchPf{\Pair{p_1}{p_2}}{v}{\theta}   {Given}
              \eqPf{v}{\Pair{v_1}{v_2}}  {By inversion (\MatchPair)}
              \eqPfParenR{\theta}{\theta_1, \theta_2}  {\ditto}
              \vmatchPf{p_k}{v_k}{\theta_k}  {\ditto}
              \ePf{\Sigma; \cdot} {\Pair{v_1}{v_2} : A_1 * A_2}   {Given}
              \ePf{\Sigma; \cdot} {v_k : A_k}   {By \Lemmaref{lem:inversion} (4)}
              \proofsep
              \inPf
                   {\intrsctresult{\Sigma_k}{\Gamma_k}{B_k}}
                   {\intersect{\Sigma}{A_k}{p_k}}
                   {By i.h.}
              \ePf{\Sigma\bargle; \cdot}{v_k : B_k}   {\ditto}
              \ePf{\Sigma\bargle; \cdot}{\theta_k : \Gamma_k}   {\ditto}
              \ePf{\Sigma} {B_k \subtype A_k}    {\ditto}
              \proofsep
      \Hand   \ePf{\Sigma\arglebargle; \cdot}{\Pair{v_1}{v_2} : B_1 * B_2}   {By \TypeProdI}
      \Hand   \ePf{\Sigma\arglebargle; \cdot}{(\theta_1, \theta_2) : (\Gamma_1, \Gamma_2)}   {By properties of substitution}
       \Hand       \ePf{\Sigma} {(B_1 * B_2) \subtype (A_1 * A_2)}    {By \SubProd}
      \end{llproof}

      \smallskip
    
  \item Case: $p = c(p_0)$:

      \begin{llproof}
              \vmatchPf{c(p_0)}{v}{\theta}   {Given}
              \eqPf{v}{c(v_0)}  {By inversion (\MatchCon)}
              \vmatchPf{p_0}{v_0}{\theta}  {\ditto}
              \ePf{\Sigma; \cdot} {c(v_0) : s}   {Given}
              \ePf{\Sigma; \cdot} {v_0 : A_0}   {By \Lemmaref{lem:inversion} (1)}
              \inPf{(c : A_0 \arr s_c)} {\Sigma} {\ditto}
        \Hand    \ePf{\Sigma} {s_c \subsort s}   {\ditto}
              \decolumnizePf
              \eqPf{\intersect{\Sigma}{s}{c(p_0)}}{\vec{\Bdec}}  {Given}
              \inPf{\intrsctresult{\Sigma'}{\Gamma'}{B_0}}{\intersect{\Sigma}{A_0}{p_0}}
                         {By def.\ of $\intersectx$}
              \ePf{\Sigma\argle; \cdot} {v_0 : B_0} {By i.h.}
              \ePf{\Sigma\argle; \cdot} {\theta : \Gamma'} {\ditto}
              \ePf{\Sigma\argle}{B_0 \type}  {\ditto}
              \ePf{\Sigma\argle} {B_0 \subtype A_0}  {By \ditto}
              \ePf{\Sigma\argle; \Gamma'} {v_0 : A_0} {By \TypeSub}
              \decolumnizePf
              \inPf{\intresult{\Gamma'}{s_c}}{\intersect{\Sigma}{s}{c(p_0)}}
                         {By def.\ of $\intersectx$}
              \ePf{\Sigma\argle; \cdot} {c(v_0) : s_c}   {By \TypeDataI}
      \Hand        \ePf{\Sigma\argle; \cdot} {c(v_0) : s}   {By \TypeSub}
      \Hand   \ePf{\Sigma\argle; \cdot} {\theta : \Gamma'}   {Above}
      \end{llproof}
  \qedhere
  \end{itemize}
\end{proof}

  \begin{lemma}[Match preservation]
\Label{lem:match-preservation}
 ~\\
  If
  $\sigjudg{\Sigma}$
  and
  $\Sigma \entails A \type$
  and
  $\vmatch{p}{v}{\theta}$
  and
  $\matchassn \Sigma \cdot A p
   \entails
   ms : D$
 \\
  and
  $ms \step e'$
  \\
  then $\Sigma; \cdot \entails e' : D$.
\end{lemma}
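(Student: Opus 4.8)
The plan is to induct on the derivation of $\matchassn{\Sigma}{\cdot}{A}{p} \entails ms : D$ (equivalently, on the structure of $ms$), reading the hypothesis $ms \step e'$ as the match-stepping judgment $ms \step_v e'$ of \Figureref{fig:step}. The argument needs one hypothesis not listed in the statement, namely $\Sigma; \cdot \entails v : A$; this is available at the only application site (the \StepCase case of \Theoremref{thm:preservation}, where \TypeDataE types the scrutinee $v$ at $A$ with residual pattern $\wildcard$), and I carry it through the induction. If $ms = \emptyms$, the typing derivation ends in \TypeMsEmpty, but no rule of \Figureref{fig:step} derives $\emptyms \step_v e'$, so this case is vacuous. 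Otherwise $ms = (\Match{p_1}{e_1}) \matchor ms'$ and the derivation ends in \TypeMs, giving $\Sigma \entails A \refines \tau$, $\Ursig \entails p_1 : \tau$, a ``for all tracks $\intrsctresult{\Sigma'}{\Gamma'}{B} \in \intersect{\Sigma}{A}{p \patsect p_1}$'' premise yielding $\Sigma; \Gamma' \entails e_1 : D$, and a subderivation $\matchassn{\Sigma}{\cdot}{A}{(p \patsect \patcomplement p_1)} \entails ms' : D$. The step is by \StepMatch or \StepElse.

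In the \StepMatch subcase, $\vmatch{p_1}{v}{\theta'}$ and $e' = [\theta']e_1$. Feeding the given $\vmatch{p}{v}{\theta}$ and $\vmatch{p_1}{v}{\theta'}$ into \Lemmaref{lem:pattern-intersection} produces $\vmatch{p \patsect p_1}{v}{\theta''}$. Now \Theoremref{thm:intersect}, applied with $\sigjudg{\Sigma}$, $\Sigma; \cdot \entails v : A$, $\Sigma \entails A \type$, this match, and $\intersect{\Sigma}{A}{p \patsect p_1}$, returns a track $\intrsctresult{\Sigma'}{\Gamma'}{B}$ in that set with $\Sigma; \cdot \entails \theta'' : \Gamma'$. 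Instantiating the \TypeMs ``for all'' premise at this very track gives $\Sigma; \Gamma' \entails e_1 : D$, whence \Lemmaref{lem:multiple-value-substitution} gives $\Sigma; \cdot \entails [\theta'']e_1 : D$. It then remains only to see that $[\theta'']e_1 = e' = [\theta']e_1$.

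In the \StepElse subcase, $\novmatch{p_1}{v}$ and $ms' \step_v e'$. By \Lemmaref{lem:excluded-middle-match} no substitution matches $p_1$ against $v$, so \Lemmaref{lem:pattern-choice} — whose hypotheses $\sigjudg{\Sigma}$, $\Ursig \entails p_1 : \tau$, $\Sigma; \cdot \entails v : A$, $\Sigma \entails A \refines \tau$ are all in hand — must select its second alternative, $\vmatch{\patcomplement p_1}{v}{\emptysubst}$. Combining the given $\vmatch{p}{v}{\theta}$ with this via \Lemmaref{lem:pattern-intersection} yields $\vmatch{p \patsect \patcomplement p_1}{v}{\theta_0}$. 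The induction hypothesis, applied to the subderivation $\matchassn{\Sigma}{\cdot}{A}{(p \patsect \patcomplement p_1)} \entails ms' : D$ together with this match, $ms' \step_v e'$, and the unchanged $v : A$, delivers $\Sigma; \cdot \entails e' : D$ outright.

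The main obstacle is the identification $[\theta'']e_1 = [\theta']e_1$ at the end of the \StepMatch subcase: \Theoremref{thm:intersect} hands back a witness $\theta''$ for matching $v$ against $p \patsect p_1$, while \StepMatch reduces using the witness $\theta'$ for matching $v$ against $p_1$. I expect to close the gap by arguing that $\theta'$ and $\theta''$ agree on the free variables of $e_1$, which are exactly the \xAs-variables bound by $p_1$. The leverage is that a residual pattern contains no \xAs-patterns — the complement operation of \Figureref{fig:complement-intersection} never introduces them — so $p \patsect p_1$ binds precisely the \xAs-variables of $p_1$, at the same positions, hence to the same subvalues of $v$. A short determinacy-of-matching argument for those variables then gives $[\theta'']e_1 = [\theta']e_1 = e'$; this bookkeeping is the only genuinely delicate point, the rest being mechanical assembly of the cited results.
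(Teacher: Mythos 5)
Your proof follows essentially the same route as the paper's: induction on the match-typing derivation, with the \TypeMsEmpty case dispatched as impossible, the \StepMatch subcase handled via \Lemmaref{lem:pattern-intersection}, \Theoremref{thm:intersect}, the ``for all'' premise of \TypeMs, and \Lemmaref{lem:multiple-value-substitution}, and the \StepElse subcase via \Lemmaref{lem:excluded-middle-match}, \Lemmaref{lem:pattern-choice}, \Lemmaref{lem:pattern-intersection}, and the induction hypothesis. You are in fact more careful than the paper on two points it silently elides: the hypothesis $\Sigma; \cdot \entails v : A$, which the proof uses as ``Given'' but the lemma statement omits, and the identification of the substitution returned by \Theoremref{thm:intersect} for $p \patsect p_1$ with the substitution $\theta'$ that \StepMatch actually uses to form $e' = [\theta']e_1$.
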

\begin{proof}
  By induction on the derivation of $\dots \entails ms : D$.

  \begin{itemize}
  \ProofCaseRule{\TypeMsEmpty}

      \begin{llproof}
        \ePf{\matchassn \Sigma \cdot A p}  {\emptyms : D}  {Given}
        \eqPf{\intersect{\Sigma}{A}{p}} {\emptyset}  {Subderivation}
        \vmatchPf p v \theta   {Given}
      \end{llproof}

      By \Theoremref{thm:intersect},
      there exists a track in $\intersect{\Sigma}{A}{p}$ such that certain conditions hold.
      But $\intersect{\Sigma}{A}{p} = \emptyset$, a contradiction.
      Thus, this case is impossible.

  {\small\def\zzpreDerivationProofCase{\hspace{-6ex}}\DerivationProofCase{\TypeMs}
           {
             \arrayenvbl{
               \Sigma \entails A \refines \tau
               \\
               \Ursig \entails p_1 : \tau
             }
             \hspace{-1.5ex}
             \\
             \arrayenvbl{
                 \text{for all $\intresult{\Gamma'}{B}$}
                 \\
                 \text{~~$\in \intersect{\Sigma}{A}{p \patsect p_1}$:}
                 \\
                 ~~~~\Sigma; \cdot, \Gamma'
                              \entails
                              e_1 : D
             }
             \hspace{-2.0ex}
             \\
             \matchassn \Sigma \cdot A {(p \patsect \patcomplement p_1)}
             \entails
             ms' : D
           }
           {
             \matchassn \Sigma \cdot A p
             \entails
             \underbrace{\big( (\Match{p_1}{e_1}) \matchor ms'\big)}_{ms} : D
           }
}

           \begin{llproof}
             \sigjudgPf{\Sigma}   {Given}
             \ePf{\Ursig}{p : \tau}   {Subderivation}
             \ePf{\Sigma; \cdot}{v : A}   {Given}
             \ePf{\Sigma}{A \refines \tau}   {Subderivation}
           \end{llproof}

     If the derivation of $ms \step e'$ was concluded by \StepMatch, then
     $e' = [\theta_1]e_1$.

      \begin{llproof}
        \sigjudgPf{\Sigma}   {Given}
        \ePf{\Sigma}{v : A}   {Given}
        \ePf{\Sigma}{A \type}   {Given}
        \vmatchPf{p}{v}{\theta}    {Given}
        \vmatchPf{p_1}{v}{\theta_1}    {Subderivation of \StepMatch}
        \vmatchPf{(p \patsect p_1)}{v}{\theta'}    {By \Lemmaref{lem:pattern-intersection}}
        \inPf{\intresult{\Gamma_1}{B}}{\intersect{\Sigma}{A}{p \patsect p_1}}    {By \Theoremref{thm:intersect}}
        \ePf{\Sigma; \cdot}{v : B}   {\ditto}
        \ePf{\Sigma; \cdot}{\theta' : \Gamma_1}   {\ditto}
        \decolumnizePf
        \ePf{\Sigma; \cdot, \Gamma_1} {e_1 : D}  {Subderivation}
      \Hand        \ePf{\Sigma; \cdot} {[\theta']e_1 : D}  {By \Lemmaref{lem:multiple-value-substitution}}
      \end{llproof}

     Otherwise, the derivation was concluded by \StepElse, where
     $\novmatch{p_1}{v}$.

      \begin{llproof}
          \stepPf{ms'}{e'}   {Subderivation}
          \ePf{   \matchassn \Sigma \cdot A {(p \patsect \patcomplement p_1)}}
              {ms' : D}
              {Subderivation}
           \decolumnizePf
           \vmatchPf{p}{v}{\theta}   {Given}
           \vmatchPf{\patcomplement p_1}{v}{\cdot}   {Above}
           \vmatchPf{p \patsect (\patcomplement p_1)}{v}{\theta'}    {By \Lemmaref{lem:pattern-intersection}}
      \Hand    \ePf{\Sigma; \cdot}{e' : D}    {By i.h.}
      \end{llproof}
  \qedhere  
  \end{itemize}
\end{proof}

The preservation result allows for a longer signature, to model entering the scope of
a $\xdeclare$ expression or the arms of a \textkw{match}.
We implicitly assume that throughout the given typing derivation, all types
are well-formed under the local signature: whenever
we have a subderivation of $\Sigma; \Gamma \entails e' : B$, it is the case that
$\Sigma \entails B \type$.

\thmpreservation*
\begin{proof}
  By induction on the derivation of $\Sigma; \cdot \entails e : A$.

  The rules \TypeVar, \TypeUnitI, \TypeArrI, and \TypeSectI
  can only type values,
  which cannot step (Lemma \ref{lem:values-dont-step}), contradicting
  $e \step e'$: these cases are impossible.

  In most cases, the conditions about $\Sigma'$
  follow directly from the i.h.
  
  \begin{itemize}

    \ProofCaseRule{\TypeArrE}
        We have $e_1\,e_2 \step e'$.

       \begin{itemize}
       \item If $e_1 \step e_1'$, use the i.h.\ on the first subderivation,
         \Theoremref{thm:weakening} (v) on the second subderivation,
         and apply \TypeArrE.
       \item If $e_2 \step e_2'$, use the i.h.\ on the second subderivation,
          \Theoremref{thm:weakening} (v) on the first subderivation,
          then apply \TypeArrE.
       \item If $e_1 = \Lam{x} e_0$ and $e_2$ is a value,
         use \Lemmaref{lem:inversion} (2) on $\Sigma; \cdot \entails \Lam{x} e_0 : (B \arr A)$
         to get $\Sigma; x : B \entails e_0 : A$.

         Then use \Lemmaref{lem:value-substitution} to get
         $\Sigma; \cdot \entails [e_2/x]e_0 : A$, which was to be shown
         (letting $\Sigma' = \cdot$).
       \end{itemize}

    \ProofCaseRule{\TypeProdI}
       Either $e_1 \step e_1'$ or $e_2 \step e_2'$.  Apply the i.h.\ to the appropriate
       subderivation, use the weakening lemma on the other subderivation, and
       apply \TypeProdI.
    
    \DerivationProofCase{\TypeSub}
         {\Sigma; \cdot \entails e : B
           \\
           \Sigma \entails B \subtype A
         }
         {\Sigma; \cdot \entails e : A}
         
         \begin{llproof}
           \ePf{\Sigma; \cdot}{e : B}  {Subderivation}
           \ePf{\Sigma, \Sigma'; \cdot}{e' : B}  {By i.h.}
           \ePf{\Sigma} {B \subtype A}  {Subderivation}
           \ePf{\Sigma, \Sigma'} {B \subtype A}  {By \Lemmaref{lem:weakening-low} (iii)}
         \Hand  \ePf{\Sigma, \Sigma'; \cdot}{e' : A}  {By \TypeSub}
         \end{llproof}

    \DerivationProofCase{\TypeDataI}
          {\Sigma \entails c : B \arr s
           \\
           \Sigma; \cdot \entails e_0 : B}
          {\Sigma; \cdot \entails \underbrace{c(e_0)}_{e} : \underbrace{s}_{A}}

        By inversion on $c(e_0) \step e'$ we have $e' = c(e_0')$ and $e_0 \step e_0'$.
          
        \begin{llproof}
          \ePf{\Sigma; \cdot} {e_0 : B}  {Subderivation}
          \ePf{\Sigma, \Sigma'; \cdot} {e_0' : B}  {By i.h.}
          \proofsep
          \ePf{\Sigma} {c : B \arr s}   {Subderivation}
          \ePf{\Sigma, \Sigma'} {c : B \arr s}   {By \Lemmaref{lem:weakening-low} (iv)}
          \proofsep
        \Hand  \ePf{\Sigma, \Sigma'} {c(e_0') : s}   {Subderivation}
        \end{llproof}
    
     \DerivationProofCase{\TypeDataE}
         {\Sigma; \cdot \entails e : B
          \\
          \matchassn{\Sigma}{\cdot}{B}{\wildcard}
          \entails
          ms : A
         }
         {\Sigma; \cdot \entails (\Case{e}{ms}) : A}

         \begin{llproof}
           \sigjudgPf{\Sigma}   {Given}
           \ePf{\Sigma}{B \type}   {By implicit assumption}
           \vmatchPf{\wildcard}{v}{\cdot}   {By \MatchWild}
           \ePf{  \matchassn{\Sigma}{\cdot}{B}{\wildcard}} {ms : A}  {Subderivation}
         \Hand  \ePf{\Sigma; \cdot} {e' : A}     {By \Lemmaref{lem:match-preservation}}
         \Hand  \stepPf{\Case{e}{ms}} {e'}   {By \StepCase}
         \end{llproof}

     \DerivationProofCase{\TypeDeclare}
          {
                \sigjudg{(\Sigma, \Sigma')}
            \\
                \Sigma \entails A \type
                \\
                \Sigma, \Sigma'; \cdot \entails e_0 : A
          }
          {\Sigma; \cdot \entails (\declare{\Sigma'}{e_0}) : A}
       
       We have $(\declare{\Sigma'}{e_0}) \step e'$. 
       By inversion (\StepDeclare), $e' = e_0$.

       \begin{llproof}
       \Hand  \sigjudgPf{(\Sigma, \Sigma')}   {Premise}
         \ePf{\Sigma, \Sigma'; \cdot} {e_0 : A}   {Subderivation}
       \Hand  \ePf{\Sigma, \Sigma'; \cdot} {e_0 : A}   {$e' = e_0$}
       \end{llproof}
  \qedhere
  \end{itemize}
\end{proof}

\begin{lemma}[Match progress]
\Label{lem:match-progress}
 ~\\
  If
  $\sigjudg{\Sigma}$
  and
  $\vmatch{p}{v}{\theta}$
  and
  $\matchassn \Sigma \cdot A p
   \entails
   ms : D$
  then
  $ms \step e'$.
\end{lemma}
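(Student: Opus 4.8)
The plan is to proceed by induction on the derivation of $\matchassn \Sigma \cdot A p \entails ms : D$, mirroring the structure of Match preservation (\Lemmaref{lem:match-preservation}). Only two rules can conclude this judgment, so there are exactly two cases. Throughout, I rely on the typing $\Sigma; \cdot \entails v : A$ (and $\Sigma \entails A \type$) of the matched value being available, exactly as the Match preservation proof does; this holds whenever the lemma is invoked, namely from the \TypeDataE case of the Progress theorem, where the scrutinee is a value of type $A$.

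I would first dispose of the \TypeMsEmpty case, where $ms = \emptyms$ and the premise gives $\intersect{\Sigma}{A}{p} = \emptyset$. Since neither \StepMatch nor \StepElse applies to an empty match list, I must show this case is vacuous. Combining $\Sigma; \cdot \entails v : A$, $\Sigma \entails A \type$, and the hypothesis $\vmatch{p}{v}{\theta}$, \Theoremref{thm:intersect} produces a track $\intrsctresult{\Sigma'}{\Gamma'}{B} \in \intersect{\Sigma}{A}{p}$, contradicting $\intersect{\Sigma}{A}{p} = \emptyset$. Hence this case cannot occur, and exhaustiveness (as enforced by \TypeMsEmpty) is exactly what rules it out.

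In the \TypeMs case, $ms = (\Match{p_1}{e_1}) \matchor ms'$, with premises $\Ursig \entails p_1 : \tau$ and $\Sigma \entails A \refines \tau$. The decisive step is to decide whether $v$ matches the head pattern $p_1$ by applying the Choice lemma (\Lemmaref{lem:pattern-choice}) to $v$, $p_1$, $A$, and $\tau$, whose hypotheses are all in hand. If it yields $\vmatch{p_1}{v}{\theta_1}$, then \StepMatch fires and $ms \step_v [\theta_1]e_1$. Otherwise it yields both $\novmatch{p_1}{v}$ and $\vmatch{\patcomplement p_1}{v}{\emptysubst}$; here I would merge the latter with the given $\vmatch{p}{v}{\theta}$ using the Pattern intersection lemma (\Lemmaref{lem:pattern-intersection}) to obtain $\vmatch{(p \patsect \patcomplement p_1)}{v}{\theta'}$. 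This is precisely the matching premise needed to apply the induction hypothesis to the subderivation $\matchassn \Sigma \cdot A {(p \patsect \patcomplement p_1)} \entails ms' : D$, giving some $e'$ with $ms' \step_v e'$; rule \StepElse then concludes $ms \step_v e'$.

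The step I expect to be the main obstacle is keeping the residual pattern $p$ synchronized with $v$ across the recursion: the induction goes through only because $v$ continues to match the shrunken residual $p \patsect \patcomplement p_1$ in the \StepElse branch, and re-establishing this is exactly the role of \Lemmaref{lem:pattern-intersection} applied to $\vmatch{p}{v}{\theta}$ and $\vmatch{\patcomplement p_1}{v}{\emptysubst}$. The \TypeMsEmpty case is subtler in spirit, since it is where coverage is genuinely consumed, but its argument is short once \Theoremref{thm:intersect} is in hand; it is also the one place where the typing $\Sigma; \cdot \entails v : A$ is \emph{indispensable}, which is why I flag that assumption at the outset.
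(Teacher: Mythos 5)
Your proposal is correct and follows essentially the same route as the paper's proof: induction on the match-typing derivation, ruling out \TypeMsEmpty via Theorem \ref{thm:intersect}, and in the \TypeMs case invoking Lemma \ref{lem:pattern-choice} to decide the head pattern, then either \StepMatch or (via Lemma \ref{lem:pattern-intersection} and the induction hypothesis) \StepElse. Your explicit flagging of the implicitly assumed typing $\Sigma; \cdot \entails v : A$ is apt, since the paper's own proof also uses it as ``Given'' despite its absence from the lemma statement.
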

\begin{proof}
  By induction on the derivation of $\dots \entails ms : D$.

  \begin{itemize}
  \ProofCaseRule{\TypeMsEmpty}

      By the same reasoning as in the \TypeMsEmpty case of \Lemmaref{lem:match-preservation}, this case is impossible.

  {\def\zzpreDerivationProofCase{\hspace*{-5ex}}\small \DerivationProofCase{\TypeMs}
           {
             \arrayenvbl{
               \Sigma \entails A \refines \tau
               \\
               \Ursig \entails p_1 : \tau
             }
             \hspace{-2.0ex}
             \\
             \arrayenvbl{
                 \text{for all $\intresult{\Gamma'}{B}$}
                 \\ \text{$~~\in \intersect{\Sigma}{A}{p \patsect p_1}$:}
                 \\
                 ~~~~\Sigma; \cdot, \Gamma'
                              \entails
                              e_1 : D
             }
             \hspace{-2.0ex}
             \\
             \matchassn \Sigma \cdot A {(p \patsect \patcomplement p_1)}
             \entails
             ms' : D
           }
           {
             \matchassn \Sigma \cdot A p
             \entails
             \underbrace{\big( (\Match{p_1}{e_1}) \matchor ms'\big)}_{ms} : D
           }
}

           \begin{llproof}
             \sigjudgPf{\Sigma}   {Given}
             \ePf{\Ursig}{p : \tau}   {Subderivation}
             \ePf{\Sigma; \cdot}{v : A}   {Given}
             \ePf{\Sigma}{A \refines \tau}   {Subderivation}
           \end{llproof}

     By \Lemmaref{lem:pattern-choice}, either (1) $\vmatch{p_1}{v}{\theta_1}$,
     or (2) $\novmatch{p_1}{v}$ and $\vmatch{\patcomplement p_1}{v}{\cdot}$.
     
     For case (1), apply \StepMatch.

     For case (2), show $\vmatch{p \patsect \patcomplement p_1}{v}{\theta'}$
     as in the proof of \Lemmaref{lem:match-preservation},
     apply the i.h.\ to $ms'$,
     then apply \StepElse.
  \qedhere  
  \end{itemize}
\end{proof}

\thmprogress*
\begin{proof}
  By induction on the given derivation.

  \begin{itemize}
      \ProofCaseRule{\TypeVar}
         Impossible, because $\Gamma$ is empty.

      \ProofCaseRule{\TypeSub}
         Follows by i.h.\ on the typing subderivation.

      \ProofCasesRules{\TypeArrI, \TypeSectI, %
        \TypeUnitI}
         The rule requires that $e$ is a value.

      \ProofCaseRule{\TypeArrE}  We have $e = e_1\,e_2$.
         By i.h.\ on the subderivation typing $e_1$, either $e_1$ steps
         or $e_1$ is a value:

         \begin{itemize}
         \item If $e_1$ steps, the result follows by \StepContext.
         \item If $e_1$ is a value, then
           by i.h.\ on the subderivation typing $e_2$, either
           $e_2$ steps or $e_2$ is a value.

           In the former case, the result follows by \StepContext.

           In the latter case:

           \begin{llproof}
             \ePf{\Sigma; \cdot}{e_1 : A' \arr A}   {Subderivation}
             \Pf{}{}{e_1\text{~is a value}}   {Above}
             \eqPf{e_1}{(\Lam{x} e_0)}  {By \Lemmaref{lem:inversion} (2)}
           \Hand  \stepPf{(\Lam{x} e_0)\,e_2}{[e_2/x]e_0}  {By \StepBeta}
           \end{llproof}
         \end{itemize}

      \ProofCaseRule{\TypeProdI}  We have $e = \Pair{e_1}{e_2}$.

        By the i.h.\ on the subderivation typing $e_1$, either $e_1$ steps
        or $e_1$ is a value.

        In the former case, the result follows by \StepContext.
        
        In the latter case, use the i.h.\ on the subderivation typing $e_2$.
        If $e_2$ steps, apply \StepPairR.  Otherwise, $\Pair{e_1}{e_2}$ is a value.

      \ProofCaseRule{\TypeDataI}   We have $e = c(e_0)$.

        By the i.h.\ on the subderivation typing $e_0$,
        either $e_0$ steps or $e_0$ is a value.

        In the former case, the result follows by \StepContext.

        In the latter case, $c(e_0)$ is a value.

      \ProofCaseRule{\TypeDataE}  
      
           We have $e = \Case{e_0}{ms}$.

           By the i.h.\ on the subderivation typing $e_0$,
           either $e_0$ steps or $e_0$ is a value.

           In the former case, the result follows by \StepContext.

           In the latter case:

          \begin{llproof}
              \sigjudgPf{\Sigma}  {Given}
              \vmatchPf{\wildcard}{e_0}{\cdot}   {By \MatchWild}
              \ePf{\matchassn \Sigma \cdot B \wildcard} {ms : A}   {Subderivation}
         \Hand  \stepPf{\Case{e_0}{ms}}{e'}  {By \Lemmaref{lem:match-progress}}
          \end{llproof}

      \ProofCaseRule{\TypeDeclare}
          The result follows by \StepDeclare.
  \qedhere
  \end{itemize}
\end{proof}

\subsection{Bidirectional typing: soundness and completeness}

\thmbidirsoundness*

\begin{proof}
  By induction on the given derivation.

  In the \SynAnno case, $e = (e_0 : As)$.  Apply the i.h.\ to get $\Gamma \entails |e_0| : A$.
  Since $|(e_0 : As)| = |e_0|$, we have  $\Gamma \entails |(e_0 : As)| : A$,
  which was to be shown.

  In all other cases, apply the i.h.\ to each subderivation and apply the type
  assignment rule corresponding to the bidirectional rule (\TypeVar for \SynVar,
  \TypeSub for \ChkSub, and so on).
\end{proof}

\thmannotatability*
\begin{proof}
  By induction on the given derivation.

  For most cases:  Use the i.h.\ on each subderivation and apply the corresponding bidirectional rule,
  using part (1) of the i.h.\ for checking premises and part (2) for synthesizing premises.
  If the conclusion of the corresponding bidirectional rule is a checking judgment,
  then part (1) has been shown; part (2) follows by adding an annotation and using \SynAnno.
  Otherwise, the conclusion of the corresponding rule is a synthesis judgment;
  part (2) has been shown, and applying \ChkSub gives part (1).

  For the \TypeSectI case of part (1), we have $A = (A_1 \sectty A_2)$.
  The i.h.\ on the first subderivation
  yields $\Gamma \entails e^1_{\chk} \chk A_1$,
  and on the second subderivation it yields $\Gamma \entails e^2_{\chk} \chk A_2$.
  Let $e^{12}_{\chk}$ be $e$ with all annotations from $e^1_{\chk}$ \emph{and} $e^2_{\chk}$.
  If $e^1_{\chk}$ and $e^2_{\chk}$ have different annotations on the same subterm,
  then $e^{12}_{\chk}$ has their union;
  for example,
  if $e^1_{\chk} = (e_0 : B_1)$ and $e^2_{\chk} = (e_0 : (B_2, B_3))$
  then
  let $e^{12}_{\chk} = (e_0 : (B_1, B_2, B_3))$.
  Applying \ChkSectI gives $\Gamma \entails e^{12}_{\chk} \chk (A_1 \sectty A_2)$,
  which was to be shown.
  
  For the \TypeSectI case of part (2), follow part (1), then add an annotation, yielding
  \[
  \Gamma \entails \big(e^{12}_{\chk} : (A_1 \sectty A_2)\big) \syn (A_1 \sectty A_2)
  \qedhere
  \]
\end{proof}

\subsection{Bidirectional typing: decidability}

\begin{lemma}[Decidability]
\label{lem:decidability}
  Given instantiations of the meta-variables, the following judgments are
  decidable:

  \begin{enumerate}[(1)]
  \item 
  the refinement judgment $\Sigma \entails A \refines \tau$
  \item
  the well-formedness judgments $\typejudg \Sigma A$
  and $\contypejudg{\Sigma}{c : C}$
  \item
  the subsorting judgment $\Sigma \entails s_1 \subsort s_2$
  \item %
  the constructor typing judgment $\Sigma \entails c : C$
  \item
  the subtyping judgment $\Sigma \entails A \subtype B$
  \item %
  the safe extension judgments 
  $\safeextat{\Sigma}{S\sortblock{\blk}}{C}{c}{t}$
  and
  $\blkjudg{\Sigma}{S}{\blk}{\blkelem}$
  \item
  the signature well-formedness judgment $\sigjudg{\Sigma}$
  \item
  the pattern type judgment $\Ursig \entails p : \tau$
  \end{enumerate}

  Moreover, the $\intersectx$ function is computable.
\end{lemma}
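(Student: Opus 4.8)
The plan is to prove each clause by fixing a dependency ordering among the judgments and then establishing decidability of each one assuming decidability of those it references. Reading off the rules, that order is: subsorting (3) and the refinement judgment (1) first (neither depends on the other), then subtyping (5), type well-formedness (2), and constructor typing (4), then the safe-extension and block judgments (6), then signature well-formedness (7); the pattern-type judgment (8) and computability of $\intersectx$ are essentially independent and handled last.

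First I would dispatch the structurally recursive judgments. For $\Sigma \entails A \refines \tau$, for subtyping $\Sigma \entails A \subtype B$, for type well-formedness $\typejudg \Sigma A$, and for pattern typing $\Ursig \entails p : \tau$, every rule other than those performing a finite lookup (such as \RefinesData, \SubSort, and \TypewfSort) strictly decreases the total size of the type(s) or pattern appearing in the conclusion, and each rule branches only finitely---the only genuine choice, $k \in \{1,2\}$ in \SubSectL{k}, is finite. Hence a decision procedure recurses on structure, attempting each applicable rule, and terminates by induction on the relevant size measure. Constructor typing $\Sigma \entails c : C$ via \ConArr, and constructor-type well-formedness $\contypejudg{\Sigma}{c : C}$ via \ContypeArr, add only finite lookups in $\Sigma$ and $\Ursig$ together with subsorting, subtyping, and type-well-formedness checks, all already decidable.

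The main obstacle is subsorting (3), which is the one judgment that is \emph{not} syntax-directed: rule \SubsortTrans guesses an intermediate sort, so naive backward search need not terminate. The fix is the observation already used in the proof of \Theoremref{thm:non-adjacent-preservation}: a derivation of $\Sigma \entails s_1 \subsort s_2$ is exactly a path in the finite directed graph whose vertices are $\dom{\Sigma}$ and whose edges are the subsorting declarations occurring in $\Sigma$. Since this graph is finite, I would decide $\Sigma \entails s_1 \subsort s_2$ by computing the reflexive--transitive closure of the declared edges restricted to $\dom{\Sigma}$ and testing membership; reachability in a finite graph is decidable, so subsorting is decidable. This simultaneously settles the $s_1 \subsort s_2$ premises occurring inside \SubSort, \ConArr, \SafeConAt, \BlockCon, and \SigBlock.

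Finally, the remaining judgments involve only finite quantification over already-decidable predicates. Safe extension $\safeextat{\Sigma}{S\sortblock{\blk}}{C}{c}{t}$ (rule \SafeConAt) checks the finitely many constructor declarations $(c : A' \arr s') \in \Sigma$ together with decidable subsorting and subtyping; the block judgment $\blkjudg{\Sigma}{S}{\blk}{\blkelem}$ (rules \BlockSubsort and \BlockCon) adds a quantification over the finite set $\dom{\Sigma}$; and signature well-formedness $\sigjudg{\Sigma}$ (rule \SigBlock) ranges over the finite sets $\dom{\Sigma}$ and $\blk$, checking the subsorting ``iff'' as two decidable subsorting queries and checking each block element. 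Computability of $\intersectx$ then follows by structural recursion on its pattern argument: the wildcard, empty, as-, and or-clauses are immediate, the pair clause recurses on the finitely many component results, and the $c(p_0)$ clause iterates over the finitely many declarations $(c : A_c \arr s_c) \in \Sigma$, filtering by the decidable test $\Sigma \entails s_c \subsort s$ and recursing on $p_0$. Assembling these pieces in the dependency order above yields the lemma.
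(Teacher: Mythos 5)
Your proposal is correct and follows essentially the same route as the paper's proof: structural descent on types/patterns for the syntax-directed judgments, transitive closure (graph reachability) for subsorting, finite quantification over already-decidable premises for the safe-extension and block judgments, and structural recursion on the pattern for $\intersectx$. The only point worth making explicit, which the paper notes and you gloss slightly, is that \SigBlock's first premise recurses on a strictly shorter signature, which is what grounds the decidability of $\sigjudg{\Sigma}$ itself.
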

\begin{proof}
  For (1) and (2), the type gets smaller in every nontrivial premise.

  For subsorting (3), construct the transitive closure.

  The single rule for (4) depends only on (3).

  For subtyping (5), at least one type gets smaller in each premise.

  The rules for (6) are not genuinely inductive, using only previous judgments.

  For (7), the signature gets smaller in the first premise.

  For (8), the pattern gets smaller in each premise.

  In the definition of $\intersectx$, the pattern gets smaller in each recursion.
\end{proof}

\begin{restatable}[Decidability]{theorem}{thmdecidability}
\label{thm:decidability}
~\\
  Given a signature $\Sigma$, context $\Gamma$ and expression $e$,
  the set of $A$ such that
  $\Sigma; \Gamma \entails e \syn A$ is decidable;
  and, given also a type $B$, 
  the judgment $\Sigma; \Gamma \entails e \chk B$ is decidable.
  
  Moreover, given a signature $\Sigma$, context $\Gamma$, pattern $p$,
  type $A$, matches $ms$ and type $B$, 
  the judgment $
             \matchassn{\Sigma}{\Gamma}{A}{\wildcard}
             \entails
             ms \chk B
             $ is decidable.
\end{restatable}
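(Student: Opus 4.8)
The plan is to derive the theorem directly from \Lemmaref{lem:decidability}, which already supplies decidability of every auxiliary judgment (subtyping, subsorting, constructor typing, signature and type well-formedness, pattern typing) and computability of $\intersectx$. What remains is to organize the bidirectional rules of \Figureref{fig:bidir} into a terminating search procedure, proceeding by a mutual induction whose principal measure is the size of the subject term (an expression $e$ for the $\syn$/$\chk$ judgments, matches $ms$ for match typing). The rules are nearly syntax-directed, so most premises recurse on a strictly smaller subterm; the sources of non-syntax-directedness are \ChkSub, \SynSectE{k}, and \ChkSectI, together with the finite backtracking over multiple constructor typings in \ChkDataI and over synthesized types in \ChkDataE. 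I would break the two apparent loops with a secondary argument on the size of the type.

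The first and most delicate step is to show that synthesis computes a \emph{finite} set: for given $\Sigma$, $\Gamma$, and $e$, the set $\{A \mid \Sigma;\Gamma \entails e \syn A\}$ is finite and computable. The obstacle is \SynSectE{k}, whose premise synthesizes a type for the \emph{same} expression $e$, so the recursion is not structural on $e$. My resolution is to treat \SynSectE{k} not as a descent but as a \emph{saturation} step: the genuinely syntax-directed rules \SynVar, \SynAnno, and \SynArrE --- the latter two recursing on strictly smaller subexpressions, and both invoking the (mutually available) checking judgment only on smaller subterms --- produce a finite base set of types, and \SynSectE{k} merely closes this set under the projection $A_1 \sectty A_2 \mapsto A_k$. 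Since each projection strictly decreases type size and every member of the closure is a $\sectty$-conjunct of one of the finitely many base types, the closure is finite and is reached by a terminating fixpoint iteration. This turns synthesis into a finite-set computation that depends only on the induction hypothesis for strictly smaller expressions.

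With synthesis settled, checking $e \chk B$ is decidable by a nested induction on the pair (size of $e$, size of $B$). The procedure tries every applicable rule: \ChkSub reduces to synthesizing the finite type set for $e$ (available from the previous step) and testing $A \subtype B$ (decidable by \Lemmaref{lem:decidability}); \ChkSectI, when $B$ is an intersection, recurses on the \emph{same} $e$ against the strictly smaller components $A_1$ and $A_2$, so the type-size tiebreak decreases; and the shape-directed rules \ChkArrI, \ChkProdI, \ChkUnitI, \ChkDataI, \ChkDataE, and \ChkDeclare recurse on strictly smaller subexpressions. \ChkDataI branches over the finitely many constructor typings $c : A \arr s$ derivable in $\Sigma$ (finite by \Lemmaref{lem:decidability}), \ChkDataE iterates over the finite synthesis set of the scrutinee, and \ChkDeclare discharges its signature- and type-well-formedness premises via the lemma before checking the body under $(\Sigma, \Sigma')$. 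Each recursive call strictly decreases the measure (type size under \ChkSectI, expression size otherwise, with synthesis acting as an already-terminating oracle under \ChkSub), so the search terminates.

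Finally, match typing is decided by induction on $ms$ (generalizing the residual pattern $p$, then specializing to $\wildcard$), interleaved with the expression judgments through the shared measure. \TypeMsEmpty checks $\intersect{\Sigma}{A}{p} = \emptyset$, decidable since $\intersectx$ is computable and returns a finite set. \TypeMs is decidable because its refinement and pattern-typing premises are decidable by the lemma, $\intersect{\Sigma}{A}{p \patsect p_1}$ yields finitely many tracks, each triggering a check of an arm body $e_1$ that is a strict subexpression of the enclosing $\Case{e}{ms}$ and hence below the measure, while the last premise recurses on the strictly smaller $ms$ with residual $p \patsect \patcomplement p_1$. I expect the finite-closure argument for \SynSectE{k} to be the crux; once synthesis is pinned down as a finite computation, the rest follows from the lexicographic measure and the decidability of the supporting judgments.
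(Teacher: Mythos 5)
Your proposal is correct and follows essentially the same route as the paper's proof: auxiliary judgments and $\intersectx$ are discharged by the decidability lemma, termination is governed by a lexicographic measure on expression size, then synthesis-before-checking, then type size (for \ChkSectI), and \SynSectE{k} is handled by observing that it only enumerates conjuncts of the finitely many types produced by the syntax-directed synthesis rules. Your explicit ``saturation/finite closure under projection'' phrasing is just a more detailed rendering of the paper's terser observation that only finitely many types can be so enumerated.
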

\begin{proof}
  The auxiliary judgments, such as subtyping (\ChkSub) and well-formedness (\ChkDeclare),
  are decidable by \Lemmaref{lem:decidability}.

  In each premise of each bidirectional typing rule, either

  \begin{enumerate}
  \item the expression gets smaller
    (\SynAnno, \ChkArrI, \SynArrE,
    \ChkProdI, \ChkDataI, \ChkDataE, \ChkDeclare), or
    
  \item the expression is the same; then, either
    
    \begin{enumerate}
    \item the conclusion is checking, the premise is checking, and the type gets smaller
      (\ChkSectI), or

    \item the conclusion is checking, and the premise is synthesizing (\ChkSub), or

    \item we are using \SynSectE{k}.
    \end{enumerate}
  \end{enumerate}

  That is, we can order the problems lexicographically, considering the expression first;
  then, we consider the synthesis problem smaller than the checking problem.
  For the rules \SynSectE{1} and \SynSectE{2},
  observe that, assuming the premise has been derived,
  each rule enumerates one part of the intersection; since type expressions
  are finite, only finitely many types can be so enumerated.
\end{proof}

\Label{apx:PF-LAST}  %

\end{document}